\renewcommand\baselinestretch{1.5}
\newcommand {\ctn}{\citet} 
\newcommand {\ctp}{\citep}       
\newcommand{\bPhi}{\boldsymbol{\Phi}}
\newcommand{\bSigma}{\boldsymbol{\Sigma}}
\newcommand{\bPsi}{\boldsymbol{\Psi}}
\newcommand{\bOmega}{\boldsymbol{\Omega}}
\newcommand{\bD}{\boldsymbol{D}}
\newcommand{\bV}{\boldsymbol{V}}
\newcommand{\bG}{\boldsymbol{G}}
\newcommand{\bI}{\boldsymbol{I}}
\newcommand{\bE}{\boldsymbol{E}}
\newcommand{\bR}{\boldsymbol{R}}
\newcommand{\bx}{\boldsymbol{x}}
\newcommand{\bX}{\boldsymbol{X}}
\newcommand{\bY}{\boldsymbol{Y}}
\newcommand{\bZ}{\boldsymbol{Z}}
\newcommand{\bz}{\boldsymbol{z}}
\newcommand{\bL}{\boldsymbol{L}}
\numberwithin{equation}{section}
\theoremstyle{plain}
\newtheorem{theorem}{Theorem}[section]
\newtheorem{definition}{Definition}[section]
\newtheorem{remark}{Remark}[section]
\newtheorem{proposition}{Proposition}[section]
\begin{document}
\renewcommand\baselinestretch{1.5}

\begin{frontmatter}
\title{Correlation between Multivariate Datasets, from Inter-Graph Distance
  computed using Graphical Models Learnt With Uncertainties}
\runtitle{With-uncertainty Learning of Graph $\&$ Correlation Structure}

\begin{aug}
\author{
{\fnms{Kangrui} \snm{Wang}\thanksref{t2,m2}\ead[label=e2]{rorschach.kangrui@gmail.com}}
\and
{\fnms{Dalia} \snm{Chakrabarty}\thanksref{t1,m1}\ead[label=e1]{d.chakrabarty@lboro.ac.uk}}
},
\thankstext{t2}{Postdoctoral Research Associate, Alan Turing Institute} 
\thankstext{t1}{Lecturer in Statistics, Department of Mathematical Sciences,
  Loughborough University}

\runauthor{Wang $\&$ Chakrabarty}

\affiliation{Alan Turing Institute, Loughborough University}

\address{\thanksmark{m2} Alan Turing Institute\\
British Library, 96 Euston Road, \\
London NW1 2DB,\\
U.K.\\
\printead*{e2}
}

\address{\thanksmark{m1} Department of Mathematical Sciences\\
Loughborough University\\
Loughborough LE11 3TU,
U.K.\\
\printead*{e1}
}

\end{aug}

\begin{abstract} {We present a method for simultaneous Bayesian learning of
    the correlation matrix and graphical model of a multivariate dataset,
    along with uncertainties in each, to subsequently compute distance between
    the learnt graphical models of a pair of datasets, using a new metric that
    approximates an uncertainty-normalised Hellinger distance between the
    posterior probabilities of the graphical models given the respective
    dataset; correlation between the pair of datasets is then computed as a
    corresponding affinity measure. We achieve a closed-form likelihood of the
    between-columns correlation matrix by marginalising over the between-row
    matrices. This between-columns correlation is updated first, given the
    data, and the graph is then updated, given the partial correlation matrix
    that is computed given the updated correlation, allowing for learning of
    the 95$\%$ Highest Probability Density credible regions of the correlation
    matrix and graphical model of the data. Difference made to the learnt
    graphical model, by acknowledgement of measurement noise, is demonstrated
    on a small simulated dataset, while the large human disease-symptom
    network--with $>8,000$ nodes--is learnt using real data. Data on
    vino-chemical attributes of Portuguese red and white wine samples are
    employed to learn with-uncertainty graphical model of each dataset, and
    subsequently, the distance between these learnt graphical models. }
\end{abstract}


\begin{keyword}
\kwd{Graphical models}
\kwd{Random graphs}
\kwd{Inter-graph distance}
\kwd{Hellinger distance}
\kwd{Metropolis-within-Gibbs}
\kwd{Human disease-symptom network}
\end{keyword}

\end{frontmatter}

\renewcommand\baselinestretch{1.0}
{
\section{Introduction}
\label{sec:intro}
\noindent
Graphical models of complex, multivariate datasets, manifest intuitive
illustrations of the correlation structures of the data, and are of interest
in different disciplines \ctp{whittaker, large_net, plos2007, carvalhowest,
  dipankar}. Much work has been undertaken to study the correlation structure
of a multivariate dataset comprising multiple measured values of a
vector-valued observable, by modelling the joint probability distribution of a
set of such observable values, as matrix-normal \ctp{multigraph, west_2016, wangwest}. In this paper, we simultaneously learn the partial correlation
structure and graphical model of a multivariate dataset, while making
inference on uncertainties of each, and acknowledge measurement errors in our
learning--with the ultimate aim of computing the distance between (posterior
probability distributions of) the learnt pair of graphical models of
respective datasets. Such distance informs us about the possible independence of the
datasets, generated under different environmental conditions. To this effect,
we undertake inference with Metropolis-within-Gibbs-based Bayesian inference
\ctp{Robert04}, on the correlation matrix given the data, and on the graph
given the updated correlation.

Objective and comprehensive uncertainties on the Bayesianly learnt graphical
model of given multivariate data, are sparsely available in the
literature. Such uncertainties can potentially be very useful in informing us
about the range of models that describe the partial correlation structure of
the data at hand. \ctn{madiganraftery} discuss a method for computing model
uncertainties by averaging over a set of identified models, and they advance
ways for the computation of the posterior model probabilities, by taking
advantage of the graphical structure, for two classes of considered models,
namely, the recursive causal models \ctp{carlincausal} and the decomposable
loglinear models \ctp{goodman}. This method allows them to select the ``best
models'', while accounting for model uncertainty. Our method on the other
hand, provides a direct and well-defined way of learning uncertainties of the
graphical model of a given multivariate data. At every update of our learning
of the graphical structure of the data, the graph is updated; graphs thus
learnt, if identified to lie within an identified range of values of the
posterior probability of the graph, comprise the uncertainty-included
graphical model of the data (Section~\ref{sec:95}). In addition, our method permits incorporation of
measurement errors into the learning of the graphical model, and permits fast
learning of large networks (Section~\ref{sec:disease}).

However, we wish to extend such learning to higher-dimensional
data, for example, to a dataset that is cuboidally-shaped, given that it
comprises multiple measurements of a matrix-valued observable.  \ctn{hoff,
  xu2012}; Wang $\&$ Chakrabarty ({\url{https://arxiv.org/abs/1803.04582}}), advance methods to
learn the correlation in high-dimensional data in general. For a
rectangularly-shaped multivariate dataset, the pioneering work by
\ctn{wangwest} allows for the learning of both the between-rows and
between-columns covariance matrices, and therefore, of two graphical
models. \ctn{multigraph} extend this approach to high-dimensional data.
However, a high-dimensional graph showing the correlation structure amongst
the multiple components of a general hypercuboidally-shaped dataset, is not
easy to visualise or interpret. Instead, in this paper, we treat the
high-dimensional data as built of correlated rectangularly-shaped slices,
given each of which, the between-columns (partial) correlation structure and
graphical model are Bayesianly learnt, along with uncertainties, subsequent to
our closed-form marginalisation over all between-rows correlation matrices
(in Section~\ref{sec:corr}, unlike in the work of \ctn{wangwest}). By invoking the uncertainties learnt
in the graphical models, we advance a new inter-graph distance
metric (Section~\ref{sec:hell}), based on the Hellinger distance \ctp{matu,juq} between the posterior probability densities of the pair of
graphical models that are learnt given the respective pair of such
rectangularly-shaped data slices. We use a proposed affinity measure to
infer on the correlation between the datasets (Section~\ref{sec:itisdist}). For example, by computing the
pairwise inter-graph distance between posterior probability densities of each
learnt pair of graphs, we can avoid the inadequacy of trying to capture
spatial correlations amongst sets of multivariate observations, by ``computing
partial correlation coefficients and by specifying and fitting more complex
graphical models'', as was noted by \ctn{guiness}. In fact, our method offers
the inter-graph distance for two differently sized datasets. 

Importantly, we will demonstrate below that it is the learning of
uncertainties in graphical models, that allows for the pursuit of the
inter-graph distance.
 
Our learnt graphical model of the given data, comprises a set of
random inhomogeneous graphs \ctp{book_graph} that lie within the
credible regions that we define, where each such graph is a
generalisation of a Binomial graph.
We do not make inference on the graph
(writing its posterior) clique-by-clique, and neither are we reliant
on the closed-form nature of the posteriors to sample from. In other
words, we do not need to invoke conjugacy to affect our
learning--either of the partial correlation structure of the data or
of the graphical model. Often, in Bayesian learning of Gaussian
undirected graphs, a Hyper-Inverse-Wishart prior is typically
imposed on the covariance matrix of the data, as this then allows for
a Hyper-Inverse-Wishart posterior of the covariance, which in turn
implies that the marginal posterior of of any clique is
Inverse-Wishart--a known, closed-form density \ctp{dawidlauritzen93,
  lauritzen96}.
Inference is then rendered easier, than when posterior sampling from a
non-closed form posterior needs to be undertaken, using numerical
techniques such as MCMC. Now, if the graph is not decomposable, and a
Hyper-Inverse-Wishart prior is placed on the covariance matrix, the
resulting Hyper-Inverse-Wishart joint posterior density that can be
factorised into a set of Inverse-Wishart densities, cannot be
identified as the clique marginals. Expressed differently, the clique
marginals are not closed-form when the graph is not decomposable.
However, this is not a worry in our learning, i.e. we can undertake
our learning irrespective of the validity of decomposability.

This paper is organised as follows. The following section deliberates upon the
methodology that we advance, including the closed-form likelihood of the
between-column correlation matrix of the data at hand, and definition of the
uncertainties on the learnt graphical model. The method of computing the
inter-graph distance that invokes such learnt uncertainties, is then discussed
in Section~\ref{sec:hell}. Section~\ref{sec:real} presents the emiprical
illustration on 2 real datasets, with the distance between the learnt,
with-uncertainty graphical models of these 2 data, discussed in
Section~\ref{sec:real_hell}.  In Section~\ref{sec:disease}, we learn the
graphical model of a real, highly multivariate, dataset, namely the human
disease-phenotype dataset, and compare our results with those reported earlier
\ctp{hsg}. The paper is rounded up with a section that
summarises the main findings and the conclusions.
The attached Supplementary Materials elaborate on certain aspects of our
work. This includes comparison of results obtained by using our
method with existing and independently obtained results, relevant to a pair of
real datasets that we illustrate our methodology on in this paper (Sections~4
and 6 of the Supplementary Material), and importantly, detailed model checking
is discussed in Section~2 of the Supplementary Material.



\section{Learning correlation matrix and graphical model given data, using
  Metropolis-within-Gibbs}
\label{sec:corr}
\noindent
Let $\bX\in{\cal X}\subseteq{\mathbb R}^p$ be a $p$-dimensional observed vector, with $\bX=(X_1,\ldots,X_p)^T$. Let there be $n$ measurements of $X_j$, $j=1,\ldots,p$, so that the $n\times p$-dimensional matrix ${\bf D}=[x_{ij}]_{i=1;j=1}^{n;p}$ is the data that comprises $n$ measurements of the $p$-dimensional observable $\bX$. Let the $i$-th realisation of $\bX$ be $\bx_i$, $i=1,\ldots,n$. 
We standardise the variable $X_j$ ($j=1,\ldots,p$) by its empirical mean
and standard deviation, into $Z_j$, s.t. the standardised version ${\bf D}_S$
of data ${\bf D}$ comprises $n$ measurements of the $p$-dimensional vector
$\bZ=(Z_1,\ldots,Z_p)^T$. Thus, 
$z_{ij}=\displaystyle{\frac{x_{ij}-\bar{x}_j}{\Upsilon_j}}$, where
$\bar{x}_j:=\displaystyle{\frac{\sum\limits_{i=1}^n x_{ij}}{n}}$ and
$\Upsilon_j^2 := \displaystyle{\frac{\sum\limits_{i=1}^n x_{ij}^2}{n}
  -\left(\frac{\sum\limits_{i=1}^n x_{ij}}{n}\right)^2}$. 
The $n\times p$-dimensional matrix ${\bf D}_S=[z_{ij}]$.
Then we model the joint probability of a set of 
measurements of $\bZ$, (such as the set of $n$ that comprises the
standardised data ${\bf D}_S$), to be matrix-normal with zero-mean, i.e.
$$\{\bz_1,\ldots,\bz_n\}\sim{\cal MN}({\bf 0}, \bSigma_R^{(S)}, \bSigma_C^{(S)}),$$
i.e. the likelihood of the covariance matrices $\bSigma_R^{(S)}$ and $\bSigma_C^{(S)}$, given data ${\bf D}_S$, is matrix-normal: 
\begin{equation}
\ell(\bSigma_R^{(S)}, \bSigma_C^{(S)}\vert {\bf D}_S) =
\displaystyle{
\frac{1}{(2\pi)^{\frac{np}{2}}
|\bSigma_C^{(S)}|^{\frac{p}{2}}
|\bSigma_R^{(S)}|^{\frac{n}{2}}}\times
\exp\left[
-\frac{1}{2}tr
\left\{(\bSigma_R^{(S)})^{-1} {\bf D}_S (\bSigma_C^{(S)})^{-1} ({\bf D}_S)^T\right\}
\right]
},
\label{eqn:matrix_normal_density}
\end{equation}
Here $\bSigma_R^{(S)}$ generates the covariance between the standardised variables $\bZ_i$ and $\bZ_{i^{/}}$, $i,i^{/}=1,\ldots,n$, (while 
$\bSigma_R$ generates the covariance between $\bX_i$ and $\bX_{i^{/}}$). In other words, $\bSigma_R^{(S)}$ generates the correlation between rows of the standardised data set ${\bf D}_S$. Similarly, $\bSigma_C^{(S)}$ generates the correlation between columns of ${\bf D}_S$. 


\begin{theorem}
\label{theo:marg}
The joint posterior probability density of the correlation matrices $\bSigma_C^{(S)}, \bSigma_R^{(S)}$, given the standardised data ${\bf D}_S$ is
$$
\left[\bSigma_C^{(S)}, \bSigma_R^{(S)}\vert{\bf D}_S\right] \propto \ell(\bSigma_R^{(S)}, \bSigma_C^{(S)}\vert {\bf D}_S) \left[\bSigma_C^{(S)}, \bSigma_R^{(S)}\right],
$$
where $\ell(\bSigma_R^{(S)}, \bSigma_C^{(S)}\vert {\bf D}_S)$ is the likelihood of $\bSigma_R^{(S)}, \bSigma_C^{(S)}$ given data ${\bf D}_S$. 
This can be marginalised over the $n\times n$-dimensional between-rows' correlation $\bSigma_R^{(S)}$, to yield
$$
[\bSigma_C^{(S)}\vert{\bf D}_S]\propto
\displaystyle{\frac{1}{c\left(\bSigma_C^{(S)}\right){\Big{\vert}} \bSigma_C^{(S)}{\Big{\vert}}^{p/2} {\Big{\vert}}{\bf D}_S (\bSigma_C^{(S)})^{-1} ({\bf D}_S)^T {\Big{\vert}}^{\frac{n+1}{2}}}},
$$
where the prior on $\bSigma_C^{(S)}$ is uniform; prior on $\bSigma_R^{(S)}$ is
the non-informative $\pi_0(\bSigma_R^{(S)})={\Big{\vert}}
\bSigma_R^{(S)}{\Big{\vert}}^{\alpha}$, $\alpha=\displaystyle{-\frac{n}{2}
  -1}$, and $\bSigma_C^{(S)}$ is assumed invertible. Here, $c\left(\bSigma_C^{(S)}\right)$ is a function of $\bSigma_C^{(S)}$ that normalises the likelihood. 
\end{theorem}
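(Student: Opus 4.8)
The plan is to form the joint posterior as the product of the matrix-normal likelihood~(\ref{eqn:matrix_normal_density}) with the two stated priors, and then to integrate out the between-rows matrix $\bSigma_R^{(S)}$ by recognising the resulting $\bSigma_R^{(S)}$-dependent factor as the kernel of an Inverse-Wishart density on the cone of $n\times n$ symmetric positive-definite matrices. First I would write $[\bSigma_C^{(S)},\bSigma_R^{(S)}\vert{\bf D}_S]\propto \ell(\bSigma_R^{(S)},\bSigma_C^{(S)}\vert{\bf D}_S)\,\pi(\bSigma_C^{(S)})\,\pi_0(\bSigma_R^{(S)})$, substitute the uniform prior $\pi(\bSigma_C^{(S)})\equiv\mathrm{const}$ together with $\pi_0(\bSigma_R^{(S)})=\vert\bSigma_R^{(S)}\vert^{\alpha}$, and then obtain $[\bSigma_C^{(S)}\vert{\bf D}_S]$ by integrating the joint posterior over all admissible $\bSigma_R^{(S)}$.

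The decisive computation is to collect every factor depending on $\bSigma_R^{(S)}$. Setting $\bA:={\bf D}_S(\bSigma_C^{(S)})^{-1}({\bf D}_S)^T$, the exponential in~(\ref{eqn:matrix_normal_density}) reads $\exp[-\frac{1}{2}tr\{(\bSigma_R^{(S)})^{-1}\bA\}]$, the likelihood normalisation supplies $\vert\bSigma_R^{(S)}\vert^{-n/2}$, and the prior supplies $\vert\bSigma_R^{(S)}\vert^{\alpha}$. With the prescribed $\alpha=-n/2-1$ these powers combine to $\vert\bSigma_R^{(S)}\vert^{-n-1}$, and since $-n-1=-(\nu+n+1)/2$ for $\nu=n+1$, the integrand is exactly the kernel of an Inverse-Wishart law with scale matrix $\bA$ and $\nu=n+1$ degrees of freedom. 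Integrating over the positive-definite cone returns the reciprocal of that law's normalising constant, namely a $\bSigma_C^{(S)}$-free multiple of $\vert\bA\vert^{-(n+1)/2}=\vert{\bf D}_S(\bSigma_C^{(S)})^{-1}({\bf D}_S)^T\vert^{-(n+1)/2}$. Restoring the explicit $\vert\bSigma_C^{(S)}\vert^{-p/2}$ carried by the likelihood normalisation, and retaining the likelihood's remaining normalising factor as $c(\bSigma_C^{(S)})$, reproduces the asserted marginal, since every constant independent of $\bSigma_C^{(S)}$ (the $(2\pi)^{np/2}$ as well as the $2^{(n+1)n/2}\Gamma_n((n+1)/2)$ issuing from the Inverse-Wishart constant) is swept into the proportionality sign.

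The main obstacle is justifying the convergence and the Inverse-Wishart identification rather than the algebra. The integral is finite only if the scale matrix $\bA$ is symmetric positive-definite: this holds because $\bSigma_C^{(S)}$ is assumed invertible, so $(\bSigma_C^{(S)})^{-1}\succ 0$, provided ${\bf D}_S$ has full row rank $n$, which in particular forces $n\le p$; and the degrees of freedom satisfy $\nu=n+1>n-1$, so the multivariate gamma $\Gamma_n((n+1)/2)$ is finite and the density is proper. The only other point requiring care is bookkeeping: one must check that it is precisely the choice $\alpha=-n/2-1$ that turns the product of the likelihood and prior powers of $\vert\bSigma_R^{(S)}\vert$ into the exponent $(n+1)/2$ appearing on $\vert\bA\vert$, and that the $\bSigma_C^{(S)}$-dependence is partitioned cleanly between the explicit determinant factors and the normaliser $c(\bSigma_C^{(S)})$.
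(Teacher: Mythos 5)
Your proof is correct and follows essentially the same route as the paper's: you collect the $\bSigma_R^{(S)}$-dependent factors (with the prior exponent $\alpha=-n/2-1$ combining with the likelihood's $-n/2$), integrate them out by recognising a standard matrix-variate kernel with $n+1$ degrees of freedom, and are left with $\vert{\bf D}_S(\bSigma_C^{(S)})^{-1}({\bf D}_S)^T\vert^{-(n+1)/2}$ times a $\bSigma_C^{(S)}$-free constant, exactly as in the paper. The only cosmetic difference is that the paper first substitutes $\bY=(\bSigma_R^{(S)})^{-1}$ (with Jacobian $\vert\bY\vert^{-(n+1)}$) and reads off an unnormalised Wishart $W_n(\bV,n+1)$ with $\bV=({\bf D}_S(\bSigma_C^{(S)})^{-1}({\bf D}_S)^T)^{-1}$, whereas you identify the integrand directly as an Inverse-Wishart kernel in $\bSigma_R^{(S)}$ with scale ${\bf D}_S(\bSigma_C^{(S)})^{-1}({\bf D}_S)^T$ -- the same computation, since the two normalising constants coincide; your explicit observation that convergence forces the scale matrix to be positive definite, hence ${\bf D}_S$ to have full row rank and $n\le p$, is in fact sharper than the paper's remark about discarding linearly dependent rows.
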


\begin{proof}
The joint posterior probability density of $\bSigma_C^{(S)}, \bSigma_R^{(S)}$, given data ${\bf D}_S$:
\begin{eqnarray}
\left[\bSigma_C^{(S)}, \bSigma_R^{(S)}\vert{\bf D}_S\right] 
&\propto& 
\displaystyle{
\ell\left(\bSigma_R^{(S)}, \bSigma_C^{(S)}\vert {\bf D}_S\right)
\left[\bSigma_C^{(S)}, \bSigma_R^{(S)}\right], \quad{\mbox{i.e.}}}\nonumber \\
\left[\bSigma_C^{(S)}, \bSigma_R^{(S)}\vert{\bf D}_S\right] &\propto& 
\displaystyle{
\frac{1}{(2\pi)^{\frac{np}{2}}
{\Big{|}}\bSigma_C^{(S)}{\Big{|}}^{\frac{p}{2}}
{\Big{|}}\bSigma_R^{(S)}{\Big{|}}^{\frac{n}{2}}}\times}\nonumber \\
&&\displaystyle{
\exp\left[
-\frac{1}{2}tr
\left\{(\bSigma_R^{(S)})^{-1} ({\bf D}_S) (\bSigma_C^{(S)})^{-1} 
({\bf D}_S)^T\right\}
\right]
{\Big{|}\bSigma_R^{(S)}\Big{|}^{-\frac{n}{2} -1}}},
\nonumber \\
&&
\end{eqnarray}
using the likelihood from Equation~\ref{eqn:matrix_normal_density}; using prior on 
$\bSigma_R^{(S)}$ to be
$\pi_0(\bSigma_R^{(S)})={\Big{\vert}}
\bSigma_R^{(S)}{\Big{\vert}}^{\alpha}$ where $\alpha=\displaystyle{-\frac{n}{2}
  -1}$; using prior on
$\bSigma_C^{(S)}$ to be uniform.

Marginalising $\bSigma_R^{(S)}$ out from the joint posterior
$\left[\bSigma_C^{(S)}, \bSigma_R^{(S)}\vert{\bf D}_S\right]$, we get:\\
$$
\left[\bSigma_C^{(S)}\vert{\bf D}_S\right] \propto
$$
\begin{equation}
\displaystyle{
\frac{1}{{\Big{|}}\bSigma_C^{(S)}{\Big{|}}^{\frac{p}{2}}}\times} 
\displaystyle{\int\limits_{{\cal R}}
\frac{1}{{\Big{|}}\bSigma_R^{(S)}{\Big{|}}^{\frac{n}{2}}}
{\Big{|}}\bSigma_R^{(S)}{\Big{|}}^{-\frac{n}{2}-1}\times
\exp\left[
-\frac{1}{2}tr
\left\{(\bSigma_R^{(S)})^{-1} {\bf D}_S (\bSigma_C^{(S)})^{-1} ({\bf D}_S)^T\right\}
\right]
d(\bSigma_R^{(S)})
}
\label{eqn:2}
\end{equation}
Here $\bSigma_R^{(S)}\in{\cal R}\subseteq{\mathbb R}^{(n\times n)}$. Now, 
\begin{enumerate}
\item[--]let $\bY:= (\bSigma_R^{(S)})^{-1}$. Then $d(\bSigma_R^{(S)}) = \vert \bY\vert^{-(n+1)} d\bY$ \ctp{mathai},
\item[--]let $\bV^{-1}:={\bf D}_S (\bSigma_C^{(S)})^{-1} ({\bf D}_S)^T$,
  $\Longrightarrow tr\left[(\bSigma_R^{(S)})^{-1} {\bf D}_S (\bSigma_C^{(S)})^{-1} ({\bf D}_S)^T\right] \equiv 
tr\left[\bV^{-1}\bY\right]$ (using commutativeness of trace), 
\end{enumerate}
so that in Equation~\ref{eqn:2}, we get
\begin{eqnarray}
\left[\bSigma_C^{(S)}\vert{\bf D}_S\right] &\propto&
\displaystyle{
\frac{1}{{\Big{|}}\bSigma_C^{(S)}{\Big{|}}^{\frac{p}{2}}}
\int\limits_{{\cal R}}
{|\bY|^{\frac{n}{2}}}
|\bY|^{\frac{n}{2}+1}\times
\exp\left[
-\frac{1}{2}tr
\left\{\bV^{-1}\bY\right\}
\right]
\vert \bY\vert^{-(n+1)} d\bY
}. \nonumber \\ 
&& 
\label{eqn:3}
\end{eqnarray}
The integral in the RHS of Equation~\ref{eqn:3} represents the unnormalised
Wishart $pdf$ $W_n(\bV, q)$, over all values of the random matrix
$\bY$, where the scale matrix and degrees of freedom of this
$pdf$ are $\bV$ and $q=n+1$ respectively, i.e. $q > n-1$. \\
Thus, integral in the RHS of Equation~\ref{eqn:3} is the integral of the unnormalised $pdf$ of $\bY\sim W_n(\bV, q)$, over the full support of $\bY\left(\equiv\left(\bSigma_R^{(S)}\right)^{-1}\right)$, \\
i.e. the integral in the RHS of Equation~\ref{eqn:3} is
the normalisation of this $pdf$:
$$2^{\frac{qn}{2}} \Gamma_n\left(\frac{q}{2}\right)
\vert\bV\vert^{\frac{q}{2}}\equiv $$
$$2^{\frac{(n+1)(n)}{2}} \Gamma_n\left(\frac{n+1}{2}\right) {\Big{|}}\left({\bf D}_S (\bSigma_C^{(S)})^{-1} ({\bf D}_S)^T\right)^{-1}{\Big{|}}^{\frac{n+1}{2}},$$
i.e. integral on RHS of Equation~\ref{eqn:3} is proportional to
${\Big{|}}\left({\bf D}_S (\bSigma_C^{(S)})^{-1} ({\bf
    D}_S)^T\right)^{-1}{\Big{|}}^{\frac{n+1}{2}},\quad{\mbox{i.e.}}$
\begin{equation}
\left[\bSigma_C^{(S)}\vert{\bf D}_S\right] \propto
\displaystyle{
\frac{1}{{\Big{|}}\bSigma_C^{(S)}{\Big{|}}^{\frac{p}{2}}}
{\Big{|}}\left({\bf D}_S (\bSigma_C^{(S)})^{-1} ({\bf
    D}_S)^T\right)^{-1}{\Big{|}}^{\frac{n+1}{2}}}
\label{eqn:notun}
\end{equation}

Now, if
${\bf D}_S (\bSigma_C^{(S)})^{-1} ({\bf D}_S)^T$ is invertible, ${\Big{|}}\left( {\bf D}_S (\bSigma_C^{(S)})^{-1} ({\bf D}_S)^T
\right)^{-1}{\Big{|}}^{\cdot}= {\Big{|}}{\bf D}_S (\bSigma_C^{(S)})^{-1} ({\bf
  D}_S)^T{\Big{|}}^{-{\cdot}}$.
\begin{enumerate}
\item[--]It is given that $\bSigma_C^{(S)}$ is invertible, i.e. $\left(\bSigma_C^{(S)}\right)^{-1}$ exists.
\item[--]The original dataset is examined to discard rows that are linear
transformations of each other, leading to data matrix ${\bf D}_S$, no two rows of which are linear transformations of each other
\end{enumerate}
$\Longrightarrow$ ${\bf D}_S (\bSigma_C^{(S)})^{-1} ({\bf D}_S)^T$ is
  positive definite, i.e. ${\bf D}_S (\bSigma_C^{(S)})^{-1} ({\bf D}_S)^T$ is
invertible,\\ $\Longrightarrow$
${\Big{|}}\left( {\bf D}_S (\bSigma_C^{(S)})^{-1} ({\bf D}_S)^T \right)^{-1}{\Big{|}}^{(n+1)/2}= {\Big{|}}{\bf D}_S (\bSigma_C^{(S)})^{-1} ({\bf D}_S)^T{\Big{|}}^{-{(n+1)/2}}$. 

Using this in Equation~\ref{eqn:notun}:
\begin{equation}
\left[\bSigma_C^{(S)}\vert{\bf D}_S\right] \propto {\Big{|}}{\bSigma}_C^{(S)}{\Big{|}}^{-{p/2}} {\Big{|}}{\bf D}_S (\bSigma_C^{(S)})^{-1} ({\bf
  D}_S)^T{\Big{|}}^{-{(n+1)/2}}.
\label{eqn:notun2}
\end{equation}
 
This posterior of the
between-columns correlation matrix $\bSigma_C^{(S)}$ given data ${\bf D}_S$,
is normalised over all possible datasets, where the possible datasets abide
by a column-correlation matrix of $\bSigma_C^{(S)}$, as:
\begin{equation}
c\left(\bSigma_C^{(S)}\right)=\displaystyle{\int\limits_{\cal Z} \ldots \int\limits_{\cal Z}
  \frac{1
}{{\Big{|}}\left( {\bf D}^{/} (\bSigma_C^{(S)})^{-1} ({\bf
      D}^{/})^T\right){\Big{|}}^{\frac{n^{/}+1}{2}}} dz_{11}^{/}dz_{11}^{/}\ldots dz_{n^{/}p}^{/}},
\label{eqn:norm}
\end{equation}
where ${\bf D}^{/}=[z_{ij}^{/}]_{i=1;j=1}^{i=n^{/}; j=p}$ is a dataset with $n^{/}$ rows and $p$ columns, comprising values of random standardised variables $Z_{ij}^{/}\in{\cal Z}$, simulated to bear between-column correlation matrix of $\bSigma_C^{(S)}$, s.t. ${\bf D}^{/} (\bSigma_C^{(S)})^{-1} ({\bf D}^{/})^T$ is positive definite $\forall {\bf D}^{/} \in {\cal D}$. 
Choosing the same number of rows for all choices of the
random data matrix ${\bf D}^{/}$, i.e. for a constant $n^{/}$, ${\cal
  D}\subseteq{\mathbb R}^{(n^{/}\times p)}$. Then 
$c\left(\bSigma_C^{(S)}\right) >0$ for any $\bSigma_C^{(S)}$.

Using this normalisation on the posterior of $\bSigma_C^{(S)}$ given ${\bf
  D}_S$, in Equation~\ref{eqn:notun2} we get 
\begin{equation}
\pi\left(\bSigma_C^{(S)}\vert{\bf D}_S\right) =
\displaystyle{
\frac{1}{c\left(\bSigma_C^{(S)}\right) {\Big{|}}\bSigma_C^{(S)}{\Big{|}}^{\frac{p}{2}}}
\frac{1}{{\Big{|}}\left( {\bf D}_S (\bSigma_C^{(S)})^{-1} ({\bf D}_S)^T\right){\Big{|}}^{\frac{n+1}{2}}}
},
\label{eqn:4}
\end{equation}
where $c\left(\bSigma_C^{(S)}\right)>0$ is defined in Equation~\ref{eqn:norm}.
\end{proof}

\begin{proposition}
{
An estimator of the normalisation ${\hat c}\left(\bSigma_C^{(S)}\right)$ of the posterior $\left[\bSigma_C^{(S)}\vert{\bf D}_S\right]$,
given 
in Equation~\ref{eqn:norm} is
$${\hat c}\left(\bSigma_C^{(S)}\right) = 
\displaystyle{{\mathbb E}_{Z^{/}_{n^{/}p}}\left[\ldots\left[{\mathbb E}_{Z^{/}_{11}}
\left[ \displaystyle{\frac{1} {{\Big{|}}\left( {\bf D}^{/} (\bSigma_C^{(S)})^{-1} ({\bf D}^{/})^T\right){\Big{|}}^{\frac{n^{/}+1}{2}}}}  \right]\right]\ldots\right]
}.$$
We substitute this difficult, sequential computing of expectations
w.r.t. distribution of each element of ${\bf D}^{/}$, by computation 
of the expectation w.r.t. the block ${\bf D}^{/}$ of these elements, where ${\bf D}^{/}$ abides by a column-correlation of $\bSigma_C^{(S)}$, i.e., we compute
$${\hat c}^{/}\left(\bSigma_C^{(S)}\right) = \displaystyle{{\mathbb E}_{{\bf D}^{/}_{S}}
\left[\displaystyle{\frac{1} {{\Big{|}}\left( {\bf D}^{/}
        (\bSigma_C^{(S)})^{-1} ({\bf
          D}^{/})^T\right){\Big{|}}^{\frac{n^{/}+1}{2}}}} \right]}.$$ 
We consider a between-columns correlation matrix $\bSigma_t$, and the sample of $k$ number of $n^{/}\times
p$-dimensional data sets $\{{\bf D}_1^{t/}, \ldots,{\bf D}_K^{t/}\}$,
s.t. ${\bf D}_k^{t/} (\bSigma_t)^{-1} ({\bf D}_k^{t/})^T$ is positive definite
$\forall k=1,\ldots,K$, at each $t$, the estimator of ${\hat
  c}^{/}\left(\bSigma_t\right)$ is 
\begin{equation}
{\hat{c}}_t := \displaystyle{
\frac{1}{K}
{\sum\limits_{k=1}^K 
\frac{1}{{\Big{|}}\left( {\bf D}_k^{t/} (\bSigma_t)^{-1} ({\bf
      D}_k^{t/})^T\right){\Big{|}}^{\frac{n^{/}+1}{2}}}}
}.
\label{eqn:est}
\end{equation}
}
\end{proposition}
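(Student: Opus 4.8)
The plan is to read the proposition as a chain of four objects---the multiple integral in Equation~\ref{eqn:norm}, the iterated expectation ${\hat c}\left(\bSigma_C^{(S)}\right)$, the block expectation ${\hat c}^{/}\left(\bSigma_C^{(S)}\right)$, and finally the Monte Carlo average ${\hat c}_t$ of Equation~\ref{eqn:est}---and to justify each link in turn, the last being the substantive one. Throughout I would write $g\left({\bf D}^{/}\right):={\Big|}{\bf D}^{/}(\bSigma_C^{(S)})^{-1}({\bf D}^{/})^T{\Big|}^{-(n^{/}+1)/2}$ for the integrand, which is finite and positive on the set of matrices ${\bf D}^{/}$ for which ${\bf D}^{/}(\bSigma_C^{(S)})^{-1}({\bf D}^{/})^T$ is positive definite; this set is almost surely realised once rows that are linear transformations of one another have been discarded, exactly as in the proof of Theorem~\ref{theo:marg}. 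The first link is then essentially definitional: treating the entries $Z^{/}_{ij}$ of ${\bf D}^{/}$ as jointly distributed random variables, standardised and simulated so as to carry between-columns correlation $\bSigma_C^{(S)}$, the Lebesgue integral of Equation~\ref{eqn:norm} is read, through the law of the unconscious statistician, as the expectation of $g\left({\bf D}^{/}\right)$ under this joint law, and peeling the entries off one coordinate at a time---$Z^{/}_{11}$ innermost through $Z^{/}_{n^{/}p}$ outermost---is precisely the iterated form ${\hat c}\left(\bSigma_C^{(S)}\right)$.

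The second link, from the sequential expectations to the single block expectation ${\hat c}^{/}\left(\bSigma_C^{(S)}\right)={\mathbb E}_{{\bf D}^{/}}\left[g\left({\bf D}^{/}\right)\right]$, I would obtain from the tower property (law of iterated expectations). The point worth stressing is that this collapse holds irrespective of any dependence among the $Z^{/}_{ij}$, so the non-trivial correlation structure imposed by $\bSigma_C^{(S)}$ is no impediment: one need not assume the entries independent, only that the nested conditional expectations are well defined, which they are once $g$ is integrable.

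The final link is the Monte Carlo step. Fixing a target correlation $\bSigma_t$ and drawing ${\bf D}_1^{t/},\ldots,{\bf D}_K^{t/}$ as independent matrices each simulated to carry column-correlation $\bSigma_t$ and each satisfying the positive-definiteness constraint, I would set $g_t({\bf D}):={\Big|}{\bf D}(\bSigma_t)^{-1}{\bf D}^T{\Big|}^{-(n^{/}+1)/2}$ and recognise ${\hat c}_t=\frac{1}{K}\sum_{k=1}^K g_t\left({\bf D}_k^{t/}\right)$ as the sample mean of $g_t$. Linearity of expectation gives ${\mathbb E}[{\hat c}_t]={\mathbb E}_{{\bf D}^{/}}[g_t({\bf D}^{/})]={\hat c}^{/}\left(\bSigma_t\right)$, so the estimator is unbiased, and the strong law of large numbers gives ${\hat c}_t\to{\hat c}^{/}\left(\bSigma_t\right)$ almost surely as $K\to\infty$, which is the claimed assertion.

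The hard part will be integrability, not the probabilistic bookkeeping. Both the unbiasedness identity and the strong law require ${\mathbb E}_{{\bf D}^{/}}\left[g_t\left({\bf D}^{/}\right)\right]<\infty$, and $g_t$ diverges as ${\bf D}(\bSigma_t)^{-1}{\bf D}^T$ approaches the boundary of the positive-definite cone. I would therefore concentrate the effort on a boundary/tail estimate showing that, under the simulation law, the determinant ${\Big|}{\bf D}(\bSigma_t)^{-1}{\bf D}^T{\Big|}$ does not approach zero fast enough for the negative power $(n^{/}+1)/2$ to destroy integrability; this, together with the positivity $c\left(\bSigma_t\right)>0$ already established after Equation~\ref{eqn:norm}, would pin the expectation to a finite positive number and legitimise the sample-average estimator. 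Controlling the variance of ${\hat c}_t$, and hence the Monte Carlo error at finite $K$, would then follow from the analogous finiteness of ${\mathbb E}_{{\bf D}^{/}}\left[g_t\left({\bf D}^{/}\right)^2\right]$.
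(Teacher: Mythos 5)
The paper never actually proves this proposition: it is stated as a procedure, and each passage --- from the integral of Equation~\ref{eqn:norm} to the iterated expectation ${\hat c}\left(\bSigma_C^{(S)}\right)$, from there to the block expectation ${\hat c}^{/}\left(\bSigma_C^{(S)}\right)$, and from there to the sample average of Equation~\ref{eqn:est} --- is presented as a declared \emph{substitution} (``we substitute \ldots'', ``an estimator \ldots is''), not as an identity to be derived. Your reconstruction is therefore different in kind: you upgrade the arrows to asserted equalities (law of the unconscious statistician, tower property, linearity plus the strong law of large numbers). Your second and third links are sound and are the natural rigorous reading the paper omits: interpreting the nested expectations as conditional expectations, the tower property collapses them to the block expectation regardless of dependence among the $Z^{/}_{ij}$, and the sample mean of independent draws is unbiased and strongly consistent for ${\hat c}^{/}\left(\bSigma_t\right)$, granted integrability. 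This skeleton is more than the paper provides, and it is what a reader would need in order to trust ${\hat{c}}_t$.

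Two concrete problems remain, however. First, your opening link is false as an identity: Equation~\ref{eqn:norm} is a plain Lebesgue integral $\int g\left({\bf D}^{/}\right)\,dz^{/}_{11}\cdots dz^{/}_{n^{/}p}$ carrying no density factor, whereas ${\mathbb E}_{{\bf D}^{/}}\left[g\left({\bf D}^{/}\right)\right]$ integrates $g$ against the joint density of the simulation law that imposes column-correlation $\bSigma_C^{(S)}$. The law of the unconscious statistician relates an expectation to the \emph{density-weighted} integral; it cannot equate the unweighted integral with the expectation unless that law is uniform on a suitably bounded support. The paper's own wording sidesteps this by calling ${\hat c}$ an \emph{estimator} of $c$ --- the replacement of the integral by an expectation is a modelling choice, not a theorem --- so your proof should either make the same concession explicitly or add the uniform-law hypothesis. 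Second, the integrability of $g_t$ under the simulation law, which you rightly identify as what both unbiasedness and the strong law actually require, is only promised, never established; near-singularity of ${\bf D}\,(\bSigma_t)^{-1}{\bf D}^T$ is exactly where it can fail, and neither you nor the paper (which asserts only $c\left(\bSigma_C^{(S)}\right)>0$, never finiteness) closes this step. As written, your argument establishes the Monte Carlo link conditionally on a tail/boundary estimate that remains to be supplied.
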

Generation of a randomly sampled $n^{/}\times p$-sized data
set ${\bf D}_k^{t/}$, with column correlation $\bSigma_t$, is undertaken.


\subsection{Learning the graphical model}
\label{sec:graph}
\noindent
We perform Bayesian learning of the inhomogeneous, Generalised Binomial random
graph ${\mathbb G}(p, \bR)$, given the learnt $p\times p$-dimensional,
between-columns correlation matrix $\bSigma_C^{(S)}$, of the standardised data
set ${\bf D}_S:=(\bZ_1,\vdots,\ldots,\vdots,\bZ_p)^T$.  Here, the graph
${\mathbb G}(p, \bR)$, has the vertex set $\bV$ and the between-columns
partial correlation matrix $\bR$ of data ${\bf D}_S$, where 
$\bR=[R_{ij}]$, s.t. $R_{ij}$
takes the value $\rho_{ij}$, $i\neq j$, and $\rho_{ii}=1$. The vertex set is
$\bV=\{1,\ldots,p\}$ s.t. vertices $i,j\in \bV,\:i\neq j$, are joined by the
edge $G_{ij}$ that is a random binary variable taking values of $g_{ij}$,
where $g_{ij}$ is either 1 or 0, and is the $ij$-th element of the edge matrix
$\bG=[G_{ij}]$.

Given a learnt value of the between-columns correlation matrix $\bSigma_C^{(S)}$, to compute the value $\rho_{ij}$ of the partial correlation variable $R_{ij}$, we first invert $\bSigma_C^{(S)}$ to yield: $\bPsi:=\left(\bSigma_C^{(S)}\right)^{-1};\:\bPsi=[\psi_{ij}]$, s.t. 
\begin{equation}
R_{ij} = -\displaystyle{\frac{\psi_{ij}}{\sqrt{\psi_{ii} \psi_{jj}}}},\quad i\neq j,
\label{eqn:6}
\end{equation}
and $\rho_{ii}=1$ for $i=j$. 

The posterior probability density of the graph
${\mathbb G}(p, \bR)$ defined
for the edge matrix $\bG$, is given as
$$\pi(G_{11}, G_{12},\ldots G_{p\:p-1}\vert \bR) \propto \ell(G_{11}, G_{12},\ldots G_{p\:p-1}\vert \bR)\:\pi_0(G_{11}, G_{12},\ldots G_{p\:p-1}),$$
where $\pi_0(G_{11}, G_{12},\ldots G_{p\:p-1})$ is the prior probability
density on the edge parameters $\{G_{ij}\}_{i\neq j; i,j=1}^{p}$. We choose a prior on $G_{ij}$ that is $Bernoulli(0.5)$, i.e. $\pi_0(G_{11}, G_{12},\ldots G_{p\:p-1}) =\displaystyle{\prod\limits_{i,j=1; i\neq j}^p 0.5^{g_{ij}} 0.5^{1-g_{ij}} }$; thus, the prior is independent of the edge parameters. In applications marked by more information, we can resort to stronger priors.

$\ell(G_{12},\ldots, G_{1p},G_{23},\ldots,G_{2p},G_{34},\ldots,G_{p\:p-1}\vert
\bR)$ is the likelihood of the edge parameters, given the partial correlation
matrix $\bR$ (that is itself computed using the between-columns correlation
matrix $\bSigma_C^{(S)}$, learnt given ${\bf D}_S$, (see
Equation~\ref{eqn:4}). 
We choose to define this likelihood as a function of the (squared) Euclidean distance between
the ``observation'', i.e. the value of $R_{ij}$, and the unknown parameter
$G_{ij}$, with the squared distance normalised by a squared scale length, or
variance parameter $\sigma_{ij}^2$, for all relevant pairs  of nodes. Thus, the
unknown parameters in the model are the edge and variance parameters; in light
of these newly introduced variance parameters, we rewrite our likelihood as
$\ell(G_{12},\ldots,G_{1p},G_{23},\ldots,G_{2p},G_{34},\ldots,G_{p\:p-1},$\\ 
$\sigma_{12}^2,\ldots,\sigma_{1p}^2,\sigma_{23}^2,\ldots,\sigma_{2p}^2,\sigma_{34}^2,\ldots,\sigma_{p\:p-1}^2\vert\bR)$. Then
the constraints on the likelihood function suggest that likelihood increases (decreases) as distance between $R_{ij}$ and $G_{ij}$ decreases
(increases), and likelihood invariant to change of sign of
$R_{ij}-G_{ij}$. Given these constraints, we model our
likelihood of the edge and variance parameters, given $\bR$ as  
$$\ell\left(G_{12},\ldots,G_{1p},G_{23},\ldots,G_{2p},\ldots,G_{p\:p-1},\sigma_{12}^2,\ldots,\sigma_{1p}^2,\sigma_{23}^2,\ldots,\sigma_{2p}^2,\ldots,\sigma_{p\:p-1}^2\vert\bR\right) =$$
\begin{equation}
\displaystyle{
\prod\limits_{i\neq j; i,j=1}^{p} \frac{1}{\sqrt{2\pi\sigma_{ij}^2}}
\exp\left[-\frac{\left(G_{ij} - R_{ij}\right)^2}{2\sigma_{ij}^2} 
\right]}, 
\label{eqn:folded}
\end{equation}
where the variance parameters $\{\sigma_{ij}^2\}_{i\neq j;i,j=1}^p$ are indeed hyperparameters that are also learnt from the data; these variance parameters have uniform prior probabilities imposed on them.

\subsection{Inference using Metropolis-within-Gibbs}
\label{sec:algo}
\noindent
Equation~\ref{eqn:4} gives the posterior probability density of correlation matrix ${\bSigma}_C^{(S)}$, given
data ${\bf D}_S$. In our Metropolis-within-Gibbs based inference,
we update ${\bSigma}_C^{(S)}$--at which the partial correlation matrix $\bR$
is computed. Given this updated $\bR$, we then update the graph ${\mathbb
  G}(p, \bR)$. The graphical model comprising the credible-region defining set
of random Binomial graphs $\{{\mathbb G}(p, \bR)\}$ is thus learnt, where the
vertex set of each graph in this set is fixed as $\bV$; the ``credible
region'' in question is defined below in Section~\ref{sec:95}. 

In our learning of the $p\times p$-dimensional between-columns correlation
matrix ${\bSigma}_C^{(S)}$, the $\displaystyle{\frac{p^2-p}{2}}$ non-diagonal
elements of the upper (or lower) triangle are learnt, i.e. the parameters
$S_{12}, S_{13},\ldots, S_{1p}, S_{23},\ldots,S_{p-1\: p}$ are learnt. In the
$t$-th iteration of our inference, $S_{ij}$ is proposed from a Truncated
Normal density that is left truncated at -1 and right truncated at 1, as
$s_{ij}^{(t*)}\sim{{\cal TN}}(s_{ij}^{(t*)}; s_{ij}^{(t-1)}, v_{ij}, -1,
1),\quad,\forall\:i,j=1,\ldots,p;\:i\neq j,$ where $v_{ij}=v_0\forall\:i,j$ is
the experimentally chosen variance, and the proposal mean is the current value
$s_{ij}^{(t-1)}$ of $S_{ij}$ at the end of the $t-1$-th iteration. At the 2nd
block of the $t$-th iteration, the graph variable ${\mathbb G}(p, \bR)$ is
updated, given the current partial correlation matrix $\bR_t$, s.t. the
proposed edge variable connecting the $i$-th to the $j$-th vertex is
$g_{ij}^{(t\star)}\sim{Bernoulli}(g_{ij}^{(t\star)}; \rho_{ij}^{(t)})$, and
the $ij$-th proposed variance parameter is $\sigma_{ij}^{(t\star)}\sim{\cal
  N}(\sigma_{ij}^{(t\star)}; \sigma_{ij}^{(t-1)}, w_{ij}^2)$, where $w_{ij}^2$
are the experimentally chosen variance and the mean is the current value of
$\sigma_{ij}$. (Details in Section~1 of the Supplementary Material).

As suggested in Equation~\ref{eqn:4}, the correlation learning involves
computing 
$\left({\bSigma}_C^{(S)}\right)^{-1}$, $\vert {\bSigma}_C^{(S)}\vert$
and $\vert{\bf D}_S \left({\bSigma}_C^{(S)}\right)^{-1} \left({\bf
  D}_S\right)^T\vert$, in every iteration.
This calls for Cholesky decomposition of $\bSigma_C^{(S)}$ as
$\bL_C^{(S)}(\bL_C^{(S)})^T$, and of
${\bf D}_S \left({\bSigma}_C^{(S)}\right)^{-1} \left({\bf
  D}_S\right)^T$, into the (lower) triangular matrix $\bL$ and
$\bL^T$, while implementing ridge adjustment \ctp{wothke}. The latter computation follows the inversion of
$\bSigma_{C}^{(S)}$ into $(\bSigma_{C}^{(S)})^{-1}$, which is undertaken
using a forward substitution algorithm. (Details in Section~7 of the Supplementary Material).

\subsection{Defining the 95$\%$ HPD credible regions on the random graph variable, and the learnt graphical model}
\label{sec:95}
\noindent
We perform Bayesian inference on the random graph variable ${\mathbb
  G}(p,\bR)$, leading to one sampled graph at the end of each of the $N+1$
iterations of our inference scheme (Metropolis-within-Gibbs). In order to
acknowledge uncertainties in the Bayesian learning of the sought graphical
model, we need to include in its definition, only those graphs--sampled
post-burnin--that lie within an identified 95$\%$ HPD credible region.
We define the fraction $N_{ij}$ of the post-burnin
number $N_{post}$ of iterations (where $N_{post}< N+1$), in which the $ij$-th
edge exists, i.e. $G_{ij}$ takes the value 1,
$\forall\:i,j=1,2,\ldots,p,\:i\neq j$. Thus, variable
$N_{ij}$ takes the value  
\begin{equation}
n_{ij} := \displaystyle{\frac{\sum\limits_{t= N-N_{post}+1}^N g_{ij}^{(t)}}{N_{post}}},\quad i<j;\:i,j=1,\ldots,p,
\label{eqn:n}
\end{equation}
where the Bernoulli edge-variable $G_{ij}=g_{ij}^{(t)}$ in the $t$-th iteration.
Then $N_{ij}$ is the fractional number of sampled graphs, in which an edge
exists between vertices $i$ and $j$. This leads us to interpret
$\{N_{ij}\}_{i,j\in V;\:i< j}$ as carrying information about the uncertainty
in the graph learnt given data ${\bf D}_S$; in particular, $n_{ij}$
approximates the probability of existence of the edge between the $i$-th and
$j$-th nodes in the graphical model of the data at hand. Indeed the $N_{ij}$
parameters are functions of the partial correlation matrix $\bR$ that is
learnt given this data, but for the sake of notational brevity, we do not
include this explicit $\bR$ dependence in our notation to denote the edge
probability parameters.

So we view the set $\{{\mathbb G}(p,\bR_t)\}_{t=N-N_{post}+1}^N$ of graphs on
vertex set $\bV=\{1,\ldots,p\}$ and edge matrix $\bG_t$ in the $t$-th
iteration, that is updated given the current partial correlation matrix $\bR_t$ 
in the $t$-th iteration, equivalently as the post-burnin sample
\\$\{g^{(t)}_{12},g^{(t)}_{13},\ldots,g^{(t)}_{1p},g^{(t)}_{23},\ldots,g^{(t)}_{p\:p-1}\}_{t=N-N_{post}+1}^N$
of edge parameters. We include only those edge parameters in our defined
95$\%$ HPD credible region, that occur with probability $\geq 0.05$ in this
sample. In other words, only for $ij$ pairs s.t. $N_{ij}\geq 0.05$, define the
$g_{ij}$ parameters included in the set that comprises the 95$\%$ HPD credible
region on the edge parameters, in our definition. Indeed, the graphical model
of the data is then the set of those graphs on vertex set
$\bV=\{1,\ldots,p\}$, the existing edges of which are those $G_{ij}$
parameters that lie within this defined 95$\%$ HPD credible region.
\begin{definition}
\label{defn:95}
The graphical model of data ${\bf D}_S$ for which the between-column partial
correlation matrix is $\bR$, is the $\bR$-dependent set or family
$\displaystyle{{\cal G}_{p,{\bPhi}(\bR)}}$ of all inhomogeneous Binomial graphs ${\mathbb
  G}(p, \bR)$, the edge probabilities in which are given by the matrix $\bPhi(\bR)=[\phi_{ij}(R_{ij})]$, s.t. probability of the edge between the $i$-th and $j$-th nodes ($i\neq j;\:i,j\in V$) is 
\begin{equation}
\phi_{ij}(R_{ij}) = \left[H(n_{ij}- 0.05)\right] n_{ij}.
\label{eqn:graphmod}
\end{equation}
Here, $n_{ij}$ is the value of the parameter $N_{ij}$ defined in
Equation~\ref{eqn:n}, and $H(\cdot)$ is the Heaviside function \ctp{duffandnaylor66} where the Heaviside or step-function of $A\in{\mathbb R}$ is
\begin{eqnarray}
H(a) &=& 1 \quad {\mbox{if}}\quad a\geq 0\nonumber \\ 
     &=& 0 \quad {\mbox{if}}\quad a < 0.\nonumber 
\end{eqnarray}
Only edges with non-zero edge probability $\phi_{ij}(R_{ij})$, are marked on the learnt graphical model, and the corresponding
value of $N_{ij}$ is written next to each such marked edge. 
Then by this definition,
any graph ${\mathbb G}(p, \bR)\in\displaystyle{{\cal G}_{p,{\bPhi}(\bR)}}$ is sampled from within the 95$\%$ HPD credible region on inhomogeneous random Binomial graphs given the partial correlation matrix $\bR$ of the data.  
\end{definition}
Thus, in our approach, the binary edge parameter $G_{ij}$ between
the $i$-th and $j$-th nodes, takes the value 1 (i.e. the edge exists),
with a learnt probability--in fact, we learn the joint posterior of
all $G_{ij}$ parameters given the learnt correlation structure of the
data, while acknowledging the propagation of uncertainties in our
learning of the correlation given the data, into our learning of the
distribution of the $G_{ij}$ parameters given this learnt
partial correlation matrix $\bR$. A summary of this learnt distribution is then the edge probability parameter $\phi_{ij}(R_{ij})$,
the value of which is marked on the visualisation of the graphical
model of the data against the edge between the $i$-th and $j$-th
nodes, as long as $\phi_{ij}(R_{ij})>0$, i.e. $n_{ij}\geq 0.05$; $i\neq j;\:i,j\in \bV$. In other
words, only edges occurring with posterior probabilities in excess of 5$\%$ are
included in this graphical model. 
\section{Uncertainties in learnt graphical models help compute inter-graph distance}
\label{sec:hell}
\noindent
We compute the distance between the graphical models of two multivariate
datasets ${\bf D}_1$ and ${\bf D}_2$ of disparate sizes ($n_1$ and
$n_2$ respectively), to compute the
correlation between them; in
effect, the exercise can address the possible independence of the $pdf$s that
the two datasets are sampled from. This is of course a hard question
to address when the data comprise measurements of a high-dimensional
vector-valued observable. We compute the Hellinger distance
between the posterior probability density of the learnt graphical model
$\displaystyle{{\cal G}_{p,{\bPhi_1}(\bR_1)}}$ of data ${\bf D}_1$, the
between-columns partial correlation matrix of which is $\bR_1$,
and the posterior of the learnt graphical model $\displaystyle{{\cal
    G}_{p,{\bPhi_2}(\bR_2)}}$ given the other dataset. Here
$\bPhi_m(\bR_m)$ is the matrix, the $ij$-th element of which is the edge
probability $\phi_{ij}(R_{ij})=n_{ij}$ if $n_{ij}\geq 0.05$ and
$\phi_{ij}(R_{ij})=0$ if $n_{ij}< 0.05$. $i\neq j;\: i,j=1,\ldots,p_m;\:
m=1,2$. We need to consider the Hellinger distance between the posteriors of
the graphical models of two datasets with the same number of columns, as this
distance is defined between densities that share a common domain.

\begin{definition}
{Square of Hellinger distance between two probability density functions $g(\cdot)$ and $h(\cdot)$ over a common domain ${\cal X}\in{\mathbb R}^m$, with respect to a chosen measure, is 
\begin{eqnarray}
D_H^2(g,f) &=& \displaystyle{\int\left(\sqrt{g(\bx)} - \sqrt{h(\bx)}\right)^2 d\bx} \nonumber \\ 
&=& \displaystyle{\int g(\bx)d\bx + \int h(\bx)d\bx -2\int \sqrt{g(\bx)}\sqrt{h(\bx)}d\bx } \nonumber \\ 
&=& \displaystyle{2\left(1 -\int \sqrt{g(\bx)}\sqrt{h(\bx)}d\bx\right) }.
\label{eqn:hell}
\end{eqnarray}}
\end{definition}
The Hellinger distance is closely related to the Bhattacharyya distance
\ctp{bhat} between two densities:
$D_B(g,f) =
\displaystyle{-log\left[\int\left(\sqrt{g(\bx)}\sqrt{h(\bx)}\right)^2
    d\bx\right]}$.

From the joint posterior of all
edge and variance parameters given the partial correlation matrix $\bR_m$
(that is itself updated given the data ${\bf D}_S^{(m)}$), we marginalise the
$\sigma^2_{ij}$ parameters, $\forall i,j=1,\ldots,p,\:\: i\neq j$, to achieve the
joint posterior probability density of the graph edge parameters given the
partial correlation matrix of the data at hand. So, at the end of the $t$-th
iteration, we compute the value of posterior
$\pi(G^{(mt)}_{11},G^{(mt)}_{12},\ldots,G^{(mt)}_{p\;p-1}\vert \bR_{mt})$,
$t=0,\ldots,N_{iter}$. Given the availability of the posterior at discrete
points in its support, implementation of the integral in the definition of the
Hellinger distance is replaced by a sum. So for the $m$-th dataset, the
posterior of the graph edge parameters in the $t$-th iteration
$p_{m}^{(t)}:=\pi(G^{(mt)}_{11},G^{(mt)}_{12},\ldots,G^{(mt)}_{p\;p-1}\vert
\bR_{mt}),$ 
is employed to compute square of the (discretised version of the) Hellinger
distance between the two datasets as
\begin{equation}
D_H^2(p_{1},p_{2}) = \displaystyle{\frac{
\sum\limits_{t=N_{burnin}+1}^{N_{iter}}\left(\sqrt{p_{1}^{(t)}} - \sqrt{p_{2}^{(t)}}\right)^2}{N_{iter}-N_{burnin}}},
\label{eqn:sq_hell}
\end{equation}
The Bhattacharyya distance can be similarly discretised.

However, MCMC does not provide normalised posterior probability densities--as
we employ uniform priors on the variance parameters, the marginalised
posterior probability of the edge parameters is known only up to an unknown
scale. In fact, what we record at the end of the $t$-th iteration, is the
logarithm $\ln(p_{m}^{(t)})$ of the un-normalised posterior of the edges of
the graph given the $m$-th data ($m=1,\: 2$). Hence the Hellinger
distance between the 2 datasets that we compute is only known
upto a constant normalisation $S$ that we use to scale both $p_{1}^{(t)}$ and
$p_{2}^{(t)}$, $\forall \; t=0,\ldots, N_{iter}$. We choose this scale
parameter $S$, to ensure that the scaled, log posterior of the graph in the
$t$-th iteration, is easily exponentiable, as in
$\exp\left(\frac{\ln(p_{m}^{(t)})}{s}\right)$. One way of achieving this is to
choose the global scale $S$ as: 
\begin{equation}
s:={\max}\{(\ln(p_{1}^{(0)}),
\ln(p_{1}^{(1)}),\ldots,\ln(p_{1}^{(N_{iter})}),
\ln(p_{2}^{(0)}),\ldots,\ln(p_{2}^{(N_{iter})})\}.
\label{eqn:es}
\end{equation}

\begin{remark}
\label{rem:hell}
{
Squared Hellinger distance $D_H^2(p_{1},p_{2})$ between discretised posterior
probability densities of 2 graphical models, computed using
$\exp(\ln(p_m^{(t)})/s)$ in Equation~\ref{eqn:sq_hell}, is affected by 
scaling parameter $S$. This scale dependence is mitigated in our definition of 
the distance between 2 graphical models
as the difference between the ratio of this computed 
${D_H(p_{1},p_{2})}$, to the scaled uncertainty inherent in one graphical
model, and the ratio of ${D_H(p_{1},p_{2})}$, to the scaled uncertainty
in the other learnt graphical model.}
\end{remark} 
\begin{proposition}
\label{prop:1}
{
For correlation matrix $\bR_m$, and edge-probability matrix
$\bPhi_m(\bR_m)=[\phi_{ij}(R_{ij})]$ defined as in Equation~\ref{eqn:graphmod}, 
we define the graphical model $\displaystyle{{\cal G}_{p,{\bPhi_m}(\bR_m)}}$;
$m=1,2$, $i\neq j;\: i,j=1,\ldots,p_m$. 

The separation between two graphical models is
\begin{eqnarray}
\delta(\displaystyle{{\cal G}_{p,{\bPhi_1}(\bR_1)}}, \displaystyle{{\cal G}_{p,{\bPhi_2}(\bR_2)}})&:= &{\Big{\vert}}\sqrt{D_H^2(p_{1},p_{2})}/D_{max,s}(1)-\sqrt{D_H^2(p_{1},p_{2})}/D_{max,s}(2){\Big{\vert}}\nonumber\\
&= & D_H(p_{1},p_{2})\displaystyle{{\Big{\vert}}\frac{1}{D_{max,s}(1)}
  - \frac{1}{D_{max,s}(2)}{\Big{\vert}}},
\label{eqn:delta12}
\end{eqnarray}
 where
the Hellinger distance $D_H(p_{1},p_{2})$, between the 2 graphical models, is defined in Equation~\ref{eqn:sq_hell} and
\begin{eqnarray}
D_{max, s}(m) & := & {\max}\{\exp(\ln(p_{m}^{(0)})/s),
\exp(\ln(p_{m}^{(1)})/s),\ldots,\exp(\ln(p_m^{(N_{iter})})/s)\}-\nonumber \\
&&{\min}\{\exp(\ln(p_{m}^{(0)})/s),\exp(\ln(p_{m}^{(1)})/s),\ldots,\exp(\ln(p_m^{(N_{iter})})/s)\},
\label{eqn:graph_new}
\end{eqnarray}
computed for this chosen value $s$ of scale $S$ (defined in
Equation~\ref{eqn:es}), i.e. $D_{max,s}(m)$ provides separation between the
maximal and minimal (scaled values of) posteriors of graphs, generated in the
MCMC chain run using the $m$-th data; $m=1,2$.  }
\end{proposition}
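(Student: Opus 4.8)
The plan is to read Proposition~\ref{prop:1} as the verification of a purely algebraic identity: that the defining expression for the separation $\delta$ in the first line of Equation~\ref{eqn:delta12} collapses to the factored form given in the second line. The only analytic input required is that the Hellinger distance is non-negative. This is immediate from Definition~3.1, since $D_H^2(p_{1},p_{2})$ is, in its discretised form (Equation~\ref{eqn:sq_hell}), a finite average of squared real quantities and is therefore $\geq 0$; consequently $\sqrt{D_H^2(p_{1},p_{2})} = D_H(p_{1},p_{2}) \geq 0$. I would flag this non-negativity at the outset, because it is exactly what licenses pulling $D_H$ through the modulus in the next step.

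First I would substitute $\sqrt{D_H^2(p_{1},p_{2})} = D_H(p_{1},p_{2})$ into the right-hand side of the first line of Equation~\ref{eqn:delta12}, so that both summands inside the absolute value share the common factor $D_H(p_{1},p_{2})$. I would then factor this common term out and pull it outside the modulus using the elementary rule $\vert c\,a - c\,b\vert = c\,\vert a - b\vert$, which holds for any $c \geq 0$. Applying it with $c = D_H(p_{1},p_{2}) \geq 0$, $a = 1/D_{max,s}(1)$ and $b = 1/D_{max,s}(2)$ reproduces verbatim the second line of Equation~\ref{eqn:delta12}. No further manipulation is needed; the two lines are genuinely equal, not merely equal up to a constant.

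The single point deserving a word of care is the well-definedness of the two ratios, i.e. that $D_{max,s}(m) > 0$ for $m=1,2$. By Equation~\ref{eqn:graph_new}, $D_{max,s}(m)$ is the spread (maximum minus minimum) of the scaled posterior values $\exp(\ln(p_{m}^{(t)})/s)$ over the post-burnin iterations, hence automatically $\geq 0$ and strictly positive unless the chain returns an identical scaled posterior value at every iteration---a degenerate situation excluded for any genuinely mixing run of the sampler. I would record this positivity as the standing assumption. Beyond that there is essentially no obstacle: the content of the statement is definitional, and the ``proof'' is the one-line factorisation justified by $D_H \geq 0$. The complementary claim of Remark~\ref{rem:hell}---that expressing $\delta$ as a \emph{difference of ratios} is what damps the dependence on the arbitrary scale $s$---is motivational context for why this particular combination is chosen, rather than a separate assertion to be formally derived within this proposition.
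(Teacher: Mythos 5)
Your proposal is correct and coincides with the paper's own treatment: the paper attaches no proof to Proposition~\ref{prop:1}, precisely because, as you observe, the statement is definitional and its only mathematical content is the factorisation
$\bigl\vert \sqrt{D_H^2(p_{1},p_{2})}/D_{max,s}(1) - \sqrt{D_H^2(p_{1},p_{2})}/D_{max,s}(2)\bigr\vert = D_H(p_{1},p_{2})\,\bigl\vert 1/D_{max,s}(1) - 1/D_{max,s}(2)\bigr\vert$,
licensed by $D_H(p_{1},p_{2})=\sqrt{D_H^2(p_{1},p_{2})}\geq 0$, which in turn holds since Equation~\ref{eqn:sq_hell} is an average of squares. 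Your side remarks---that $D_{max,s}(m)>0$ is required for well-definedness of the ratios, and that Remark~\ref{rem:hell} supplies motivation rather than a claim to be proved here---also match the paper, which defers all substantive assertions (the metric properties of $\delta$) to the theorem in Section~\ref{sec:itisdist}.
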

Thus, the effect of the global scale is removed by comparing
$D_H(p_1,p_2)/D_{max,s}(1)$ to \\$D_H(p_1,p_2)/D_{max,s}(2)$, i.e. by computing
the ratio of the Hellinger distance between two graphical models, each of which
is normalised by its inherent uncertainty; (see connection to
Remark~\ref{rem:hell}).

Alternatively, we could define a (discretised version of the) odds ratio of unscaled logarithm of the unnormalised posterior densities of the graphical models learnt using MCMC, given the two datasets, as $\displaystyle{\int\left(\log(g(\bx))-{\log(h(\bx))}\right)d\bx }$; such is then a divergence measure that we define as
\begin{equation}
O_\pi(p_{1},p_{2}):= \displaystyle{
\sum\limits_{t=N_{burnin}+1}^{N_{iter}}
\left[{\log(p_{1}^{(t)}) -\log({p_{2}^{(t)}})}\right]
}.
\label{eqn:div}
\end{equation}

\subsection{Suggested inter-graph separation $\delta(\cdot,\cdot)$, is an inter-graph distance}
\label{sec:itisdist}
\begin{theorem}
{
Let $\delta(\displaystyle{{\cal G}_{p,{\bPhi_1}(\bR_1)}}, \displaystyle{{\cal
    G}_{p,{\bPhi_2}(\bR_2)}})$ be the separation 
between 2 with-uncertainty learnt graphical models defined over vertex set $\{1,\ldots,p\}$
($\displaystyle{{\cal G}_{p,{\bPhi_1}(\bR_1)}}$, and $\displaystyle{{\cal
    G}_{p,{\bPhi_2}(\bR_2)}}$, declared in Proposition~\ref{prop:1}), 
as defined in Equation~\ref{eqn:delta12}. Here the graphical model
$\displaystyle{{\cal G}_{p,{\bPhi_m}(\bR_m)}}$ is an element of space
${\bOmega}_p$, $m=1,2$.  

Then our definition of this inter-graph separation $\delta:{\bOmega}_p\times {\bOmega}_p\longrightarrow{\mathbb
  R}_{\geq 0}$, is a distance function, or a metric.}
\end{theorem}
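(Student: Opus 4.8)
The plan is to verify the four metric axioms for the proposed separation function $\delta$ on the space $\bOmega_p$ of with-uncertainty learnt graphical models over a common vertex set. Writing $D := D_H(p_1,p_2)$ for the Hellinger distance and $c_m := 1/D_{max,s}(m)$ for the reciprocal inherent-uncertainty weights, Equation~\ref{eqn:delta12} reads $\delta = D\,\vert c_1 - c_2\vert$. The four properties to establish are: (i) non-negativity, $\delta \geq 0$; (ii) symmetry, $\delta(\mathcal{G}_1,\mathcal{G}_2) = \delta(\mathcal{G}_2,\mathcal{G}_1)$; (iii) the identity of indiscernibles, $\delta(\mathcal{G}_1,\mathcal{G}_2) = 0 \iff \mathcal{G}_1 = \mathcal{G}_2$; and (iv) the triangle inequality. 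I would take them in that order, since the first three are largely structural and will expose exactly what has to be controlled in (iv).

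First I would dispose of (i) and (ii) immediately: $\delta$ is an absolute value times $D$, and since $D_H^2$ in Equation~\ref{eqn:sq_hell} is a sum of squares it is non-negative, so $\delta \geq 0$; symmetry follows because $D_H^2(p_1,p_2)$ is manifestly symmetric in its two arguments (the summand $(\sqrt{p_1^{(t)}} - \sqrt{p_2^{(t)}})^2$ is) and because $\vert c_1 - c_2\vert = \vert c_2 - c_1\vert$. For (iii), the reverse direction is clear since $\mathcal{G}_1 = \mathcal{G}_2$ forces $p_1^{(t)} = p_2^{(t)}$ for all $t$, hence $D = 0$ and $\delta = 0$; the forward direction is where I must argue that $\delta = 0$ entails the two graphical models coincide. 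Here the product structure $D\,\vert c_1 - c_2\vert = 0$ means either $D = 0$ or $c_1 = c_2$, and I would need to argue that identical uncertainty profiles together with the definition of the graphical model via the edge-probability matrix $\bPhi_m(\bR_m)$ of Equation~\ref{eqn:graphmod} pins down the model. Since an element of $\bOmega_p$ is identified by $\bPhi_m(\bR_m) = [\phi_{ij}(R_{ij})]$, $\delta=0$ should be read as equality of these edge-probability matrices.

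The hard part will be the triangle inequality (iv), and I anticipate it is in fact \textbf{false for $\delta$ as literally written} unless one restricts the comparison to a fixed reference or re-interprets the quantity. The obstruction is structural: $\delta$ is built as a \emph{difference} of two reciprocal weights $\vert c_1 - c_2\vert$ scaled by the pairwise $D_H(p_1,p_2)$, and the Hellinger factor is itself pair-dependent, so for three models with weights $c_1, c_2, c_3$ and pairwise Hellinger distances $D_{12}, D_{13}, D_{23}$ the inequality $D_{12}\vert c_1 - c_2\vert \leq D_{13}\vert c_1 - c_3\vert + D_{23}\vert c_2 - c_3\vert$ does not follow from the subadditivity of $\vert\cdot\vert$ alone, because the three $D$-factors differ. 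The natural repair—and the route I expect the authors take—is to observe that $D_H$ itself is a genuine metric on the space of (discretised, scaled) posterior densities, so that $D_{12} \leq D_{13} + D_{23}$ holds, and then to pair this with the triangle inequality for the scalar map $\mathcal{G}_m \mapsto c_m$; the separation $\delta$ would then be shown to inherit the triangle inequality by interpreting it as a composition or product of two metric-respecting maps.

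Concretely, I would try to exhibit $\delta$ as $d\bigl(T(\mathcal{G}_1), T(\mathcal{G}_2)\bigr)$ for a suitable injection $T: \bOmega_p \to (\mathbb{R}, d)$ into a one-dimensional metric space, or failing a clean factorisation, fall back on the triangle inequality for $D_H$ (Hellinger distance is known to be a metric, which I may cite via Equation~\ref{eqn:hell}) combined with a careful bound on the weight differences $\vert c_1 - c_2\vert$. If a clean factorisation is unavailable, the main obstacle is precisely reconciling the pair-dependence of the $D_H$ factor with the additive structure of $\vert c_1 - c_2\vert$; I would scrutinise whether the authors intend $D_{max,s}(m)$ to be evaluated on a common MCMC scale $s$ fixed across all datasets (Equation~\ref{eqn:es}), in which case the $c_m$ weights become intrinsic to each model and the triangle inequality for $\delta$ can be reduced to that for $D_H$ together with the one-dimensional triangle inequality $\vert c_1 - c_2\vert \leq \vert c_1 - c_3\vert + \vert c_3 - c_2\vert$. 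That reduction, if it goes through, is the crux of the whole proof.
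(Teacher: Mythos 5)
Your handling of non-negativity, symmetry, and the identity of indiscernibles coincides with the paper's: it too reads these off the corresponding properties of $D_H$ and of $\vert\cdot\vert$, and it deals with the degenerate possibility ($D_H\neq 0$ but equal weights) exactly as you anticipate one must, namely by simply \emph{assuming} that $D_{max,s}(1)\neq D_{max,s}(2)$ whenever the two posterior densities are distinct. (One cosmetic discrepancy: in its proof the paper abbreviates $\ell_m := D_{max,s}(m)$ and writes $\delta = D_H(p_1,p_2)\,\vert\ell_1-\ell_2\vert$, i.e.\ it weights by the uncertainties themselves rather than by their reciprocals as in Equation~\ref{eqn:delta12}; this inconsistency is internal to the paper and does not affect the structure of either argument.)

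Your scepticism about the triangle inequality is the substantive point, and it is vindicated rather than resolved by the paper. The paper argues by contradiction: assuming $\delta({\cal G}_{p,1},{\cal G}_{p,3}) > \delta({\cal G}_{p,1},{\cal G}_{p,2}) + \delta({\cal G}_{p,2},{\cal G}_{p,3})$, it combines this with the Hellinger triangle inequality~\ref{eqn:helltri} to derive inequation~\ref{eqn:realtri}, and then claims a contradiction because inequation~\ref{eqn:realtri} fails in the case $\ell_1=\ell_3$. This is a non sequitur: inequation~\ref{eqn:realtri} was derived for the particular triple assumed to violate the triangle inequality, and for that triple the case $\ell_1=\ell_3$ cannot arise --- by the paper's own stipulation, $\ell_1=\ell_3$ forces models $1$ and $3$ to coincide, whence $\delta({\cal G}_{p,1},{\cal G}_{p,3})=0$ and the assumed violation is impossible. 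Refuting the derived inequality only in a case that the assumption already excludes yields no contradiction, so the general case is never addressed. Moreover, the product form genuinely fails the triangle inequality at the level of the stated hypotheses: take $D_H(p_1,p_2)=D_H(p_2,p_3)=0.6$ and $D_H(p_1,p_3)=1$ (compatible with inequality~\ref{eqn:helltri}), with weights $1,2,3$ for the three models; then $\delta(1,3)=1\times 2 = 2$ while $\delta(1,2)+\delta(2,3)=0.6+0.6=1.2$, and this holds under either weight convention. Nothing stated in the paper ties the $D_{max,s}(m)$ values to the pairwise Hellinger distances tightly enough to exclude such configurations. So your instinct was correct on both counts: the triangle inequality is the crux, it does not follow from subadditivity of $\vert\cdot\vert$ together with the metric property of $D_H$, and the reduction you hoped for does not in fact go through --- neither in your sketch nor in the paper's own proof.
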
 

\begin{proof}
For $\delta:{\bOmega}_p\times {\bOmega}_p\longrightarrow{\mathbb
  R}_{\geq 0}$ to be a distance function or a metric, it should possess the
following properties.
\begin{enumerate}
\item $\delta(\displaystyle{{\cal G}_{p,1}}, \displaystyle{{\cal
      G}_{p,2}})\geq 0$ $\forall \displaystyle{{\cal G}_{p,1}}, \displaystyle{{\cal
    G}_{p,2}}\in{\bOmega}$, and $\delta(\displaystyle{{\cal G}_{p,1}},
\displaystyle{{\cal G}_{p,2}})=0 \iff \displaystyle{{\cal
    G}_{p,1}}=\displaystyle{{\cal G}_{p,2}}$.
\item $\delta(\displaystyle{{\cal G}_{p,1}}, \displaystyle{{\cal
      G}_{p,2}}) = \delta(\displaystyle{{\cal G}_{p,2}}, \displaystyle{{\cal
      G}_{p,1}})$ $\forall \displaystyle{{\cal G}_{p,1}}, \displaystyle{{\cal
    G}_{p,2}}\in{\bOmega}$
\item $\delta(\displaystyle{{\cal G}_{p,1}}, \displaystyle{{\cal
      G}_{p,3}}) \leq \delta(\displaystyle{{\cal G}_{p,1}}, \displaystyle{{\cal
      G}_{p,2}}) + \delta(\displaystyle{{\cal G}_{p,2}}, \displaystyle{{\cal
      G}_{p,3}})$, $\forall \displaystyle{{\cal G}_{p,1}}, \displaystyle{{\cal
    G}_{p,2}}, \displaystyle{{\cal G}_{p,3}} \in{\bOmega}$
\end{enumerate}
To abbreviate notation, we define: $$\ell_i := D_{max,s}(i),\quad,i=1,2,3.$$
Then we recall the definition of $\delta(\cdot,\cdot)$ as
$$\delta(\displaystyle{{\cal G}_{p,i}}, \displaystyle{{\cal G}_{p,j}}) := D_H(p_{i},p_{j})\displaystyle{{\Big{\vert}}\ell_i
  - \ell_j}{\Big{\vert}},$$ for datasets indexed by the integers $i$-th and
$j$. Below we consider 3 datasets indexed by $i=1,2,3$, the learnt graphical
models of which are $\displaystyle{{\cal G}_{p,i}}\in{\bOmega}$, the
separation between the maximal and minimal values of posterior probabilities
of which for a chosen global scale $S$ is $\ell_i:=D_{max,s}(i)$, and the
scaled, (by this $s$) discretised Hellinger distance between the posterior
probabilities of the graphical model $\displaystyle{{\cal G}_{p,i}}$ and
$\displaystyle{{\cal G}_{p,j}}$ is $D_H(p_{i},p_{j})$, $j=1,2,3$.

\noindent
--Proof of non-negativity:\\
in the definition of $\delta(\cdot,\cdot)$, $D_H(p_{1},p_{2})\geq 0$ is the Hellinger distance
between the posterior probability densities of the graphical models 
$\displaystyle{{\cal G}_{p,1}}, \displaystyle{{\cal G}_{p,2}}$. 
$\therefore \delta(\displaystyle{{\cal G}_{p,1}}, \displaystyle{{\cal
    G}_{p,2}}) \geq 0$.\\
Also, Hellinger distance between 2 probability densities, being a metric, is 0
$\iff$ the densities are equal. Then $\delta(\displaystyle{{\cal G}_{p,1}}, \displaystyle{{\cal
    G}_{p,2}})=0\Longrightarrow D_H(p_{1},p_{2})=0\Longleftrightarrow
\displaystyle{{\cal G}_{p,1}}=\displaystyle{{\cal G}_{p,2}}$.\\
As $D_{max,s}(\cdot)$ is probabilistically generated, we consider
$D_{max,s}(1)\neq D_{max,s}(2)$, for distinct posterior densities.

\noindent
--Proof of symmetry:\\
by definition, $\delta(\displaystyle{{\cal G}_{p,1}}, \displaystyle{{\cal
    G}_{p,2}})=\delta(\displaystyle{{\cal G}_{p,2}}, \displaystyle{{\cal
    G}_{p,1}})$, since $D_H(p_{1},p_{2})=D_H(p_{2},p_{1})$ by virtue of being
a metric, and $\displaystyle{{\Big{\vert}}\ell_1  - \ell_2{\Big{\vert}}}=
               \displaystyle{{\Big{\vert}}\ell_2  - \ell_1{\Big{\vert}}}$.

\noindent
--Proof of triangle-inequality obedience:\\
we aim to prove $$\delta(\displaystyle{{\cal G}_{p,1}}, \displaystyle{{\cal
    G}_{p,3}}) \leq \delta(\displaystyle{{\cal G}_{p,1}}, \displaystyle{{\cal
    G}_{p,2}}) + \delta(\displaystyle{{\cal G}_{p,2}}, \displaystyle{{\cal
    G}_{p,3}}),\:\:{\mbox{i.e.}}$$ 
$$D_H(p_{1},p_{3})\vert \ell_1 - \ell_3\vert \leq D_H(p_{1},p_{2})\vert \ell_1 - \ell_2\vert+ D_H(p_{2},p_{3})\vert \ell_2 - \ell_3\vert,$$
given
\begin{equation}
D_H(p_{1},p_{3}) \leq D_H(p_{1},p_{2})+ D_H(p_{2},p_{3}),
\label{eqn:helltri}
\end{equation}
(the Hellinger distance being a metric obeys the triangle inequality).

\noindent
We assume:
$$\delta(\displaystyle{{\cal G}_{p,1}}, \displaystyle{{\cal
    G}_{p,3}}) > \delta(\displaystyle{{\cal G}_{p,1}}, \displaystyle{{\cal
    G}_{p,2}}) + \delta(\displaystyle{{\cal G}_{p,2}}, \displaystyle{{\cal
    G}_{p,3}}),\quad {\mbox{i.e.}}$$
$$D_H(p_{1},p_{3})\vert \ell_1 - \ell_3\vert > 
D_H(p_{1},p_{2})\vert \ell_1 - \ell_2\vert +
D_H(p_{2},p_{3})\vert \ell_2 - \ell_3\vert \quad {\mbox{}}$$
Then this equation, together with inequation~\ref{eqn:helltri}, tells us
\begin{eqnarray}
D_H(p_{1},p_{2})\displaystyle{{\vert \ell_1 - \ell_2\vert}} +
D_H(p_{2},p_{3})\displaystyle{{\vert \ell_2 - \ell_3\vert}} &<& 
D_H(p_{1},p_{3}){\vert \ell_1 - \ell_3\vert}\nonumber \\
 \leq D_H(p_{1},p_{2}){\vert \ell_1 - \ell_3\vert} &+& D_H(p_{2},p_{3}){\vert \ell_1 - \ell_3\vert}  \nonumber 
\end{eqnarray}
i.e.
\begin{eqnarray}
D_H(p_{1},p_{2})\displaystyle{{\vert \ell_1 - \ell_2\vert}} +
D_H(p_{2},p_{3})\displaystyle{{\vert \ell_2 - \ell_3\vert}} &<& \nonumber \\
D_H(p_{1},p_{2}){\vert \ell_1 - \ell_3\vert} + D_H(p_{2},p_{3}){\vert
   \ell_1 - \ell_3\vert} &&  
\label{eqn:realtri} 
\end{eqnarray}
Now let $\ell_1=\ell_3$, which we consider to occur only if the graphical
model due to the dataset with index 1, equals the graphical model model due to
dataset with index 3, i.e. if datasets with indices 1 and 3 are the same. In
this case, $D_H(p_{1},p_{3})=0$, but by inequation~\ref{eqn:helltri}, 
$D_H(p_{1},p_{2})$ and $D_H(p_{2},p_{3})$ are not necessarily 0.
The RHS of inequation~\ref{eqn:realtri}
is then 0, but the LHS is not negative, i.e. the case $\ell_1=\ell_3$ is a
counterexample against the validity of inequation~\ref{eqn:realtri}. Thus,
inequation~\ref{eqn:realtri} is false $\Longrightarrow$our assumption is false. Therefore,
$$\delta(\displaystyle{{\cal G}_{p,1}}, \displaystyle{{\cal G}_{p,3}}) \leq
\delta(\displaystyle{{\cal G}_{p,1}}, \displaystyle{{\cal G}_{p,2}}) + \delta(\displaystyle{{\cal G}_{p,2}}, \displaystyle{{\cal G}_{p,3}}).$$
This proves that $\delta(\cdot,\cdot)$ abides by the triangle inequality.
Thus the inter-graph separation $\delta(\cdot,\cdot)$ that
we introduced in Proposition~\ref{prop:1}, on learnt graphical models that
live in space $\bOmega_p$, is a metric or a distance function, that gives 
the inter-graph distance.
\end{proof}

\begin{proposition}
  \label{prop:2} { For a given value of the inter-graph distance
    $\delta(\displaystyle{{\cal G}_{p,1}}, \displaystyle{{\cal
        G}_{p,2}})\in[0,\infty)$, between 2 learnt graphical models
    $\displaystyle{{\cal G}_{p,2}}\displaystyle{{\cal
        G}_{p,1}}\in{\bOmega}_p$, defined over vertex set $\{1,\ldots,p\}$,
    where the graphical model $\displaystyle{{\cal G}_{p,\cdot}}$ is learnt given data
    $\bD_\cdot$, a 
    model for the absolute value of the correlation $\vert corr(\bZ_1,\bZ_2)\vert $ between the $p$-dimensional vector-valued observable
    $\bZ_1$, ($n_1$ measurements of which comprise dataset indexed by 1), and the
    $p$-dimensional observable $\bZ_2$, ($n_2$ measurements of which comprise
    dataset indexed by 2), is
$$\delta(\displaystyle{{\cal G}_{p,1}}, \displaystyle{{\cal
        G}_{p,2}}) = -\log\left(\vert corr(\bZ_1,\bZ_2)\vert\right),$$   
$$s.t.\:\: \vert corr(\bZ_1,\bZ_2)\vert = \exp[-\delta(\displaystyle{{\cal G}_{p,1}},
\displaystyle{{\cal G}_{p,2}})]\in(0,1].$$ 
}
\end{proposition}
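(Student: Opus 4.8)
The plan is to read this statement not as a theorem with a hidden technical core, but as the specification of an affinity (correlation) measure \emph{induced} by the inter-graph distance, so that the ``proof'' reduces to verifying that the proposed functional relationship is internally consistent with the stated ranges and to motivating why the exponential form is the natural choice. The starting point is the preceding theorem, which establishes that $\delta:\bOmega_p\times\bOmega_p\longrightarrow{\mathbb R}_{\geq 0}$ is a metric; in particular, for any two learnt graphical models ${\cal G}_{p,1},{\cal G}_{p,2}\in\bOmega_p$ one has $\delta({\cal G}_{p,1},{\cal G}_{p,2})\in[0,\infty)$, with the value $0$ attained exactly when the two graphical models coincide.

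First I would pin down the behaviour demanded of an affinity between the observables $\bZ_1$ and $\bZ_2$: the measure $\vert corr(\bZ_1,\bZ_2)\vert$ should be monotonically decreasing in the separation $\delta({\cal G}_{p,1},{\cal G}_{p,2})$ of the corresponding graphical models, should attain its maximal value when the two models are identical (so that $\delta=0$ signals perfectly aligned partial-correlation structure, hence maximal $\vert corr\vert$), and should decay to its minimal value as $\delta\to\infty$ (two graphical models that are arbitrarily far apart share no structure, signalling approach to independence, $\vert corr\vert\to 0$). These three requirements---monotone decreasing, value $1$ at $\delta=0$, and limit $0$ as $\delta\to\infty$---fix the endpoints and the direction of the map.

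Next I would verify that the proposed choice $\vert corr(\bZ_1,\bZ_2)\vert=\exp[-\delta({\cal G}_{p,1},{\cal G}_{p,2})]$ meets these requirements. The map $g:[0,\infty)\longrightarrow(0,1]$ given by $g(\delta)=e^{-\delta}$ is continuous and strictly decreasing, with $g(0)=1$ and $g(\delta)\downarrow 0$ as $\delta\to\infty$; hence it is a bijection of $[0,\infty)$ onto $(0,1]$. This immediately yields the claimed range $\vert corr(\bZ_1,\bZ_2)\vert\in(0,1]$, with the affinity attaining its supremum $1$ precisely when $\delta=0$ (identical graphical models) and never reaching $0$ for finite $\delta$. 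Taking logarithms inverts the relationship to $\delta({\cal G}_{p,1},{\cal G}_{p,2})=-\log(\vert corr(\bZ_1,\bZ_2)\vert)$, recovering the stated form; strict monotonicity of $g$ guarantees that this inversion is well defined.

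The hard part will not be the calculation---the verification above is elementary---but the conceptual justification that this particular model is the appropriate one. The quantity $\vert corr(\bZ_1,\bZ_2)\vert$ is not computed directly from the data but is \emph{modelled} through the graphical-model distance, so the substance of the argument lies in defending that $\delta=0$ should correspond to $\vert corr\vert=1$ and $\delta\to\infty$ to $\vert corr\vert\to0$, and that a smooth, memoryless (exponential) interpolation between these endpoints is the canonical affinity-from-distance transform. I would emphasise that the exponential is singled out among the decreasing bijections $[0,\infty)\to(0,1]$ as the standard Gibbs/Laplacian-kernel map, under which equal increments in graph separation produce equal multiplicative attenuation of the inferred correlation; uniqueness is not claimed, only that this choice is consistent, range-compatible, and interpretable.
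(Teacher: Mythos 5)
Your reading is correct: the paper offers no proof of this proposition at all --- it is asserted as a modelling choice and simply applied later (Section~5, where $\vert corr(white,red)\vert=\exp[-\delta(white,red)]\approx 0.1030$ is computed), so the only verifiable mathematical content is exactly what you check, namely that $g(\delta)=e^{-\delta}$ is a strictly decreasing bijection of $[0,\infty)$ onto $(0,1]$ inverted by $\delta=-\log\vert corr\vert$, with $\delta=0$ corresponding to identical graphical models and maximal affinity. Your verification, together with the acknowledgement that the exponential is a canonical but non-unique affinity-from-distance transform, supplies everything the paper leaves implicit and is fully consistent with its treatment.
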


\section{Implementation on real data}
\label{sec:real}
\noindent
\begin{sloppypar}
{In this section we make applications of our method to the relatively
well-known data sets on 11 different chemical attributes and ``quality''
classes of red and white wines, grown in the Minho region of Portugal
(referred to a ``vinho verde''); these data have been considered by
\ctn{CorCer09} and discussed in
\url{https://onlinecourses.science.psu.edu/stat857/node/223} (hereon PSU). The data
consists of information on 1599 red wines and 4898 white wines. Each of these
data sets consists of 12 columns that contain information on vino-chemical
attributes of the sampled wines; these properties are assigned the following
names: ``fixed acidity'' ($X_1$), ``volatile acidity'' ($X_2$), ``citric
acid'' ($X_3$), ``residual sugar'' ($X_4$), ``chlorides'' ($X_5$), ``free
sulphur dioxide'' ($X_6$), ``total sulphur dioxide'' ($X_7$), ``density''
($X_8$), ``pH'' ($X_9$), ``sulphates'' ($X_{10}$), ``alcohol'' ($X_{11}$) and
``quality'' ($X_{12}$). Then the $n$-th row and $i$-th column of the data
matrix carries measured/assigned value of the $i$-th property of the $n$-th
wine in the sample, where $i=1,\ldots,12$ and $n=1,\ldots,n_{orig}=1599$ for
the red wine data ${\bf D}^{(red)}_{orig}$, while $n=1,\ldots,n_{orig}=4898$
for the white wine data ${\bf D}^{(white)}_{orig}$. We refer to the $i$-th
vinous property to be $X_i$. Then $X_i\in{\mathbb R}_{\geq 0}$ $\forall
i=1,\ldots,11$, while $X_{12}$ that denotes the perceived ``quality'' of the
wine is a categorical variable. Each wine in these samples was assessed by at
least three experts who graded the wine on a categorical scale of 0 to 10, in
increasing order of excellence. The resulting ``sensory score'' or value of
the ``quality'' parameter was a median of the expert assessments
\ctp{CorCer09}. We seek the graphical model given each of the wine data sets,
in which the relationship between any $X_i$ and $X_j$ is embodied, $i\neq j;\;
i,j=1,\ldots, 12$. Thus, we seek to find out how the different vino-chemical
attributes affect each other, as well as the quality of the wine, in the
sample at hand. Here, $X_1,\ldots,X_{11}$ are real-valued, while $X_{12}$ is a categorical
variable, and our methodology allows for the learning of the graphical model
of a data set that in its raw state bears measurements of variables of
different types. In fact, we standardise our data, s.t. $X_i$ is standardised
to $Z_i$, $i=1,\ldots,p$, $p=12$. We work with only a subset data set,
(comprising only $n<n_{orig}$ rows of the available ${\bf
  D}^{(\cdot)}_{orig}$; $n=300$ typically). Thus, the data sets with $n$ rows,
containing $Z_i$ values, ($i=1,\ldots,p=12$), are $n\times p$-dimensional
matrices each; we refer to these data sets that we work with, as ${\bf
  D}^{(white)}_S$ and ${\bf D}^{(red)}_S$, respectively for the white and red
wines. Our aim is to learn the between-column correlation matrix
$\bSigma_S^{(m)}$ given data ${\bf D}^{(m)}_S$, and simultaneously
learn the graphical model of this data using the methodology that we
have developed above; $m=white,\:red$.}
\end{sloppypar}

The motivation behind choosing these data sets are basically
three-fold. Firstly, we sought multivariate, rectangularly-shaped, real-life data,
that would admit graphical modelling of the correlations between the different
variables in the data. Also, we wanted to work with data, results from--at
least a part of--which exists in the literature. Comparison of these published
results, with our independent results can then illustrate strengths of our
method. Thirdly, treating the red and white wine data as data realised at
different experimental conditions, we would want to address the question of
the distance between these data, and we propose to do this by computing the
distance between the graphical models of the two data sets. Hence our choice
of the popular Portuguese red and white wine data sets, as the data that we
implement to illustrate our method on. It is to be noted that a rigorous
vinaceous implications of the results, is outside the scope and intent of this
paper. However, we will make a comparison of our results with the results of
the analysis of white wine data that is reported in PSU
precludes analysis of the red wine data.

\subsection{Results given data ${\bf D}^{(white)}_S$}
\noindent
\begin{figure}[!t]
\centering
\includegraphics[width=10cm,height=8cm]{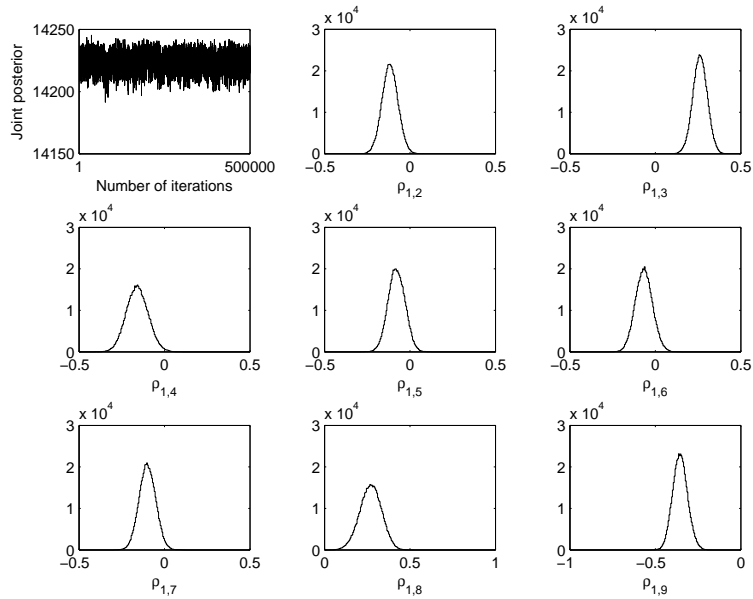}
\vspace{.0cm}
\caption{{\it Top left panel:} trace of the joint posterior
  probability density of the elements of the upper triangle of the
  between-columns correlation matrix of the standardised version of
  the real data ${\bf D}_S^{(white)}$ on Portuguese white wine samples
  \ctp{CorCer09}; this data has $n=300$ rows nd $p=12$ columns, and is
  constructed as a randomly sampled subset of the original data, the
  sample size of which is 4898. {\it All other panels:} histogram representations of marginal posterior probability densities of some of the partial correlation parameters computed using the correlation matrix learnt given data ${\bf D}_S^{(white)}$.
} 
\label{fig:white_corr}
\end{figure}

\begin{figure}[!t]
{
\hspace*{-1in}
\includegraphics[width=15cm,height=6cm]{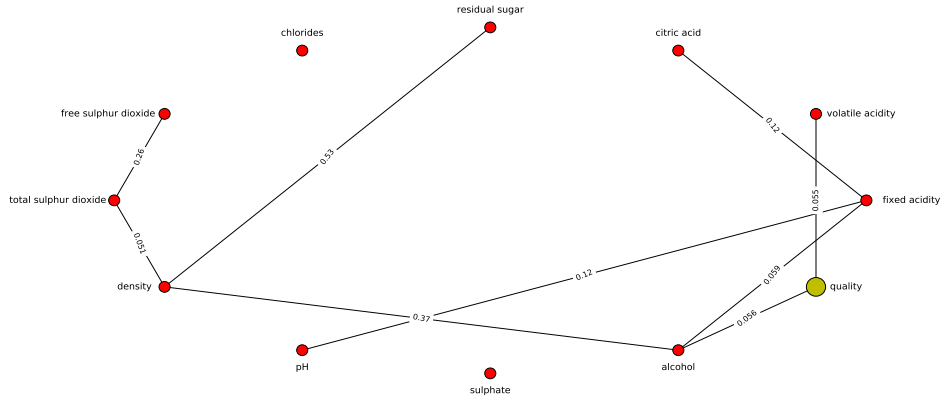}
}
\vspace*{-.2in}
\caption{Figure showing graphical model of standardised version ${\bf
    D}_S^{(white)}$, of the real data on Portuguese white wine samples
  \ctp{CorCer09}. Each of the first 11 columns of this data gives the measured
  value of each of 11 different vino-chemical properties of the wines in the
  sample--marked as nodes in the graph above, by filled red (or grey in the
  printed version) circles, with the name of the property included in the
  vicinity of the respective node. The 12-th column in the data includes
  values of the assessed quality of a wine in the sample, (a node that we mark
  with a green circle in the electronic version; the bigger grey circle in a
  monochromatic version of the paper). The probability for an edge to exist in
  the post-burnin sample of graphs generated in our MCMC-based inferential
  scheme, is marked against an existing edge, where edges with such
  probabilities that are $< 0.05$ are omitted from this graphical model, as
  included within a pre-defined 95$\%$ HPD credible region (defined in
  Section~\ref{sec:95}) on the MCMC-based sample of graphs.}
\label{fig:white_graph}
\vspace*{-.1in}
\end{figure}

The top left-hand panel of Figure~\ref{fig:white_corr} presents the trace of
the joint posterior probability density of the correlation parameters $S_{ij}$
of the upper triangle of the
between-column correlation matrix $\bSigma_S^{(white)}$, given the
standardised white wine data ${\bf D}^{(white)}_S$ that we choose to consist of $n=300$ number of rows and $p=12$ number of
columns. All the other panels of this figure include marginal posterior probabilities of some of the partial correlation parameters, with value $\rho_{ij}$, where the $i$-th variable is the $i$-th vinous parameter listed above, with $i=1,\ldots,12;\: j\neq i, j=1,\ldots,12$.
Figure~9 in Supplementary Materials presents trace of the
joint posterior of the $G_{ij}$ and $\sigma_{ij}^2$ parameters,
updated in the 2nd block of each iteration of our MCMC chain, at the
updated (partial) correlation matrix. 
Thus we obtain the sample of graphs,
$\{{\mathbb G}^{(t)}(p,\bR_t)\}_{t=N-N_{post}+1}^N$, where each graph
is on the vertex set $\bV=\{1,\ldots,p\}$ and is learnt given the
partial correlation matrix $\bR_t$ in the $t$-th iteration of our MCMC
chain. We compute the graph edge probability parameter
$\phi_{ij}(R_{ij})$ for each $ij$-pair of nodes in this sample, and
include only those edges in the graphical model of the ${\bf
  D}^{(white)}_S$ data, that have non-zero $\phi_{ij}(R_{ij})$,
i.e. $n_{ij}\geq 0.05$ (see Section~\ref{sec:95}). For these edges,
the value $n_{ij}$ is marked against the edge between the $i$-th and
$j$-th nodes in the representation of this graphical model of this
white wine data set, that is shown in
Figure~\ref{fig:white_graph}. Here $i\neq j,\:i,j=1,\ldots,p=12$.
\subsubsection{Comparing against earlier work done with white wine data}
\noindent   
Comparison of our results with previous work done with the white wine data is
discussed in Section~4 of the Supplementary Section. Such previous work
includes ``Exploratory Data Analysis'' reported in
PSU using the white
wine data. In this work, a
matrix of scatterplots of $X_i$ against $X_j$, is presented; $i\neq j;\:
i,j=1,\ldots,11$. These empirical scatterplots
visually suggest stronger correlations between fixed acidity and pH;
residual sugar and density; free sulphur dioxide and total sulphur dioxide;
density and total sulphur dioxide; density and alcohol--than amongst other
pairs of variables. These are the very node pairs that we identify to have
edges (at probability in excess of 0.05) between them. Existence of edges to/from the ``quality''
variable, is corroborated by examining the results reported in that work, on regressing
this variable against the others. This regression analysis of the predictors $X_1,\ldots,X_{11}$ on the
response variable ``quality'' suggests the variables
alcohol and volatile acidity to have maximal effect on quality. 
Indeed, this is corroborated in our learning
of the graphical model that manifests edges between the nodes corresponding to
variables: alcohol-quality, and volatile acidity-quality.

\begin{figure}[!t]
{
\hspace*{-1in}
\includegraphics[width=15cm,height=6cm]{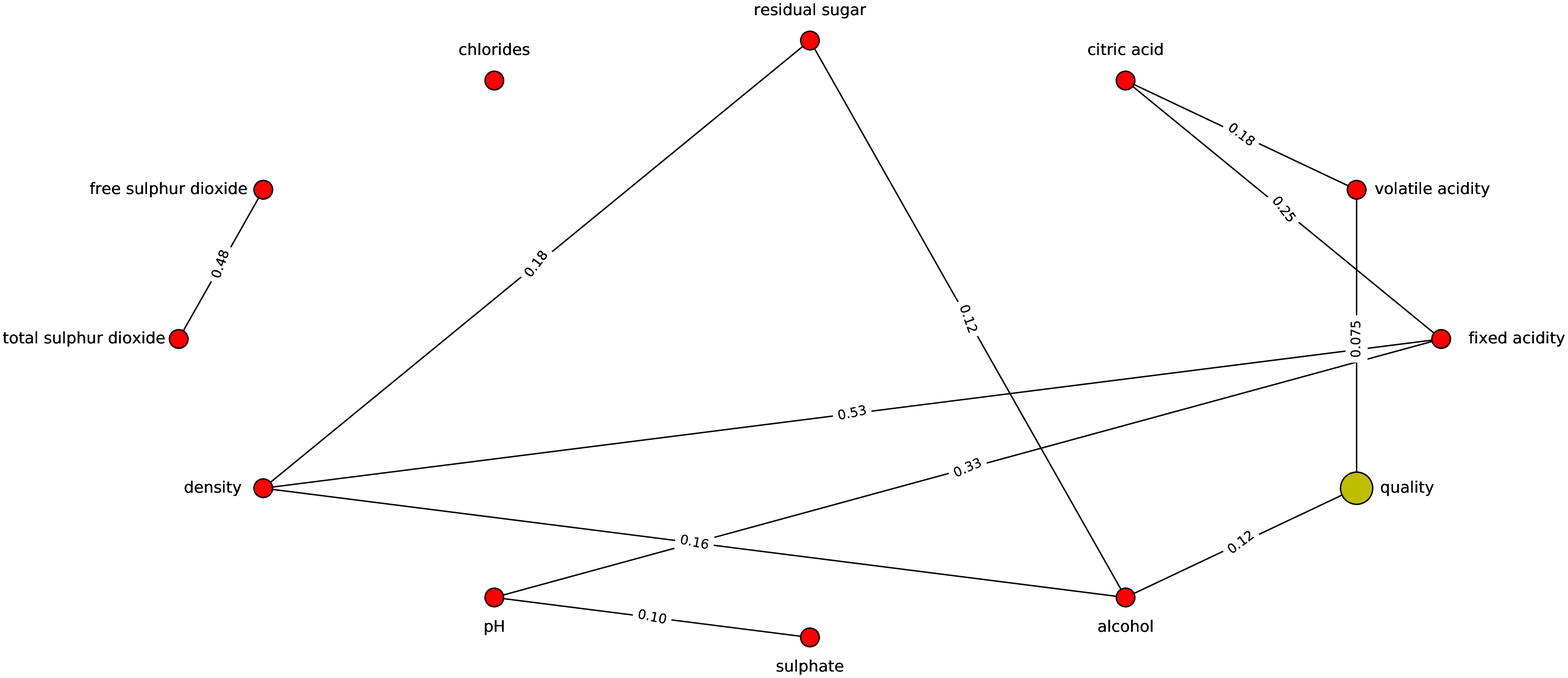}
}
\vspace*{-.2in}
\caption{Graphical model of standardised version ${\bf D}_S^{(red)}$ of the
  real data on Portuguese red wine samples
  \ctp{CorCer09}. Figure is similar to Figure~\ref{fig:white_graph}, except that this is the graphical model learnt for the red wine data.}
\label{fig:red_graph}
\vspace*{-.2in}
\end{figure}

\subsection{Results given data ${\bf D}^{(red)}_S$}
\label{sec:red}
\noindent
The ${\bf D}^{(red)}_S$ data is the standardised version of a subset of the
original red wine data set ${\bf D}^{(red)}_{orig}$. ${\bf D}^{(red)}_S$
comprises $n=300$ rows and $p=12$. 
The marginal posterior of some of the partial correlation parameters
$\rho_{ij}$ computed using the elements of the correlation matrix
$\bSigma^{(red)}_S$ (of data ${\bf D}^{(red)}_S$) that is updated in the first
block of Metropolis-within-Gibbs, are presented in Figure~10 of the
Supplementary Section.
In the second block, we update the edge parameters $G_{ij}$ of the graph
${\mathbb G}(p,\bR)$ given the newly updated partial corelation matrix $\bR$.
Figure~11 of the Supplementary Section presents the trace of the joint
posterior probability of the $G_{ij}$ parameters and the variance parameters
$\sigma_{ij}^2$ (of the Normal likelihood; see Equation~\ref{eqn:folded}), given data ${\bf
  D}^{(red)}_S$. The marginal of some of the variance parameters are also
shown in the other panels of this figure. The inferred graphical model of the
red wine data is included in Figure~\ref{fig:red_graph}.
\subsubsection{Comparing against empirical work done with red wine data}
\noindent   
To the best of our knowledge, analysis of the red wine data has not been
reported in the literature. In lieu of that, we undertake an empirical and
regression analysis of this red wine data, and compare our learnt results with
results of such analyses in Section~6 of the Supplementary Material. 
We further undertook a modelling of the relationship between the response
variable ``quality'' ($Z_{12}$) and the other 11 covariates ($Z_1$ to
$Z_{11}$), via an OLS regression in which quality is
regressed over the other vino-chemical attributes). This modelling suggests
the strongest effect of alcohol and volatile-acidity on quality (see
Figure~14 of Supplementary Material); this trend is replicated in our
learnt graphical model of the red wine data.

\section{Metric measuring distance between posterior probability densities of graphs given white and red wine datasets}
\label{sec:real_hell}
\noindent
We seek the distance $\delta(\cdot,\cdot)$ that we defined in
Proposition~\ref{prop:1}, between the learnt
red and white wine graphs, using the method delineated in
Section~\ref{sec:hell}. For this, we first compute 
the normalisation\\ $S:={\max}\{(\ln(p_{red}^{(0)}),
\ln(p_{red}^{(1)}),\ldots,\ln(p_{red}^{(N_{iter})}),
\ln(p_{white}^{(0)}),\ldots,\ln(p_{red}^{(N_{iter})})\}$, which for the red
and white wine datasets yields
$s=\ln(p_{red}^{(1474)})\approx 142.7687$. We then use $\exp(\ln(p_m^{(t)})/s)$ in Equation~\ref{eqn:sq_hell}; $m=white, red$. 
Then scaling the log posterior given either data set, at
any iteration, by the global scale value of $s$=142.7687 approximately, we get
$D_H(p_{white},p_{red})\approx 0.1153$, so that the logarithm of this
value of the Hellinger distance between the 2 learnt graphical models is $\ln(0.1153)\approx -2.1602$.
Similarly, using the same scale, the Bhattacharyya distance is
$D_B(p_{white},p_{red})\approx -1.7623$, where we recall that this
measure is a logarithm of the distance.

\begin{figure}[!hbt]
{
\hspace*{0in}
\includegraphics[width=12cm]{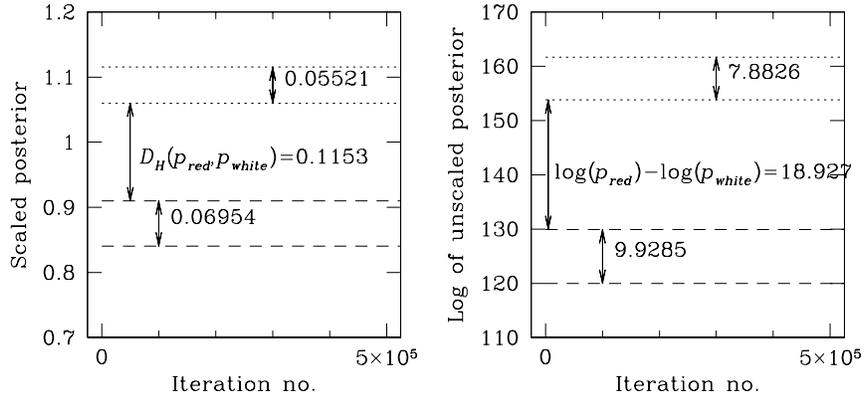}
}
\vspace*{-.0in}
\caption{{\it Left}: minimum and maximum values of the scaled posterior
  probability density of the graph sampled in an iteration in the MCMC chain
  run with the red wine data, plotted in dotted lines against the number of
  the iteration. The difference between these values is depicted within the
  band delineated by these lines. The broken lines show the same for the
  results obtained from the MCMC chain run using the white wine data. The
  value of the Hellinger distance $D_H(p_{red},p_{white})$ computed using the
  scaled posterior probabilities of the graphical models given the two wine
  data sets, is also marked, as about 0.1153. All log posterior values are
  scaled by a chosen global scale and exponentiated (as discussed in the
  text). {\it Right}: similar to the left panel, except that here, the ratio of
  the logarithm of the unscaled posteriors is used; the value of the log odds
  between the posteriors of the red and white wine data sets is marked to be
  about 18.927.}
\label{fig:dist}
\end{figure}
For this $s$ and the red wine data, we compute the uncertainty inherent in
graphical model of the red-wine data as $D_{max,s}(red)$,
between the graph that occurs at maximal posterior and that at the minimal
posterior (Equation~\ref{eqn:graph_new}). Similarly, we compute
$D_{max,s}(white)$. We then compute ratio of the Hellinger distance between the
graphical models learnt given the red and white-wine data, to the uncertainty
inherent in each learnt model, and compare
${D_H(p_{white},p_{red})}/D_{max,s}(red)$, with
${D_H(p_{white},p_{red})}/D_{max,s}(white)$. This comparison is
depicted in the left panel of Figure~\ref{fig:dist} that shows that the
difference $D_{max,s}(white)$ between the scaled posterior of graphs given the
white wine data is about 0.0694 while $D_{max,s}(red)$ given the red wine data
is about 0.05521, These values are compared to the Hellinger distance (between
scaled posteriors) of about 0.1153, between graphs given the red and white
wine data.  Thus, $D_H(p_{red},p_{white})$ is about 1.66$D_{max,s}(white)$ and
about 2.1$D_{max,s}(red)$. 
Thus, our inter-graph distance metric, between the graphical models learnt
given the two data sets is $$\delta(white, red)\approx 0.44$$. 
Then intuitively speaking, this inter-graph distance between the graphical models given the red and white wine
datasets, may suggest independence of the data sets. 
Again, using the correlation model suggested in Proposition~\ref{prop:2}, the
absolute value of the correlation between the 12-dimensional vino-chemical
vector-valued measurable for the red wine data and that for the white wine
data, is $$\vert corr(white, red)\vert := \exp[-\delta(white, red)]\approx
0.1030,$$ which is a low correlation, indicating that the two graphical models
learnt given the real red and white wine Portuguese datasets, are not sampled
from the same $pdf$.

Compared to these, the sample mean of the log odds of the posterior of the
graphs generated in the post-burnin iterations, given the two data is 18.9273,
which is about 1.9 times the maximal difference between the log posterior
values of graphs achieved in the MCMC run with the white wine data, and about
2.4 times that for the red wine data (see Figure~\ref{fig:dist}). Again,
this suggests that the log odds as a measure of divergence between the
graphical models given these two wine data sets, is significantly higher than
the uncertainty internal to the results for each data.

This clarifies how our pursuit of uncertainties in learnt graphical models,
and inter-graph distance, share an integrated umbrage of purpose, where the
former leads to the latter.

\section{Learning the human disease-symptom network}
\label{sec:disease}
\noindent
Our methodology for learning the graphical model, can be
implemented even for a highly multivariate data that generates a
graph with a very large number of nodes. In this section, we discuss such a
graph (with $\gtrsim$8000 nodes) that describes the correlation structure of
the human disease-symptom network. 
\begin{figure}[!hbt]
\hspace*{-3.0cm}{
\includegraphics[width=18cm,height=12cm]{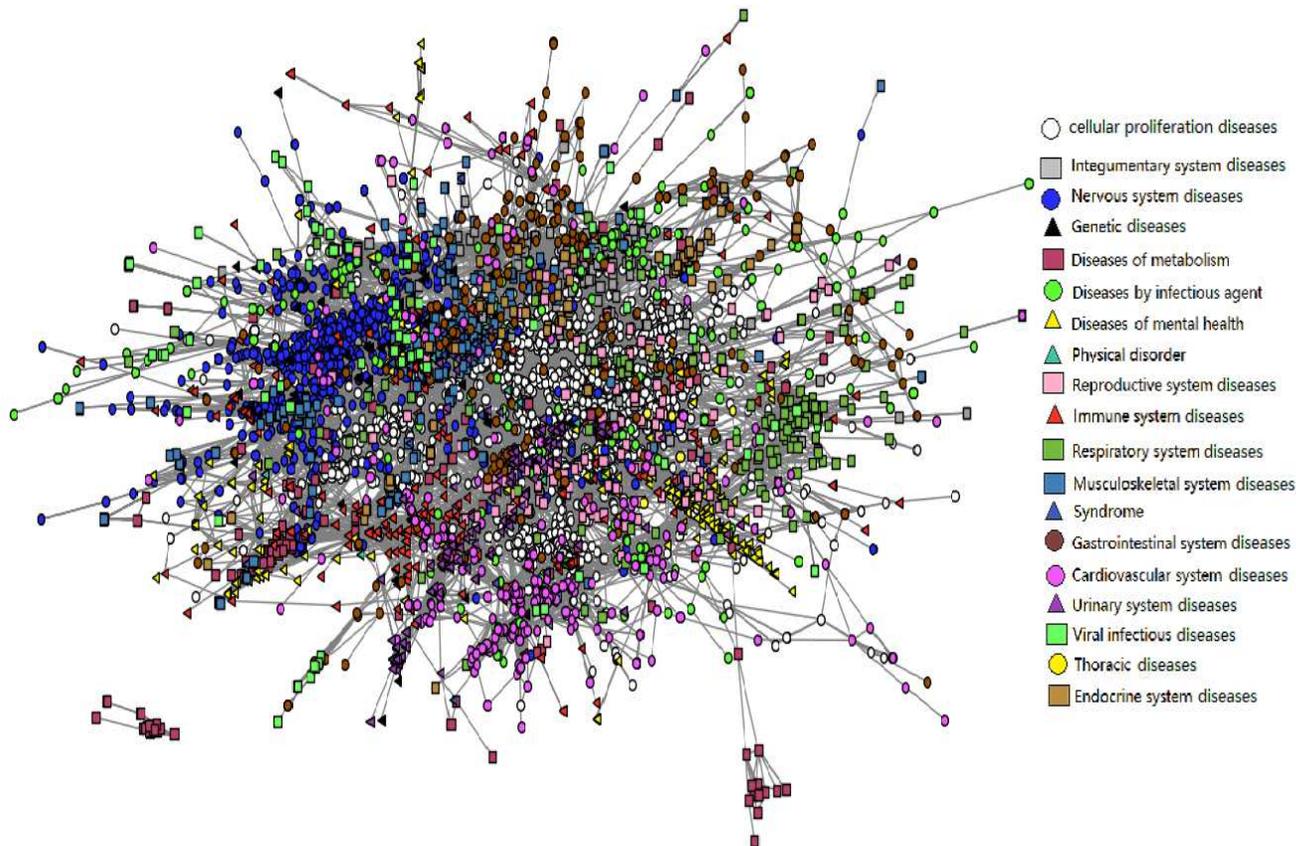}}
\vspace{-.5cm}
\caption{The human disease phenotype graphical model that we learn using the
  disease-disease partial correlation obtained using the computed Spearman
  rank correlation between the rank vectors of a list of phenotypes, where the
  phenotype ranking reflects semantic relevance of a phenotype to the disease
  in question (quantified by HSG as the NPMI parameter in the ${\bf D}_{DPh}$
  dataset). Only edges with posterior probability $\geq 0.9$ are included in
  this graph, and nodes that have edges with posterior less than 0.9, are
  discarded, resulting in 6052 diseases (nodes) remaining in this graph. There
  are 145210 edges in the displayed graph. All diseases identified by name by
  HSG, to belong to one of the 19 given disease class, are presented above in the same colour; the colour key identifying these classes, is attached. To draw the graph, we used a Python-based code that implements the Fruchterman-Reingold force-directed algorithm.}
\label{fig:disgraph}
\end{figure}

\ctn{hsg} (HSG hereon) learn this network by considering the similarity
parameter for each pair of diseases that are elements of an identified set of
diseases in the Human Disease Ontology (DO), that contains
information about rare and common diseases, and spans heritable,
developmental, infectious and environmental diseases.  Here, the ``similarity
parameter'' between one disease and another, is computed using the ranked
vectors of ``normalised pointwise mutual information'' (NMPI) parameters for the two diseases, where the NMPI
parameter describes the relevance of a symptom (or rather, a phenotype), to
the disease in question. HSG define the NMPI parameter semantically, as the
normalised number of co-occurrences of a given phenotype and a disease in the
titles and abstracts of 5 million articles in Medline. To do this, they make
use of the Aber-OWL: Pubmed infrastructure that performs such semantical
mining of the Medline abstracts and titles. The disease-disease pairwise
semantic similarity parameters--computed using the degree of overlap in the
relevance ranks of phenotypes associated
with each disease--result in a similarity matrix, which HSG turn into a
disease–-disease network based on phenotypes. To do this, they only choose from
the top-ranking 0.5$\%$ of disease–-disease similarity values. Phenotypes
associated with diseases, and corresponding scoring functions (such as
the NPMI), exist in the file ``doid2hpo-fulltext.txt.gz'' at
\url{http://aber-owl.net/aber-owl/diseasephenotypes}. In fact, 
this file
contains information about $N_{dis}$ diseases, and
the semantic relevance of each of the $N_{pheno}$ phenotypes to each disease,
as quantified by NPMI parameter values, in addition to other scores such as
$t$-scores and $z$-scores. In this file, $N_{dis}$ is 8676 and $N_{pheno}$ is
19323. In the phenotypic similarity network between diseases that HSG report,
diseases are the nodes, and the edge between two nodes exists in this
undirected graph, if the similarity between the nodes (diseases) is in the
highest-ranking 0.5$\%$ of the 38,688,400 similarity values. They remove all
self-loops and nodes with a degree of 0. Their network is
presented in \url{http://aber-owl.net/aber-owl/diseasephenotypes/network/}.
The network analysis was performed using standard softwares and they identify
multiple clusters in their network, with agglomerates of some clusters (of
diseases), found to correspond to known disease-classes. The ``Group
Selector'' function on their visualisation kit, allows for the identification
of 19 such clusters in their disease-disease network, with each cluster
corresponding to a disease-class. 
The sum of the number of nodes over their
identified 19 clusters, is 5059. The number of edges in their network is
reported to be 65,795. The average node degree is then about 26.2. We discuss
detailed comparison of our results to HSG's in the following subsection,
including comparison of HSG's and our recovery of
the relative number of nodes i.e. diseases, in each of the 19
disease classes that HSG classify their reported network into, and our
computed ratios of the averaged intra-class to inter-class variance for     
each of the 19 classes, compared to the ROC Area Under Curve       
values reported by HSG for each class.

HSG's network then manifests a similarity-structure that is computed using 
available NPMI parameter values. Our interest is in learning the
disease-disease graphical model, with each edge of such a graphical model
learnt to exist at a learnt probability. We perform such learning using the
NPMI semantic-relevance data that is made available for each of the $N_{dis}$
number of diseases, by HSG--we refer to this data as the human
disease-phenotype data ${\bf D}_{DPh}$. Using ${\bf D}_{DPh}$, we first
compute the partial correlation between any pair of diseases, for each of
which, information on the ranked (semantic) relevance of each of the
$N_{pheno}$ phenotypes exist, in this given dataset. Upon computation of
pairwise partial correlations, the graphical model for the ${\bf D}_{DPh}$
data is learnt.

We compute the partial correlation $R_{ij}$ between the $i$-th and $j$-th
diseases in the ${\bf D}_{DPh}$ data, ($i,j=1,\ldots,N_{dis}$, $i\neq j$), in
the following way. We rank the NPMI parameter values for the $i$-th
disease and each of the $N_{pheno}$ phenotypes, with the phenotype of the 
highest semantic relevance to the $i$-th disease assigned a rank 1. Let the 
rank vector of phenotypes, by semantic relevance to the $i$-th disease take the value ${\bf{{\mathpzc{r}}_i}}$
and similarly, the rank vector of phenotypes relevant to the $j$-th disease
is ${\bf{{\mathpzc{r}}_j}}$. We compute the Spearman rank correlation ${\mathpzc{s}}_{ij}^{(rank)}$,
of vectors ${\bf{{\mathpzc{r}}_i}}$ and ${\bf{{\mathpzc{r}}_j}}$. Then we
compute the partial correlation $R_{ij}$ $\forall\:
i,j=1,\ldots,N_{dis};\:i\neq j$, between
the $i$-th and $j$-th nodes of our undirected graph, using the computed
values of the Spearman rank correlation in $\{{\mathpzc{s}}_{ij}^{(rank)}\}$. It is useful to define the partial correlation using the Spearman rank correlation, rather than the correlation between the vector of normalised NPMI values, since we intend to correlate the $i$-th disease with the $j$-th disease depending on how relevant a given list of phenotypes is, to each disease, i.e. depending on the ranked relevance of the phenotypes. 

To learn the graphical model given this partial correlation structure in
$\bR=[R_{ij}]$ (that is itself computed from the data ${\bf D}_{DPh}$), in the
previous sections, we have delineated an MCMC-based inference strategy, that
helps us learn the edge parameters, as well as the variance of the
likelihood. However, the data that we want to learn the graphical model for,
is so highly multivariate--i.e. there are so many edges in the proposed
graph--that we forego iterating over the multiple samples of edge and
variance parameter values, and compute the graphical model for this data, by
computing the posterior probability for each edge, given the computed partial
correlation structure. In fact, the graphical model of data ${\bf D}_{DPh}$ that we present, comprises only those edge parameters, the posterior
probability of which exceeds 0.9.

Here, the posterior probability density of the edge $G_{ij}$ (=0 or 1) between
the $i$-th and $j$-th diseases, is proportional to the likelihood and prior:
$$\pi(G_{ij}\vert R_{ij}) \propto \ell(G_{ij}\vert R_{ij})\pi_0(G_{ij}),$$
where the prior on $G_{ij}$ is $Bernoulli(0.5)$ $\forall i,j$, and the
likelihood is the Normal likelihood that we chose to work with in our learning, as discussed before in Section~\ref{sec:graph}, i.e.
likelihood given $\bR=[R_{ij}]$ is  
$$
\displaystyle{
\prod\limits_{i\neq j; i,j=1}^{N_{dis}} \frac{1}{\sqrt{2\pi}\sigma_{ij}}
\exp\left[-\frac{\left(G_{ij} - R_{ij}\right)^2}{2\sigma_{ij}^2} 
-\frac{\left(G_{ij} + R_{ij}\right)^2}{2\sigma_{ij}^2}\right]
},$$
where the variance parameters $\{\sigma_{ij}\}_{i\neq j;i,j=1}^p$ are
defined as $\sigma_{ij}^2 = R_{ij}(1- R_{ij})$.

\begin{definition}
  Our visualised graph is a sub-graph of the full graph
  ${\mathbb G}(N_{dis}, \bR)$ of data ${\bf D}_{DPh}$, the between-columns
  partial correlation matrix of which is $\bR=[R_{ij}], \:i\neq j, \;
  i,j=1,\ldots,N_{dis}$, such that this visualised graph is defined to
  consist only of edges in the set: $\bE^{/}:=\{G_{ij}=1\vert
  \pi(G_{ij}\vert R_{ij})\geq 0.9; \:i\neq j, \;
  i,j=1,\ldots,N_{dis}\}$. This visualised graph has 6052 number of nodes
  (diseases) and 145210 edges, so that the average node degree is
  about 24. It is a random undirected graphical model and represents our learning of the human disease
  phenotype graph (displayed in Figure~\ref{fig:disgraph}).
\end{definition}  

\subsection{Comparing our results to the earlier work done on the human disease-symptom network}
\noindent
The ``Group
Selector'' function on the visualisation kit that HSG use, allows for the identification
of 19 such clusters in their disease-disease network, with each cluster
corresponding to a disease-class. 
This function also allows identification of
the number of diseases (i.e. nodes) in each disease-class (see left panel of
Figure~\ref{fig:compgraph}). The right
panel of Figure~\ref{fig:compgraph} displays the ratio of intra-class variance 
to the inter-class variance of each disease-class; the value of the area under
the Receiver Operating Characteristic curve (ROCAUC) for each cluster is
opverplotted, where the ROCAUC value for the $i$-th cluster can be
interpreted as the probability that a randomly chosen node is ranked as more
likely to be in the $i$-th class than in the $j$-th class, with $i\neq j;\:
i,j=1,\ldots,19$ \ctp{caspian}.

\begin{figure}[!hbt]
\centering
\includegraphics[width=13cm,height=6cm]{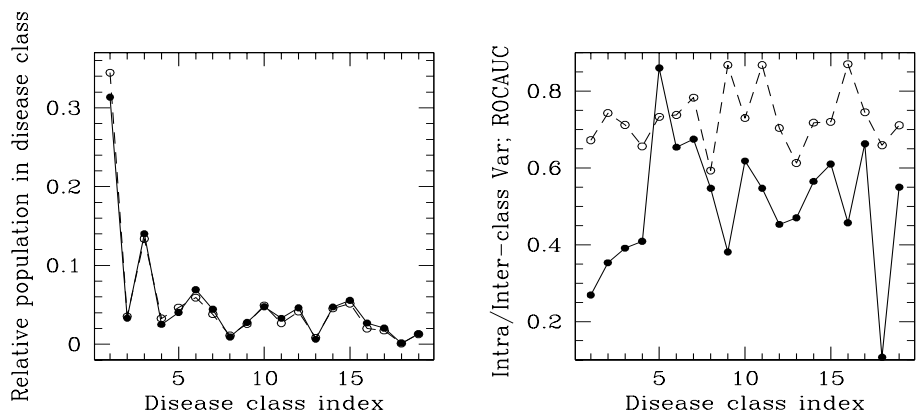}
\vspace{-.5cm}
\caption{{\it Left:} comparison of the
  relative number of nodes (diseases) that we recover in each of the 19
  disease classes that HSG classify their reported network to
  be classified into, with the relative class-membership reported by HSG. Our
  results are shown as filled circles joined by solid lines. In open circles
  threaded by broken lines, we overplot the relative
  number of diseases in each of the 19 classes, as reported by HSG. Similarity of the relative
  populations in the different disease classes, indicate that our learnt
  clustering distribution is similar to that obtained by HSG. {\it Right:}
our computed ratios of the averaged intra-class to inter-class variance for
each of the 19 classes, shown in filled circles; the ROC Area Under Curve
values reported by HSG for each class, is overplotted as open
circles joined by broken lines. The disease class
indices, from assigned values of 1 to 19, are the following respectively:
cellular proliferation diseases, integumentary diseases, diseases of the
nervous system, genetic diseases, diseases of metabolism, diseases by
infectious agents, diseases of mental health, physical disorders, diseases of
the reproductive system, of the immune system, of the respiratory system, of
the muscleoskeletal system, syndromes, gastrointestinal diseases,
cardiovascular diseases, urinary diseases, viral infections, thoracic
diseases, diseases of the endocrine system.}
\label{fig:compgraph}
\end{figure}

\section{Conclusion}
\noindent
In this work, we present a methodology that allows for the Bayesian learning
of the inter-column correlation of a rectangularly-shaped dataset, along with
uncertainties, and this in turn allows for the learning of the
with-uncertainties graphical model of such data, to then ultimately permit
computing the distance between a pair of such learnt graphical
models, of respective datasets. This novel, eventual computation of the
inter-graph distance--or rather of the distance between the posterior
probability of the graphs given the data--is important in the sense that it
informs on the correlation between datasets that are higher-dimensional
than being rectangularly-shaped, eg. correlation amongst slices of
rectangularly-shaped data, that together comprise a cuboidally-shaped dataset,
where each such rectangular slice of data is generated under distinct
experimental conditions. Then, the distance between the graphical models of a
pair of such slices of data, will inform us about the correlation between such
slices of data. Such information is easily calculable under the approach
discussed herein, even when the datasets are differently sized, and highly
multivariate. One example of such a situation could be a large network
observed on a sample of size $n_1$ before an intervention/treatment, and after
the implementation of such intervention, when a smaller sample (of size $n_2$;
$n_2\neq n_1$) is investigated. We illustrate the application of this method
on computing the distance between the uncertainty-accompanied, learnt
vino-chemical graphical models of Portuguese red and white wine
samples. Importantly, this example demonstrates that the two strands of this
work--namely learning graphical models with uncertainties, and computing
inter-graph distance--are indeed integrated.

This Bayesian approach allows for acknowledgement of errors of measurement
of any observable. The effect of ignoring such existent measurement errors, on
the learning of the between-columns correlation matrix, and ultimately on the
graphical model, is demonstrated using a simple, low-dimensional
simulated dataset (see Section~1.2 of the Supplementary Material). Even in such
a low-dimensional example, the difference made to the inferred graph of the
given data, by the inclusion of measurement errors, is clear.

Interestingly, we do not need to resort to the assumption of
decomposability in the MCMC-based inference that we use; to be precise, inference is performed with
Metropolis-within-Gibbs in which the correlation matrix is
first updated given the data, and the graph is then updated at the freshly
updated correlation, where we employ the closed-form likelihood for the
between-column correlation matrix, that we have achieved, (by marginalising
over all between-row correlation matrices). 

Our method is equally
capable of learning very large networks, as we have illustrated by undertaking
the learning of the human disease-symptom network (with $\geq$80,000 nodes).
When faced with the task of learning very large networks, i.e. a very
high-dimensional correlation matrix and a large number of edge parameters, we
can avoid undertaking the MCMC-based inference (that we adopt in general), as
long as the correlation structure is empirically known. This is often possible
when the problem of learning the correlation can be cast into a semantic
context--as was done in one of the applications that we considered, in
learning the very large human disease-symptom network that is marked by
disease-disease correlation in terms of the associated symptoms, ordered by
relevance. Other situations also admit such possibilities, for example, the
product-to-product, or service-to-service correlation in terms of associated
emotion, (or some other response parameter), can be semantically gleaned from
the corpus of customer reviews uploaded to a chosen internet facility, and the
same used to learn the network of products/services. Importantly, this method
of probabilistic learning of small to large networks, is useful for the
construction of networks that evolve with time, i.e. of dynamic networks.

\begin{supplement}
\sname{Supplement A}\label{suppA} 
\stitle{Supplementary Section for ``Learning of Correlation Structure $\&$ Random Graphs along
  with Uncertainties, to Compute Inter-Graph Distance''}
\sdescription
{All content of the supplementary material are referred to at
  relevant points in the text above.}
\end{supplement}

\renewcommand\baselinestretch{1.}

\newpage

\begin{center}
{\bf{\LARGE{Supplementary Section for ``Correlation between Multivariate Datasets, from Inter-Graph Distance computed using Graphical Models Learnt With Uncertainties''}}}
\end{center}


\noindent
Throughout, we refer to our main manuscript as WC.

\section{Empirical illustration: simulated data}
\label{sec:toy}
\noindent
The simulated data that we use in this section, is a 5-columned data set
$\bD_{orig}$ ($p$=5) with number of rows $n_{orig}=4000$, where
$\bD_{orig}$ is simulated to bear a chosen between-columns correlation matrix
$\bSigma_C^{(true)}$ that is given as:
\begin{equation}
\begin{pmatrix}
1 & 0.9914 & -0.8964 & 0.02526     & 0.0656\\
  & 1      & -0.8916 & 0.01981     & 0.6647\\
  &        & 1       & -0.009747   & -0.06140\\
  &        &         & 1           & 0.03622\\
  &        &         &             & 1
\end{pmatrix} \nonumber 
\end{equation}
which when inverted, allows for the computation of the empirical partial correlation matrix, following Equation~2.6 of WC (equation that gives the posterior of the between-columns correlation matrix given the data). This empirical partial correlation matrix is $\bR^{(true)}$:
\begin{equation}
\begin{pmatrix}
1 & 0.9574 & -0.2114 & 0.004786     & 0.005037\\
  & 1      & -0.04897 & 0.03900     & 0.01206\\
  &        & 1       & 0.02736     & -0.006288\\
  &        &         & 1           & 0.03527\\
  &        &         &             & 1
\end{pmatrix} \nonumber 
\end{equation}

We randomly sample $n$ (=300 typically) rows from this
simulated data set $\bD_{orig}$, to define our toy data set ${\bf D}_T$, that
we will implement in our method, to
\begin{enumerate}
\item[--] learn the between-columns correlation matrix
  $\bSigma_C^{(S)}=[S_{ij}]_{i=1;j=1}^{n,p}$ given the standardised version
  ${\bf D}_T^{(S)}$ of ${\bf D}_T$, and thereafter, learn the graphical model
  of data ${\bf D}_T^{(S)}$, as defined in Definition~2.1 of WC with $p$=5
  and partial correlation matrix $\bR=[R_{ij}]_{i=1;j=1}^{n,p}$, where
  elements of $\bR$ are computed using the learnt $\bSigma_C^{(S)}$ in Equation~2.6 of WC (posterior of between-columns correlation matrix given data). 
Here ${\bf D}_T^{(S)}$ comprises $n$ simulated values of the variables $Z_1,\ldots,Z_5$. 
\item[--] perform model checking using ${\bf D}_T^{(S)}$. To be precise, we
  predict the distribution of $Z_i$ when in the identified test data, $Z_{j}$ is restricted to take values
  in the chosen, narrow interval $[z_{j}^{(0)}-\delta_{j},
  z_{j}^{(0)}+\delta_{j}]$, for $j\neq i;\: i,j=1,\ldots,5$--and then compare
  the empirical distribution of $Z_i$ in the test data, with the posterior
  predictive distribution of $Z_i$, given the correlation matrix learnt using
  ${\bf D}_T^{(S)}$. Also, given ${\bf D}_T^{(S)}$ and $Z_{j}$, we perform
  MCMC-based sampling from the joint posterior of $\{Z_i\}_{i=1;i\neq
    j}^{i=p}$ and $\bSigma_C^{(S)}$. This is discussed in Section~1 of the
  Supplementary Section.
\item[--] learn the correlation matrix and graphical model of the data, where a chosen measurement error is placed on $Z_i$, $i=1,\ldots,p$; the unknown variance $v_{\epsilon_i}$ of this error density is also learnt. 
\end{enumerate}
Plots of $Z_i$ against $Z_1$ are included in Figure~\ref{fig:toydata_dat}; $i=2,3,4,5$.

\begin{figure}[!ht]
\centering
{\includegraphics[width=15cm,height=4cm]{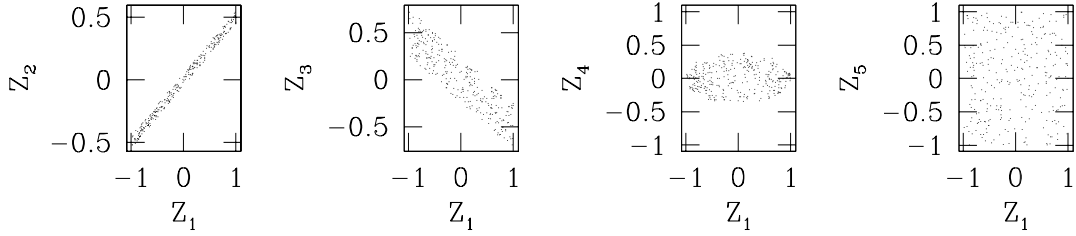}}
\caption{Plots of $Z_i$ against $Z_1$ in the standardised version of the toy
  data ${\bf D}_T^{(S)}$ simulated to bear the empirical column-correlation
  matrix $\bSigma_C^{(true)}$; here $i=2,3,4,5$. The toy data ${\bf D}_T^{(S)}$
  that we use in our work, comprises $n$ measurements of the variables
  $Z_1,...,Z_5$, with a typical $n$ of 300.}
\label{fig:toydata_dat}
\end{figure}

\subsection{Learning correlation matrix $\&$ graph given toy data ${\bf D}_T^{(S)}$}
\noindent
We learn the between-columns correlation matrix $\bSigma_C^{(S)}$ given the standardised toy data ${\bf D}_T^{(S)}$ by employing the algorithm discussed in Section~2 of WC. We use $n=300,\:p=5$, and with the aim of estimating the normalisation ${\hat{c}}_t$ of the posterior in the $t$-th iteration, we choose $K=20$ number of sampled data sets with $n^{/}$ rows and $p$ columns, generated in each iteration, to bear the column-correlation matrix proposed in that iteration. Indeed, we set $n^{/}=n$. Here $t=0,\ldots,N$.

\begin{figure}[!t]
\centering
\includegraphics[width=12cm,height=15cm]{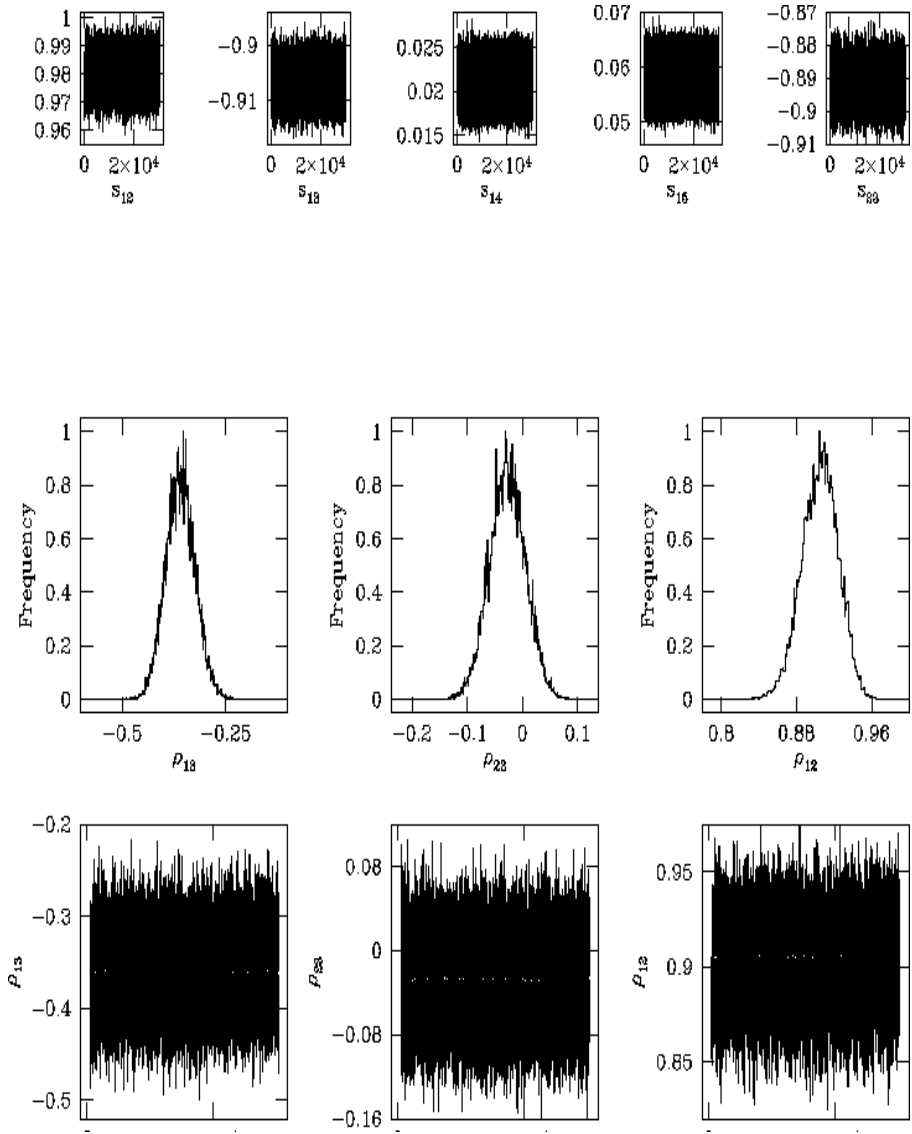}
\caption{Figure showing traces and marginal posterior probability densities
  (as histograms) of elements of the correlation matrix $\bSigma_C^{(S)}$, and
  partial correlation matrix $\bR$,  learnt given the toy data ${\bf
    D}_T^{(S)}$, in our method in which the data is modelled using a
  matrix-variate Gaussian Process, and the likelihood obtained by
  marginalising over the between-row correlation matrix. The top panel
  displays traces of the five correlation parameters
  $s_{12},s_{13},s_{14},s_{15},s_{23}$ given this toy data. The lower-most
  panel displays traces of the partial correlation parameters $\rho_{12}$,
  $\rho_{13}$, $\rho_{23}$, computed using correlation matrix
  $\bSigma_C^{(S)}$ learnt given ${\bf D}_T^{(S)}$, in
  Equation~2.6 of WC. The middle panel presents the marginals of these partial correlation parameters as histograms. }
\label{fig:toydata}
\end{figure}

In the $t$-th iteration of our MCMC chain, the first block update in our Metropolis-within-Gibbs inference scheme, leads to the updating of the column correlation matrix to $\bSigma_t$ given the data ${\bf D}_T^{(S)}$, using which we compute the value of the partial correlation matrix $\bR_t=[\rho_{ij}^{(t)}]$ in this iteration. Then the second block update leads to the updating of the values of the binary graph edge parameters to $g_{ij}^{(t)}$ and variance parameters to $\sigma_{ij}^{(t)}$, given $\bR_t$. Traces of the marginal posterior probability of five of the $S_{ij}$ parameters given data ${\bf D}_T^{(S)}$ are shown in the top left panel Figure~\ref{fig:toydata}, while the joint posterior of all $G_{ij}$ and $\sigma_{ij}$ parameters given the learnt partial correlation matrix, is shown in the top left panel Figure~\ref{fig:toydata_var}. Histograms representing approximations of marginals of individual $R_{ij}$ and $\sigma_{ij}$ parameters, given the data and the learnt partial correlation respectively, occupy other panels of Figure~\ref{fig:toydata} and Figure~\ref{fig:toydata_var} respectively. 
Here $i < j; i,j=1,\ldots,p$. 

\begin{figure}[!ht]
\centering
\includegraphics[width=8cm,height=8cm]{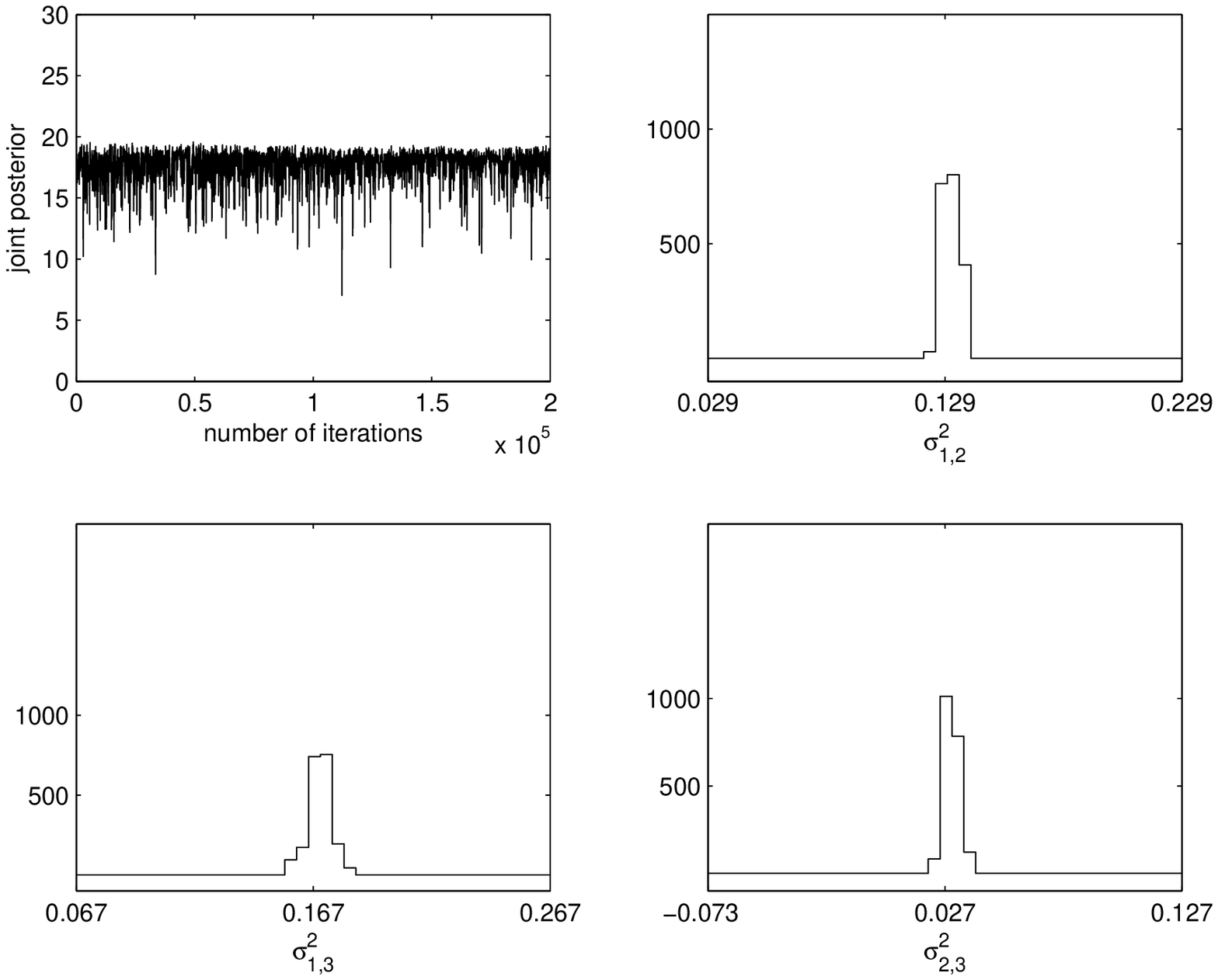}
\caption{{\it Top left:} trace of joint posterior probability density of the graph edge parameters $g_{ij}$ and variance parameters $\sigma^2_{ij}$, given the partial correlation matrix learnt in the first block update of our Metropolis-within-Gibbs inference scheme, given the 5-columned toy data set ${\bf D}_T^{(S)}$. {\it Other panels:} histogram approximations to the marginal posterior probability density of three of the variance parameters.}
\label{fig:toydata_var}
\end{figure}

The graphical model of the data ${\bf D}_T^{(S)}$ is presented in Figure~\ref{fig:toydata_graph}. The fraction $n_{ij}$ of post-burnin samples of $g_{ij}$ with a value of 1, i.e. an approximation to the probability of existence of the edge joining nodes $i$ and $j$, is marked next to each edge of the graph, as long as $n_{ij} \geq 0.05$, i.e. the edge probability parameter $\phi_{ij}(R_{ij})$ is non-zero. 

\begin{figure}[!ht]
\centering
\includegraphics[width=8cm,height=8cm]{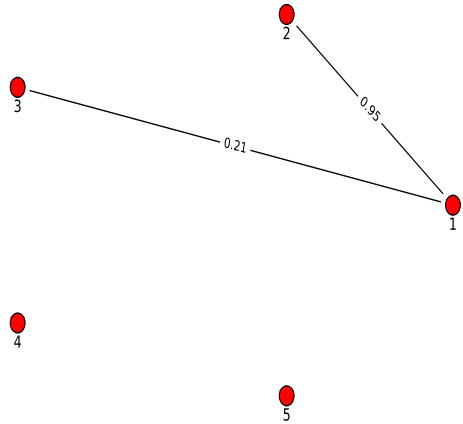}
\vspace{-.1cm}
\caption{Figure showing graphical model of toy data ${\bf D}_T^{(S)}$--learnt in our Metropolis-within-Gibbs inference scheme in which we learn the correlation matrix $\bSigma_C^{(S)}$ of the data, simultaneously with the graph. The observables $Z_1,...Z_5$, measurements of which comprise the data, are marked by filled red circles, as the 5 nodes in this graph. The probability of the edge parameter $g_{ij}$ to exist (i.e. for $g_{ij}$ to be 1)--$i\neq j$, $i,j=1,\ldots,5$--is approximated by the fraction $n_{ij}$ of post-burnin iterations in which the current value of $g_{ij}$ is 1. This value of $n_{ij}$ is marked against the edge joining the $i$-th and $j$-th nodes, as long as $n_{ij} > 0.05$. }
\label{fig:toydata_graph}
\end{figure}

We note that the column correlation matrix $\bSigma_C^{(S)}$ of the Gaussian Process that models the data, is such that the partial correlation $\rho_{12}$ between $Z_1$ and $Z_2$ is learnt to be in the 95$\%$ HPD credible region of $\in[0.86, 0.95]$ approximately, which is close to the empirical value of 0.96. Again, the empirical value of $\rho_{13}$ is about -0.2, and the learnt value is $\in[-0.44, -0.27]$ approximately; empirical value of $\rho_{23}$ is about 0.04, and the learnt value is $\in[-0.11, 0.05]$ approximately. The other partial correlation parameters have smaller values in the chosen correlation structure that the data is simulated to bear--each of which is close to the corresponding learnt value. This offers confidence in our method of learning the correlation matrix $\bSigma_C^{(S)}$ of the standardised toy data ${\bf D}_T^{(S)}$.

\subsection{Incorporating measurement uncertainties in the learnt graphical model}
\label{sec:noise}
\noindent
If measurement errors affect the values of the $i$-th component $Z_i$
of the $p$-dimensional vector-valued observable $\bZ$, where measurements of $Z_i$ comprise the $i$-th column of data ${\bf D}_S$,
($i=1,\ldots,p$), the variance of the probability distribution of
such errors--if unknown--can be learnt given the data. So let the
error in $Z_i$ be $\epsilon_i$ that we assume is Normally distributed with
variance $v_{\epsilon_i}$,
i.e. $\epsilon_i\sim{\cal N}(0,v_{\epsilon_i})$. Then if the unknown
error variance $v_{\epsilon_i}$ is proposed in the $t$-th iteration of
our MCMC chain to be $v_{\epsilon_i}^{(t\star)}$, the correlation
$s_{ij}^{(t\star)}$ has to be adjusted by the factor
$1/\sqrt{1+v_{\epsilon_i}^{(t\star)}}$, $\forall j\neq i$.

So, in the presence of measurement error in $X_i$, the absolute value of the correlation $s_{ij}$ between $Z_i$ and $Z_j$ decreases (by a factor of $\sqrt{1 + v_{\epsilon_i}}$ in the model in which variances add linearly). 

\begin{figure}[!hbt]
\centering
\includegraphics[width=13cm,height=10cm]{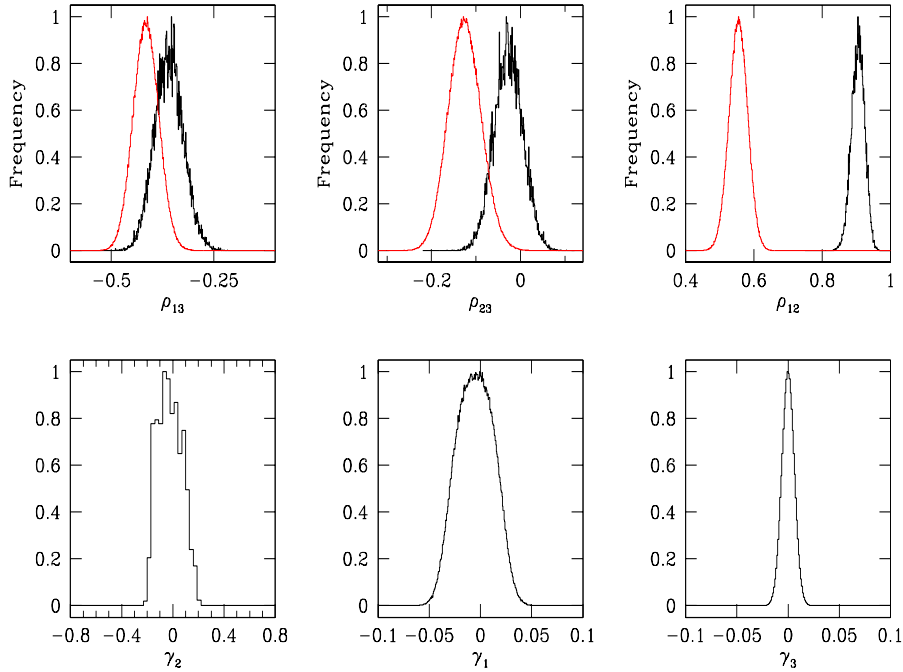}
\caption{{\it Top panels:} comparison of histogram representation (in black)
  of the marginal posterior density of some partial correlation parameters
  ($\rho_{ij}$) learnt given toy data ${\bf D}_T^{(S)}$, with the marginals
  (in grey, or red in the electronic version), of the same parameter, learnt 
given the data ${\bf D}_T^{(err)}$, which differs from ${\bf D}_T^{(S)}$, in only that Gaussian errors of variance 0.01 are imposed on the variable $Z_2$. i.e. the 2nd component of the 5-dimensional observable vector $(Z_1,Z_2,Z_3,Z_4,Z_5)^T$, measurements of which comprise the data. Here $i,j=1,...,5; i\neq j$. From left to right, are presented the results for $\rho_{12}, \rho_{13}$ and $\rho_{23}$. 
{\it Lower panels:} histogram representations of the standard deviation
$\gamma_i$ of the error density in the measurement of $Z_i$, learnt using data ${\bf D}_T^{(err)}$, for $i=2,1,3$ from the left to the right panels, where in this data, $Z_2$ is the only one of the 5 variables that has an error (of standard deviation 0.1) imposed on it. 
} 
\label{fig:toydata_err}
\end{figure}

On the other hand, the partial correlation $\rho_{ij}$ may increase or
decrease \ctp{liu}. That such is a possibility, is corroborated in
the correlation and partial correlation structures of an example data set that
comprises measurements of a 3-dimensional observable vector
$(Z_1,Z_2,Z_3)^T$. Then, 
$\rho_{ij}=\displaystyle{\frac{s_{ij}-s_{ik}s_{jk}} {\sqrt{(1-s_{ik}^2)(1-s_{jk}^2)}}}$,
$i\neq j, i\neq k, k\neq j; i,j,k=1,2,3$. It follows that if
$\vert{s_{ij}}\vert$ and $\vert s_{ik}\vert$ decrease, $\rho_{ij}$ can
either increase or decrease. But $\rho_{ij}$ is the probability for
the edge between the $i$-th and $j$-th nodes of the graph of this
data, to exist, i.e. $\rho_{ij}=\Pr(g_{ij}=1)$.  Then it is possible
that while in the absence of measurement errors, $g_{ij}=1$ during a
fraction $n_{ij} <0.05$ of the number of post-burnin iterations, in the
presence of measurement error in $X_i$, $\rho_{ij}$ increases sufficiently to
ensure that the fraction of iterations during which this edge exists
is in excess of 0.05. If this happens, the edge between the
$i$-th and $j$-th nodes will be included in the graphical model of the
data when measurement error in $X_i$ is acknowledged, but not when
such error is not. In other words, ignoring measurement uncertainties
can lead to a potential misrepresentation of the graphical model of
the data at hand.

\begin{figure}[!ht]
\centering
\includegraphics[width=8cm,height=8cm]{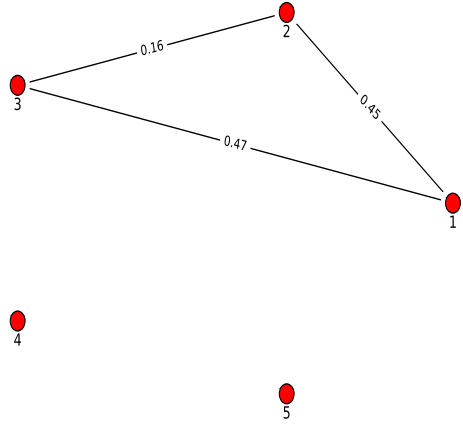}
\vspace{-.7cm}
\caption{Figure showing graphical model of data ${\bf D}_T^{(err)}$ that
  differs from the toy data ${\bf D}_T^{(S)}$ only in that Gaussian errors
  (with variance 0.01) are added to the 2nd column of ${\bf D}_T^{(S)}$, to
  realise ${\bf D}_T^{(err)}$. The inclusion of measurement noise in this
  column of the toy data is noted in the learnt graphical model of the
  resulting error-bearing data ${\bf D}_T^{(err)}$, which manifests the edge
  between variables $Z_2$ and $Z_3$, while this edge is absent in the graphical
  model of the error-free data ${\bf D}_T^{(S)}$; see Figure~\ref{fig:toydata_graph}. }
\label{fig:toydata_graph_err}
\end{figure}

In our work, it is possible to produce graphs while ignoring, as well as acknowledging the measurement uncertainty in one or more components of the $p$-dimensional observable vector, $n$ measurements of which results in the rectagularly-shaped data at hand. In fact, it is also possible to learn the variance of the error density of the components of this obsrvable. We demonstrate this in the experiment discussed here.
  
In this implementation, we add measurement error to the 2nd component $X_2$ of
the 5-dimensional observable vector, $n$ standardised measurements of which
comprise data ${\bf D}_T^{(S)}$. We choose to impose Gaussian measurement errors on $Z_2$,
s.t. this Gaussian error density is $\epsilon_2\sim{\cal N}(0,0.01)$.  We then
define a data set that is the same as ${\bf D}_T^{(S)}$, except that the 2-nd
column of this data is now sampled from a Gaussian with zero mean and variance
given by 1+0.01, i.e. sampled from the convolution of a standard Normal, with
the density ${\cal N}(0,0.01)$. The resulting data set is referred to as ${\bf
  D}_T^{(err)}$. Thus, the true value of the variance $v_{\epsilon_2}$ of the
2nd column of the data ${\bf D}_T^{(err)}$ is 0.01. We will treat this
variance as an unknown and in fact, learn this value using ${\bf D}_T^{(err)}$.

We learn the column-correlation matrix of this data using the method
delineated in Section~2 of WC, using an MCMC chain that we
run with this data ${\bf D}_T^{(err)}$. The only exception to the
method of learning the $s_{ij}$ parameters is that the correlation
between the $Z_i$ and $Z_j$ is given by
$\displaystyle{\frac{s_{ij}}{\sqrt{(1 + v_{\epsilon_i})(1 +
      v_{\epsilon_j})}}}$ in the model in which the variances are
assumed to add linearly; $i\neq j; i,j=1,\ldots,p$. Thus, in addition
to the $p(p-1)/2$ number of $s_{ij}$ parameters, we now also learn the
$p$ number of $v_{\epsilon_i}$ parameters, where the latter is the
variance of the error distribution of $Z_i$. We actually learn the
standard deviation of the error density on $Z_i$, namely $\gamma_i$,
i.e. $v_{\epsilon_i} = \gamma_i^2$. In the $t$-th iteration, we propose
$\gamma_i$ from a Gaussian proposal density that has the mean given by
the current value of the parameter in this iteration, and an
experimentally chosen variance. Here $t=0,\ldots,N$. This is
undertaken $\forall i=1,\ldots,p$. The $S_{ij}$ parameters are
always proposed from Truncated Normal proposal densities that are
left and right truncated at -1 and 1 respectively and have mean given
by the current parameter value, while the variance is fixed. Then the
correlation parameters that define the correlation matrix in the
$t$-th iteration, are $s_{ij}^{(t\star)}/{\sqrt{(1 +
    (\gamma_{\epsilon_i}^{(t\star)})^2)(1 + (\gamma_{\epsilon_j}^{(t\star)})^2)}}$,
$i\neq j; i,j=1,\ldots,p$. We use Gaussian priors on the $S_{ij}$
parameters, where such a Gaussian is centred on the empirical
correlation between $Z_i$ and $Z_j$ in the data, while uniform priors
are used on all other parameters. Using the proposed and current
correlation matrices in our Metropolis-Hastings inferential scheme, we
compute the marginals of the individual $S_{ij}$ parameters as well as
the $\gamma_i$ parameters ($\gamma_i^2 = v_{\epsilon_i}$).

Histogram representations of the marginals (normalised to 1 at the mode), of
some of these parameters are displayed in Figure~\ref{fig:toydata_err}. The
95$\%$ HPD credible region on $\gamma_2$ that we learn given this data is
[-0.2,0.2] approximately. The learnt standard deviations of the error
densities of variables other than $Z_2$, are 0 approximately. We also note
from this figure that the changes in the partial correlations introduced by
the introduction of the measurement error in one variable, can be both an
increase and decrease--this is discussed above. The effect on introducing this
measurement error on $Z_2$, on the graphical model of the data ${\bf
  D}_T^{(err)}$, is presented in Figure~\ref{fig:toydata_graph_err}. In this
graphical model, the edge $G_{23}$ between the 2-nd and 3-rd nodes takes the
value 1, with probability of about 0.16, while $n_{23}$ was less than 0.05 in
the graphical model of data ${\bf D}_T$--which differs from ${\bf
  D}_T^{(err)}$ only in that the 2nd column is imposed with a Gaussian error
of variance 0.01. Thus, the effect of introducing this error to measurements of
the variable $Z_2$ propagates into the (partial) correlation structure of the
data, to then affect the graphical model. Comparing this learnt graph to the graph
of the toy data ${\bf D}_T^{(S)}$, we recognise that measurement errors can
distort the graphical model of a data.

\section{Model checking}
\noindent
In the Section~2 of WC, we discussed the learning of
$\bSigma_C^{(S)}$ using the $n$ rows of the standardised toy data
${\bf D}_T^{(S)}$, which is a 300-row subset from the 5-columned
simulated dataset $\bD_{orig}$, discussed in the previous section,
where $\bD_{orig}$ is generated to abide by a chosen correlation
matrix $\bSigma_C^{(true)}$ that is defined above in
Section~\ref{sec:toy}. Then ${\bf D}_T^{(S)}$ comprises 300 different
measurements of the 5-columned vector $\bZ:=(Z_1, Z_2, Z_3, Z_4,
Z_5)^T$, where $Z_i$ is a standardised variable $i=1,\ldots,5$.
Having learnt the parameters of the Gaussian Process in
Section~\ref{sec:toy}--of which the standardised observable
$\bZ\in{\mathbb R}^p$ is a realisation--here we want to predict values
of $Z_i$ for values of $Z_j$ as given in a new or test data, ($j\neq
i;\: i,j=1,\ldots,p$); for our purposes, $p$=5. This test data ${\bf
  D}_{test}$ is built to be independent of the training data ${\bf
  D}_T^{(S)}$, as $q$ rows of the standardised version of the bigger
data set ${\bf D}_{orig}$--of which ${\bf D}_T^{(S)}$ is also a
subset--although the $q$ rows of ${\bf D}_{orig}$ that comprise ${\bf
  D}_{test}$, are chosen as distinct from the $n$ rows of the training
data ${\bf D}_{T}^{(S)}$. Our standardised test data ${\bf D}_{test}$
has $p=5$ columns and $q$ rows; in fact, we set $q=n$. We will predict
$Z_2, Z_3, Z_4$ at each of the known $q$ (=$n$) values of $Z_1$ in the
test data ${\bf D}_{test}$, given the GP parameters (i.e. the
between-columns covariance matrix $\bSigma_C^{(S)}$) that we learn
using the training data. No prediction of $Z_5$ is undertaken. In
fact, we will sample from the posterior predictive density of
$Z_2,Z_3,Z_4$, given the correlation matrix learnt using training data
${\bf D}_T^{(S)}$, and values of $Z_1$ in the test data ${\bf
  D}_{test}$. We compare the predicted values of $Z_2,Z_3,Z_4$ against
their empirical values in the test data. Such a comparison constitutes
the checking of our models s well as the results (of the learning of
$\bSigma_C^{(S)}$ given the training data ${\bf D}_{T}^{(S)}$). We
clarify this prediction now.

As we learn the marginal posterior probability density of each correlation
parameter $S_{ij}$ given ${\bf D}_{T}^{(S)}$, we need to choose a summary of
this marginal distribution, at which the prediction of the $z_{ik}$ is
undertaken, $i=2,3,4$, $k=1,\ldots,n$. We choose the mode of the marginal as
this summary. Denoting the value of $Z_i$ in the $k$-th row of the test data
as $z_{ik}$, ($k=1,\ldots,q=n$), we undertake the learning of
$\{z_{2k},z_{3k},z_{4k}\}_{k=1}^n$ in the test data ${\bf D}_{test}$, given
values of $\{z_{1k}\}_{k=1}^n$ in ${\bf D}_{test}$ and the modal values of
$S_{ij}$ learnt using the training data ${\bf D}_{T}^{(S)}$.  In our Bayesian,
MCMC-based inferential approach, this learning is equivalent to sampling from
the posterior predictive of the unknowns, i.e. performing MCMC-based posterior
sampling from
$$\pi(z_{21},z_{31},z_{41},\ldots,z_{2n},z_{3n},z_{4n}\vert z_{11},\ldots,z_{1n}, s^{(M)}_{12}, \ldots, s^{(M)}_{1p},s^{(M)}_{23}, \ldots, s^{(M)}_{2p},\ldots, s^{(M)}_{p-1\:p}),$$
where $s^{(M)}_{ij}$ represents the modal value of the correlation parameter $S_{ij}$ that we learn given the training data ${\bf D}_{T}^{(S)}$. We define the learnt ``modal'' correlation matrix to be $\bSigma_C^{(M)}=[s^{(M)}_{ij}]$.

In the $t$-iteration, we propose a value $z_{ik}^{(t\star)}$ from a Gaussian
proposal density with mean given by the current value $z_{ik}^{(t-1)}$ of this
variable, and fixed variance $\nu_{ik}$, i.e. the proposed value is
$z_{ik}^{(t\star)}\sim {\cal N}(z_{ik}^{(t-1)}, \nu_{ik})$; we do this for
$i=2,3,4$ and $\forall k=1,\ldots,n$, at each $t=0,\ldots,N$. Then the
proposed data in the $t$-th iteration is ${\bf
  D}^{(t\star)}=\left(\bz_{1},\bz_{2}^{(t\star)}, \bz_{3}^{(t\star)},
  \bz_{4}^{(t\star)}, \bz_{5}\right)$, where
$\bz_{i}=(z_{i1},\ldots,z_{in})^T$, $i=1,\ldots,5$. The posterior of the unknowns is then
given as in Equation~2.8, with the data given by ${\bf
  D}^{(t\star)}$ and the modal correlation matrix given by $\bSigma_C^{(M)}$
learnt using the training data set ${\bf D}_{T}^{(S)}$. The normalisation of
the posterior is computed in the $t$-th iteration in the way described in
Section~2.3 of WC, at the $\bSigma_C^{(M)}$. We use uniform priors on all unknowns. So in
each iteration, we (use Random-Walk Metropolis to) sample from the 
posterior of the unknown variables, given $\bSigma_C^{(M)}$ and
the data on the $q=n$ number of $Z_{1}$ values in the test data
${\bf D}_{test}$. We implement such posterior sampling to compute marginal
predictive of each of the unknowns. We compare this marginal predictive of of
$Z_{2}, Z_{3},Z_{4}$, to the empirical distribution of $Z_{2},Z_3,Z_{4}$ in
the test data ${\bf D}_{test}$. We also compare the plots of the predicted
$Z_{i}$ and the known $Z_{1}$ values, to the corresponding plot of empirical
value of $Z_i$ and $Z_1$; $i=2,3,4$. The results of this comparison for $Z_2,
Z_3$ and $Z_4$ are included in Figure~\ref{fig:toydata_modchk1}.

Figure~\ref{fig:toydata_modchk1} shows that the plots of the predicted values of $Z_i$,
$i=2,3,4$, against $Z_1$ (in red filled circles in the electronic version, and
grey circles in the monochrome version), compare favourably--visually
speaking--to the plots of the empirical $Z_i$ (in the test data), against
$Z_1$. To be precise, the red (or grey) circles comprise predicted (or learnt)
pair ($z_{1k}, z_{ik}^{(mode)}$) for $k=1,\ldots,q=n$, where $z_{ik}^{(mode)}$
is the modal value of the marginal posterior density of $Z_{ik}$ given known
values of $Z_1$ in the test data, and the (modal) correlation matrix
$\bSigma_C^{(M)}$ (itself learnt given the training data). The black circles
represent the empirical values ($z_{1k}, z_{ik}$) for $k=1,\ldots,n$, i.e. the pair in the $k$-th row of the test data. We also plot the marginal of the learnt values of $Z_{i}$ given the data, superimposed on the frequency distribution of the empirical value of $Z_{i}$ in the test data--we do this for each $i=2,3,4$. Again, the overlap between the results is encouraging. Thus, the predictions offer confidence in our model, as well as the results of our learning of the correlation structure of the data. 


\begin{figure}[!h]
\centering
\includegraphics[width=13cm,height=8cm]{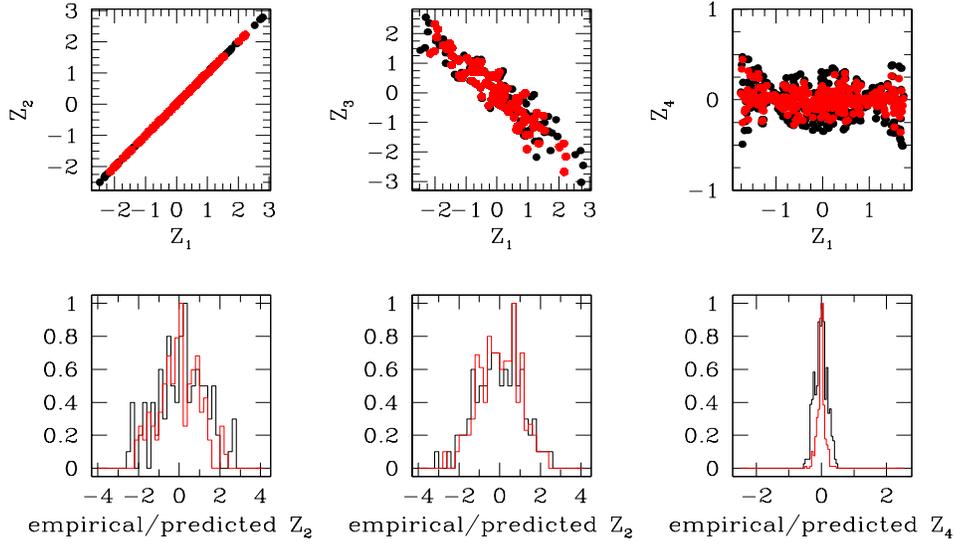}
\caption{{\it Top panels:} figures comparing plots of empirical and predicted
  values of $Z_i$ against values of $Z_1$, for $i=2,3,4$ moving from left to
  the right panel. Grey (red in the electronic version) circles depict pairs of
  $(z_{1k}, z_{ik})$ in the test data ${\bf D}_{test}$, while black circles
  depict $Z_{i}$ values learnt given the first column of the test data and the
  modal correlation matrix $\bSigma_C^{(M)}$ that is itself learnt using the
  training data set ${\bf D}_{T}^{(S)}$. {\it Lower panels:} marginal of $Z_{i}$ given 1st column of test data and $\bSigma_C^{(M)}$, plotted as a histogram in grey (or red in the electronic version), over its empirical distribution in black, i.e. the histogram of the $i$-th column of the test data. Here, $i=2,3,4$ as we move from left to right.
}  
\label{fig:toydata_modchk1}
\end{figure}

However, conditioning the posterior predictive of $Z_i$ on a summary--modal in
our earlier implementation--correlation matrix learnt given training data
${\bf D}_T^{(S)}$ is restrictive in that this approach ignores the learnt
distribution of the correlation matrices. After all, our learning of the
correlation matrix given ${\bf D}_T^{(S)}$ is MCMC-based, generating a value
of $\bSigma_C^{(S)}$ in each iteration. In light of this, the marginal
posterior of $Z_i$ obtained by marginalisation over the joint posterior
probability density of all unknown components of $\bZ$ and $\bSigma_C^{(S)}$
is a possibility. Thus, we learn
$\bSigma_C^{(S)}$ simultaneously with $Z_2,Z_3,Z_4$, i.e. the 2nd, 3rd and 4th columns of the test data, given the training data and the 1st column of the test data. We will then perform MCMC-based posterior sampling from the joint posterior probability density:
\begin{equation}
\pi\left(s_{12},\ldots,s_{1p},s_{23},\ldots,s_{2p},\ldots,s_{p-1\: p}, z_{21},\ldots,z_{2n},z_{31},\ldots,z_{3n},z_{41},\ldots,z_{4n}\vert z_{11}, z_{1n}, {\bf D}_T^{(S)}\right).
\label{eqn:jtmod}
\end{equation}
In order to implement this, we propose $z_{21}^{(t\star)},\ldots,z_{2n}^{(t\star)},z_{31}^{(t\star)},\ldots,z_{3n}^{(t\star)},z_{41}^{(t\star)},\ldots,z_{4n}^{(t\star)}$ in each of the $t$ iterations, $t=0,\ldots,N$. Each of these parameters is proposed from a Gaussian proposal density (with mean given by the current value and an experimentally chosen variance). At the same time, we propose the $s_{ij}$ parameters, $i\neq j$, $i,j=1,\ldots,p$ from a Truncated Normal proposal density, truncated at -1 and 1, with mean given by the current value of the parameter, and chosen variance. 

\begin{figure}[!ht]
\centering
\includegraphics[width=13cm,height=8cm]{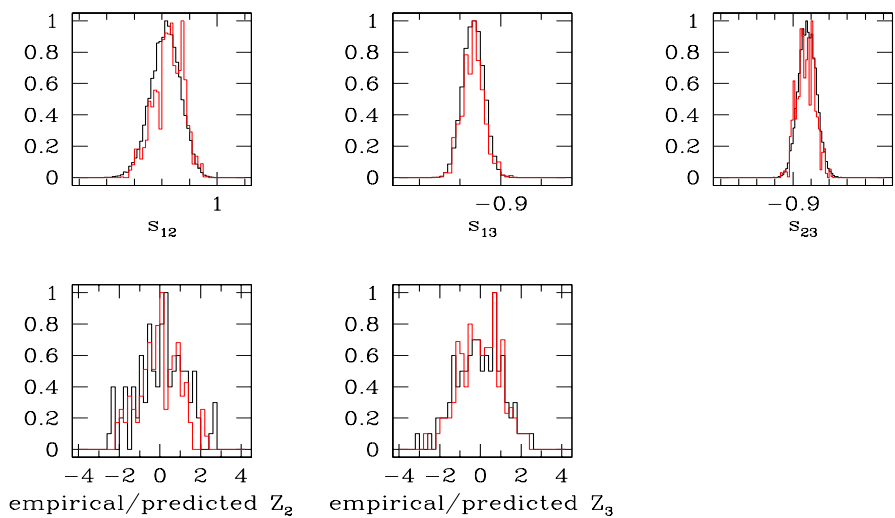}
\caption{{\it Top panels:} grey (red in the electronic version) coloured histograms represent the marginal posterior density of $S_{ij}$ learnt, (along with the $Z_{i}$ parameters; $i=2,3,4$), given the training data ${\bf D}_T^{(S)}$, and the known 1st column of the test data ${\bf D}_{test}$. This is compared to the marginal of $S_{ij}$ learnt (when the column-correlation matrix is learnt alone), given training data--presented as the histograms in black. Panels from left to right correspond to the results for $S_{12}$, $S_{13}$ and $S_{23}$ respectively. The lower panels present the comparison between the empirical distribution of the $i$-th column of the test data ${\bf D}_{test}$--in black--and the joint posterior of $Z_{i}$, (learnt along with the $S_{ij}$ parameters), given ${\bf D}_T^{(S)}$, and the 1st column of ${\bf D}_{test}$, (in grey, or red in the electronic version). Here $i=2$, in the bottom left panel and $i=3$ in the right.} 
\label{fig:toydata_modchk2}
\end{figure}

For this implementation, at the $t$-th iteration, we need to define the
augmented data ${\bf D}_A^{(t\star)}$, which is the training data ${\bf
  D}_T^{(S)}$, augmented by the data set ${\bf D}^{(t\star)}$ proposed in the
$t$-th iteration, (defined above), where the 1st and 5th columns of ${\bf
  D}^{(t\star)}$ are the known 1st and 5th columns of the test data ${\bf
  D}_{test}$, and the $i$-th column is the proposed vector
$(z_{i1}^{t\star},\ldots,z_{in}^{t\star})^T$, $i=2,3,4$. Thus, as the proposed
${\bf D}^{(t\star)}$ varies from one iteration to the next, the augmented data
${\bf D}_A^{(t\star)}$ also varies. This augmented data then has $p$ columns
nd $n+q$ rows, i.e. $2n$ rows, given our choice of $q=n$. In the $t$-th
iteration, the posterior probability density of the unknowns given this augmented data
${\bf D}_A^{(t\star)}$ is computed, using the posterior defined in
Equation~2.8 of WC in which the generic data ${\bf D}_S$ is now replaced by ${\bf D}_A^{(t\star)}$. While we impose uniform priors on the $z_{ik}$ parameters, we place Gaussian priors on $s_{ij}$, with such a prior centred at the empirical value of the correlation between the $i$-th and $j$-th columns of the data, ($i,j=1,\ldots,p$); the variance of these Gaussian priors are experimentally chosen. 

Some results of sampling from the joint defined in Equation~\ref{eqn:jtmod} are shown in Figure~\ref{fig:toydata_modchk2}. These include comparison of the histogram representations of the marginals of 3 correlation parameters $S_{12}, S_{13}, S_{23}$, learnt in this implementation given the augmented data, with the marginal of the same correlation parameter learnt given training data ${\bf D}_T^{(S)}$. The figure also includes a comparison of the empirical and predicted marginals of $Z_{2}$ and $Z_{3}$.

\section{Some results given the white wine data set}
\noindent
Figure~\ref{fig:white_var} presents trace of the joint posterior of
the $G_{ij}$ and $\sigma_{ij}^2$ parameters, updated in the 2nd block
of each iteration of our MCMC chain run with the white wine data, at
the updated (partial) correlation matrix. The other panels of this
figure depict the histogram representation of the marginals of some of
the $\sigma_{ij}^2$ parameters learnt given the white wine data.

\begin{figure}[!ht]
\centering
\includegraphics[width=13cm,height=10cm]{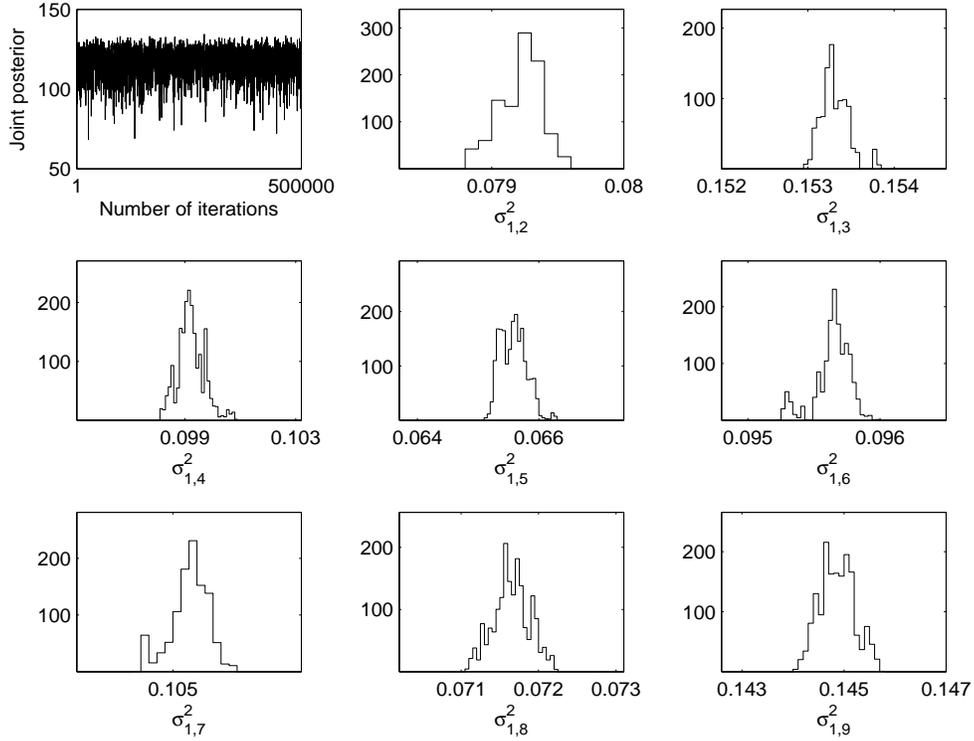}
\caption{{\it Top left panel:} trace of the joint posterior
  probability of the graph edge parameters $G_{ij}$ and the variance parameters $\sigma_{ij}^2$ that are the variances used in the likelihood function defined in Equation~2.11 of WC; these parameters are updated within the 2nd block update of our Metropolis-within-Gibbs inferential scheme, at the correlation matrix that is updated given the data ${\bf D}_S^{(white)}$ of Portuguese white wine samples. Here $i\neq j;\:i,j=1,\ldots,12$. {\it All other panels:} histogram representations of marginal posterior probability densities of some of the variance parameters learnt given the correlation matrix that is itself learnt, given data ${\bf D}_S^{(white)}$.
} 
\label{fig:white_var}
\end{figure}

\section{Comparing against previous work done with white wine data}
\label{sec:white}
\noindent   
The graphical model of the white wine data presented in
Fig~2 of WC is strongly corroborated by the simple empirical
correlations between pairs of different vino-chemical properties--this
correlation structure is apparent in the ``scatterplot of the
predictors'' included as part of the results of the ``Exploratory Data
Analysis'' reported in \\\url{https://onlinecourses.science.psu.edu/stat857/node/224} on the white wine data. They use the
full white wine data set ${\bf D}^{(white)}_{orig}$, to construct a
matrix of scatterplots of $X_i$ against $X_j$, where $i\neq
j;\;i,j=1,\ldots,11$. It is to be noted that in the data analysis reported in \url{https://onlinecourses.science.psu.edu/stat857/node/224}, the matrix of scatterplots of pairs of variables $i$ and $j$ was included, where this set of variables excluded the last column of the white wine data--the column that informs us of the assessed ``quality'' of the wine. 

When we compare our learnt graphical model with the results of this
reported ``Exploratory Data Analysis'', we remind ourselves that
partial correlation (that drives the probability of the edge between
the $i$-th and $j$-th nodes), is often smaller than the correlation
between the $i$-th and $j$-th variables, computed before the effect of
a third variable has been removed \ctp{sheskin}. If this is the
case, then an edge between nodes $i$ and $j$ in the learnt graphical
model, is indicative of a high correlation between the $i$-th and
$j$-th variables in the data. However, in the presence of a suppressor
variable (that may share a high correlation with the $i$-th variable,
but low correlation with the $j$-th), the absolute value of the
partial correlation parameter can be enhanced to exceed that of the
correlation parameter. In such a situation, the edge between the nodes
$i$ and $j$ in our learnt graphical model may show up (within our
defined 95$\%$ HPD credible region on edge probabilities, i.e. at
probability higher than 0.05), though the empirical correlation
between these variables is computed as low \ctp{sheskin}. So, to
summarise, if the empirical correlation between two variables reported
for a data set is high, our learnt graphical model should include an
edge between the two nodes. But the presence of an edge between pair
of nodes is not necessarily an indication of high empirical
correlation between a pair of variables--as in cases where suppressor
variables are involved. Guessing the effect of such suppressor
variables via an examination of the scatterplots is difficult in this
multivariate situation. Lastly, it is appreciated that empirical trends are
only indicators as to the matrix-Normal density-based model of the learnt correlation structure (and the graphical model learnt thereby) given the data at hand. 

\section{Results of learning given the red wine data set}
\noindent
Figure~\ref{fig:red_corr} presents histogram representations of marginal
posterior probability densities of some partial correlation parameters learnt
given the standardised red wine data; the trace of the joint posterior of all
the partial correlation parameters is also included. Figure~\ref{fig:red_corr}
on the other hand presents the marginals of some of the variance parameters.

\begin{figure}[!hb]
\centering
\includegraphics[width=13cm,height=8cm]{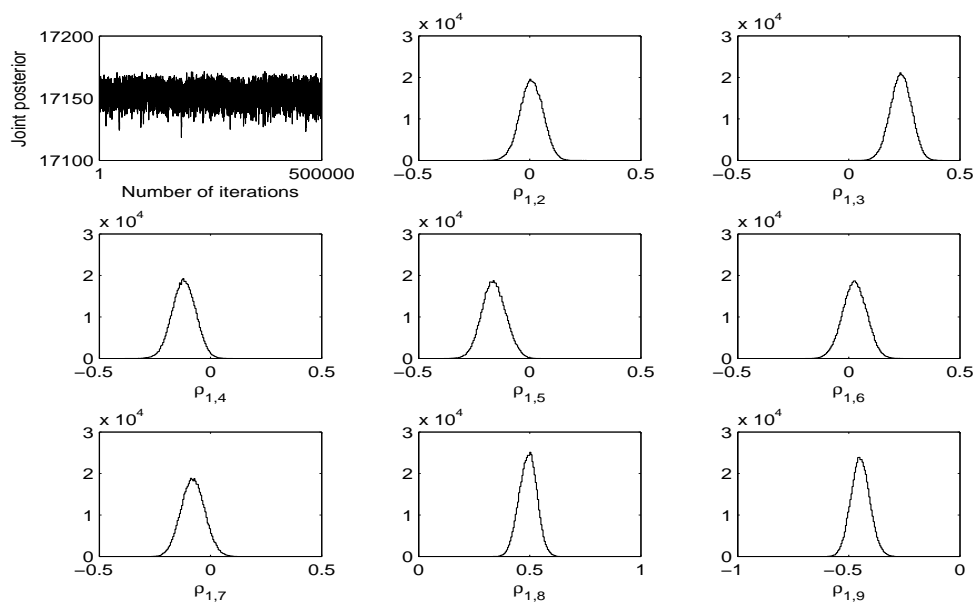}
\caption{The marginal posterior of some of the partial correlation parameters
$\rho_{ij}$ computed using the elements of the correlation matrix $\bSigma^{(red)}_S$ that
is updated in the first block of our MCMC chain, run with the red wine data ${\bf D}_S^{(red)}$ of Portuguese red wine samples; $i\neq j;\: i,j=1,\ldots,p=12$. The top left
hand panel of this figure presents the trace of the joint posterior
probability density of the elements of the upper triangle of
$\bSigma^{(red)}_S$.
} 
\label{fig:red_corr}
\end{figure}

\begin{figure}[!t]
\centering
\includegraphics[width=13cm,height=8cm]{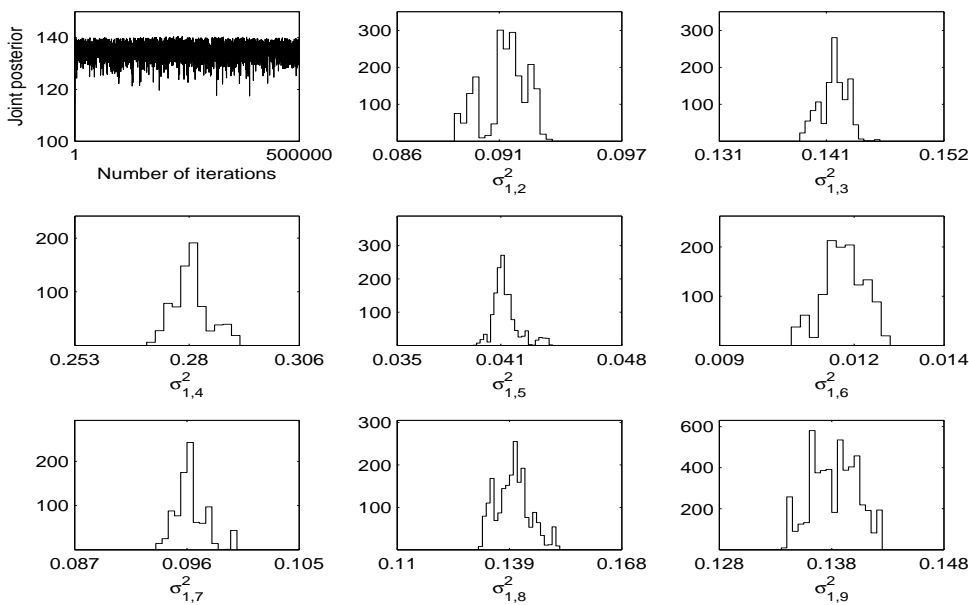}
\caption{
The
upper panel of this figure presents the trace of the joint
posterior probability of the $G_{ij}$ parameters and the variance parameters
$\sigma_{ij}^2$ (of the Normal likelihood) used in this second block
update of our MCMC chain, run with the red wine data ${\bf D}_S^{(red)}$ of Portuguese red wine samples; $i\neq j;\: i,j=1,\ldots,p=12$. The marginal of some of the variance parameters are also shown in the other panels of this figure.
} 
\label{fig:red_var}
\end{figure}


\section{Comparing our learnt results against empirical and regression analysis of red-wine data}
\label{sec:intro}
\noindent
The data on 1599 samples of Portuguese red wines is discussed by \ctn{CorCer09} and
considered in the main paper (Section~4.2). The between-columns correlation
structure and graphical model of this data are reported in this section. These results are reviewed in light of independent data analysis of the red wine data that we undertook. The original red wine data is ${\bf D}_{orig}^{(red)}$, of which ${\bf D}_S^{(red)}$ is a standardised subset. The dataset has 12 columns,
that contain information
on vino-chemical attributes of the sampled wines; these properties are
assigned the following names: ``fixed acidity'' ($X_1$), ``volatile acidity''
($X_2$), ``citric acid'' ($X_3$), ``residual sugar'' ($X_4$), ``chlorides''
($X_5$), ``free sulphur dioxide'' ($X_6$), ``total sulphur dioxide'' ($X_7$),
``density'' ($X_8$), ``pH'' ($X_9$), ``sulphates'' ($X_{10}$), ``alcohol''
($X_{11}$); the 12-th column is the assessed ``quality'' ($X_{12}$) of a wine in the sample. The standardised version of variable $X_i$ is $Z_i$, $i=1,\ldots,12$.

A matrix of scatterplots of $X_j$ against $X_i$ is shown in
Figure~\ref{fig:scatter}, for $i=1,\ldots,11$. These scatterplots visually
indicate moderate correlations between the following pairs of variables: fixed
acidity-citric acid, fixed acidity-density, fixed acidity-pH, volatile
acidity-citric acid, free sulphur dioxide-total sulphur dioxide,
density-alcohol. All these variables share an edge at probability $\geq 0.05$
in our learnt graphical model of data ${\bf D}_S^{(red)}$ (Figure~3 of main
paper). We note that all moderately correlated variable pairs, as represented
in these scatterplots, are joined by edges in our learnt graphical model of
the red wine data--as is to be expected if the learning of the graphical model
is correct. Such pairs include fixed acidity-citric acid, fixed
acidity-density, fixed acidity-pH, volatile acidity-citric acid, free sulphur
dioxide-total sulphur dioxide, density-alcohol. However, an edge may exist
between a pair of variables even when the apparent empirical correlation
between these variables is low (see Section~\ref{sec:white}); this owes to the
effect of other variables. 
However, an edge may exist between a pair of variables even
when the apparent empirical correlation between these variables is low (see
Section~\ref{sec:white}), owing to the effect of other variables. 
Noticing such edges from the residual-sugar variable, 
we undertake a regression analysis (ordinary least squares) with
residual-sugar regressed against the other remaining 10 vino-chemical
variables. The MATLAB output of that analysis is included in
Figure~\ref{fig:sugar}. The analysis indicates that the covariates with
maximal (near-equal) effect on residual-sugar, are density and alcohol;
residual-sugar is learnt to enjoy an edge with both density ($Z_7$) and
alcohol ($Z_{10}$) in our learnt graphical model of the red wine data (Figure~3 of WC).

We also undertook a separate ordinary least squares analysis with the response variable quality, regressed against the vino-chemical variables as the covariates. The MATLAB output of this regression analysis in in Figure~\ref{fig:quality}. We notice that the strongest (and nearly-equal) effect on quality is from the variables volatile-acidity and alcohol--the very two variables that share an edge at probability $\geq 0.05$ with quality, in our learnt graphical model of the red wine data.

\section{Cholesky Factorisation and Matrix Inversion by Forward Substitution}


Let a $p\times  p$-square positive-definite (correlation) matrix be $\bSigma_{C}^{(S)}=\bL_{C}^{(S)} (\bL_{C}^{(S)})^T$. 
The Cholesky factorisation of $\bSigma_{C}^{(S)}=[s_{ij}]$ into its unique square root
$\bL_{C}^{(S)}=[l_{ij}]$ can be shown to be defined by the following scheme:

\begin{eqnarray}
l_{11} &=& \sqrt{s_{11}},\nonumber \\
l_{i1} &=& \displaystyle{\frac{s_{i1}}{l_{11}}},\quad{i=1,\ldots,p},\nonumber \\
l_{ij} &=& \displaystyle{\frac{\sqrt{s_{ij} - \sum\limits_{k=1}^{j-1} l_{ik} l_{kj}}}{l_{jj}}}\quad{j=1,\ldots,i-1;\:i=1,\ldots,p},\nonumber \\
l_{ii} &=& \displaystyle{\sqrt{s_{ii} - \sum\limits_{k=1}^{i-1} l_{ik}^2}}\quad{i=2,\ldots,p},\nonumber \\
\end{eqnarray}
while forward substitution seeks $\bL_C^{-1}$ s.t. $\bL_C \bL_C^{-1}=\bI$, where $\bI$ is the $pXp$-dimensional identity matrix. Then the scheme for forward substitution is the following:
\begin{eqnarray}
m_{11} &=& \displaystyle{\frac{1}{l_{11}}},\nonumber \\
l_{i1} &=& \displaystyle{\frac{s_{i1}}{l_{11}}},\quad{i=1,\ldots,p},\nonumber \\
l_{ij} &=& \displaystyle{\frac{\sqrt{s_{ij} - \sum\limits_{k=1}^{j-1} l_{ik} l_{kj}}}{l_{jj}}}\quad{j=1,\ldots,i-1;\:i=1,\ldots,p},\nonumber \\
l_{ii} &=& \displaystyle{\sqrt{s_{ii} - \sum\limits_{k=1}^{i-1} l_{ik}^2}}\quad{i=2,\ldots,p},\nonumber \\
\end{eqnarray}

\begin{figure}[!ht]
\centering
\includegraphics[width=18cm,height=18cm]{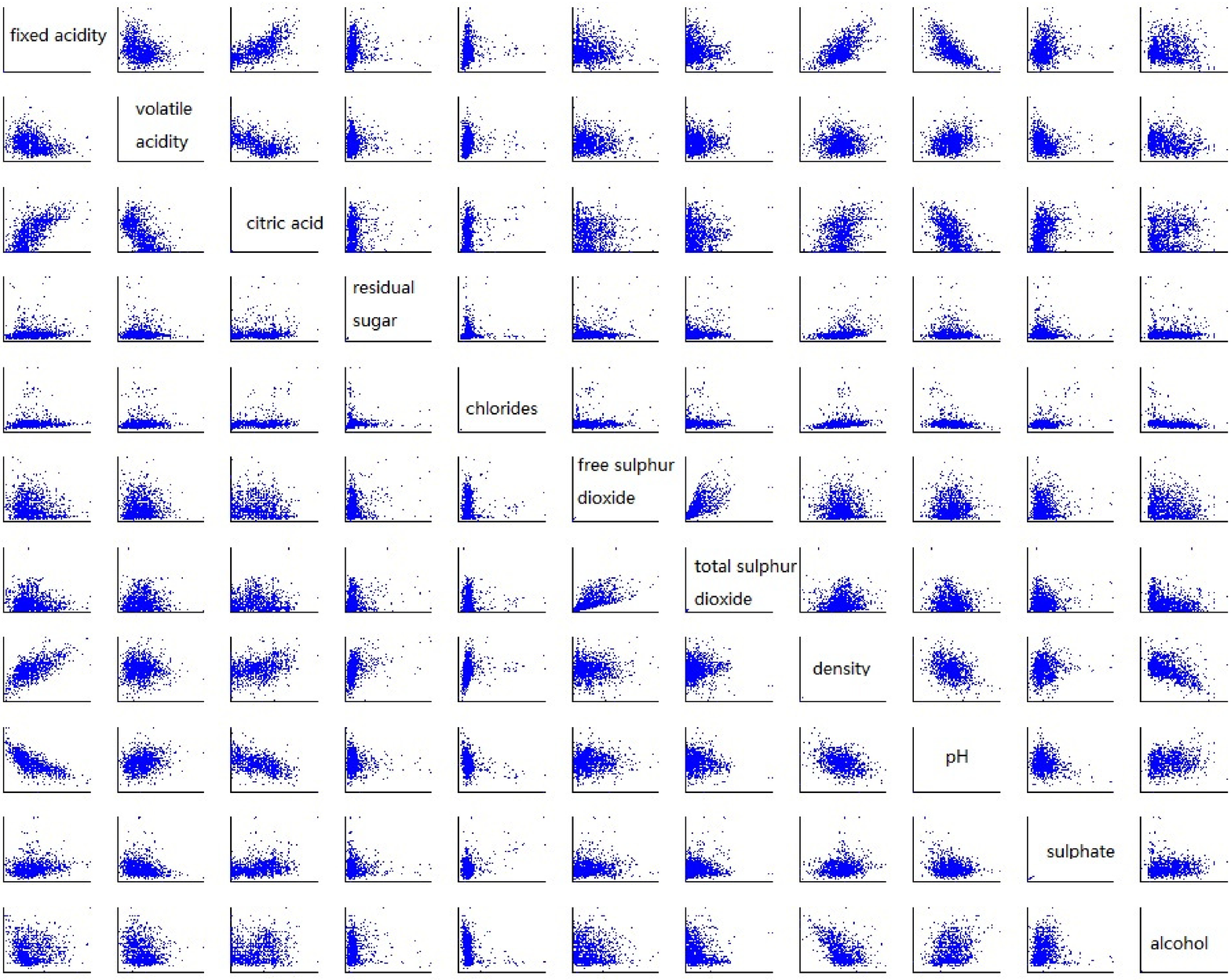}
\caption{Matrix of scatterplots of the 11 different vino-chemical variables $X_1,\ldots,X_{11}$ that form the first 11 columns of the red wine data ${\bf D}_{orig}^{(red)}$. Here $X_j$ is plotted against $X_i$, $i\neq j$, $i,j=1,\ldots,11$. The $X_i$ relevant to the $i$-th row is named in the diagonal element of the $i$-th row; $j$ increases from 1 to 11 from left to right.}
\label{fig:scatter}
\end{figure}

\begin{figure}[!ht]
\centering
\includegraphics[width=18cm,height=18cm]{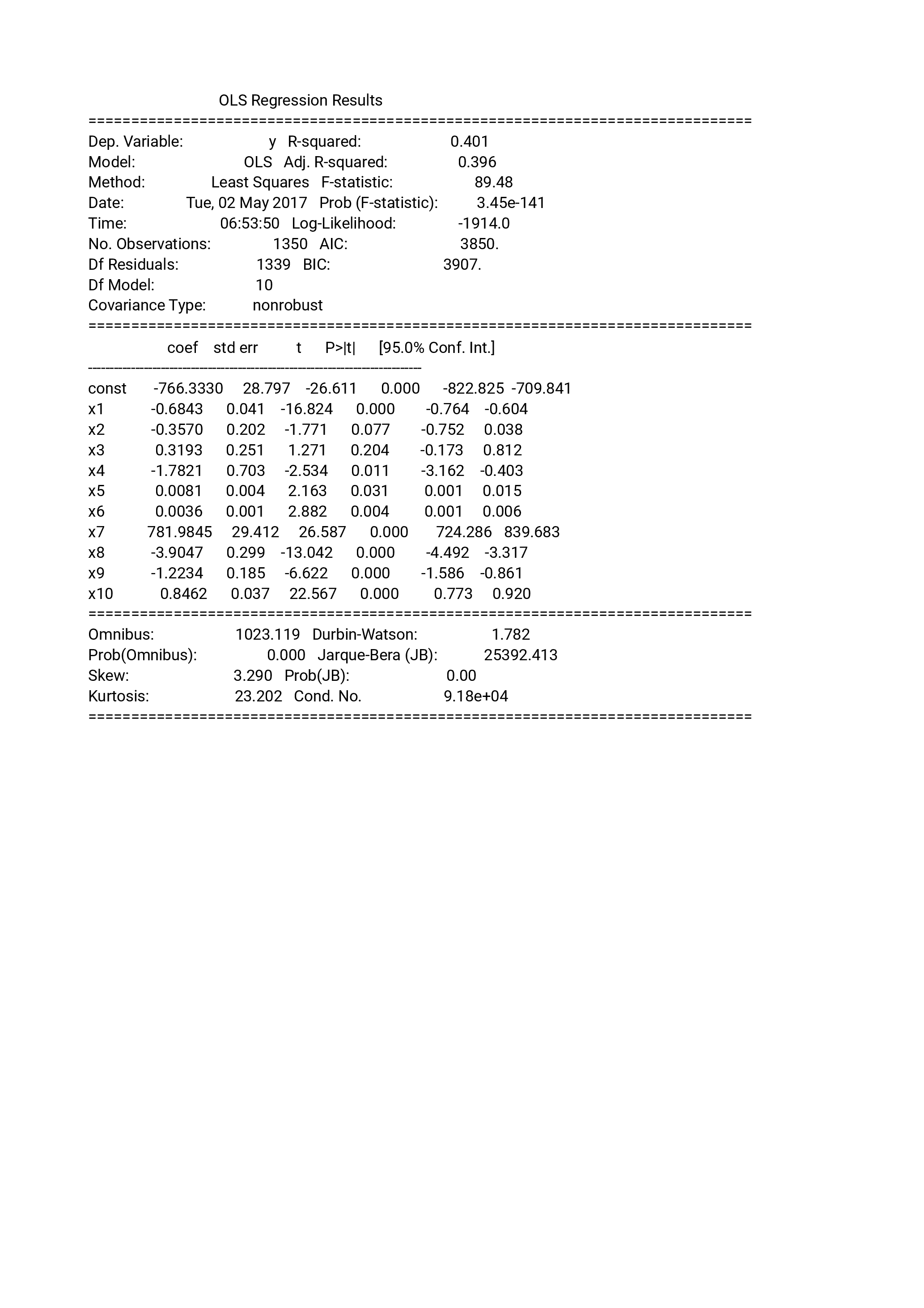}
\caption{Output of ordinary least square analysis of regressing residual sugar on the other 10 vino-chemical attributes in the red wine data.}
  \label{fig:sugar}
\end{figure}

\begin{figure}[!ht]
\centering
\includegraphics[width=18cm,height=18cm]{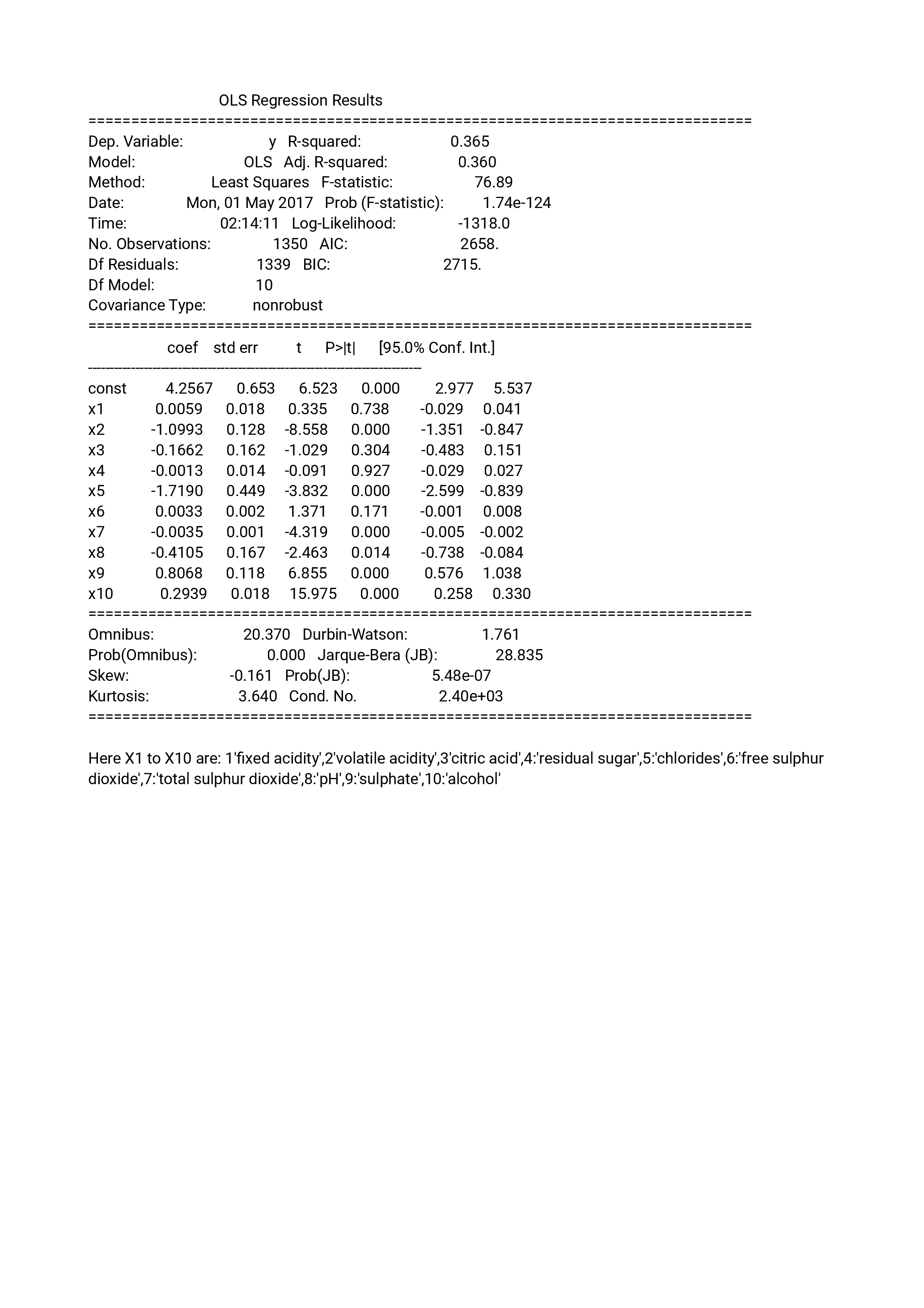}
\caption{Output of ordinary least square analysis of regressing quality on the vino-chemical attributes of red wine samples in the red wine data.}
  \label{fig:quality}
\end{figure}

\renewcommand\baselinestretch{1.}
\small
\bibliographystyle{ECA_jasa}

\end{document}